\font\twlgot =eufm10 scaled \magstep1
\font\egtgot =eufm8 \font\sevgot =eufm7
\font\twlmsb =msbm10 scaled \magstep1 \font\egtmsb =msbm8
\font\sevmsb =msbm7
\def\pgot{\fam\gotfam\twlgot}
\def\got{\protect\pgot}
\def\Bbb{\protect\pBbb}
\def\pBbb{\relax\ifmmode\expandafter\Bb\else\typeout{You cann't use
Bbb in text mode}\fi}
\def\Bb #1{{\fam\msbfam\relax#1}}
\def\op#1{\mathop{{\it\fam0} #1}\limits}
\newcommand{\di}{{\rm dim\,}}
\newcommand{\Ker}{{\rm Ker\,}}
\newcommand{\im}{{\rm Im\, }}
\newcommand{\hm}{{\rm Hom\,}}
\newcommand{\nm}[1]{\mid {#1}\mid}
\newcommand{\bll}{\bullet}
\newcommand{\beq}{\begin{equation}}
\newcommand{\eeq}{\end{equation}}
\newcommand{\ben}{\begin{eqnarray}}
\newcommand{\een}{\end{eqnarray}}
\newcommand{\be}{\begin{eqnarray*}}
\newcommand{\ee}{\end{eqnarray*}}
\newcommand{\bea}{\begin{eqalph}}
\newcommand{\eea}{\end{eqalph}}
\newcommand{\lto}{{\leftarrow}}
\newcommand{\gO}{{\got G}}
\newcommand{\cO}{{\cal O}}
\newcommand{\cA}{{\cal A}}
\newcommand{\cG}{{\got g}}
\newcommand{\gd}{{\got d}}
\newcommand{\gS}{{\got S}}
\newcommand{\gQ}{{\got Q}}
\newcommand{\gA}{{\got A}}
\newcommand{\nw}[1]{[{#1}]}
\newcommand{\cP}{{\cal P}}
\newcommand{\cR}{{\cal R}}
\newcommand{\cL}{{\cal L}}
\newcommand{\cV}{{\cal V}}
\newcommand{\cQ}{{\cal Q}}
\newcommand{\cE}{{\cal E}}
\newcommand{\cC}{{\cal C}}
\newcommand{\cK}{{\cal K}}
\newcommand{\bu}{{\bf u}}
\newcommand{\cS}{{\cal S}}
\newcommand{\bL}{{\bf L}}
\newcommand{\al}{\alpha}
\newcommand{\dl}{\delta}
\newcommand{\la}{\lambda}
\newcommand{\La}{\Lambda}
\newcommand{\f}{\phi}
\newcommand{\om}{\omega}
\newcommand{\m}{\mu}
\newcommand{\G}{\Gamma}
\newcommand{\e}{\epsilon}
\newcommand{\ve}{\varepsilon}
\newcommand{\th}{\theta}
\newcommand{\vr}{\varrho}
\newcommand{\up}{\upsilon}
\newcommand{\vt}{\vartheta}
\newcommand{\si}{\sigma}
\newcommand{\Si}{\Sigma}
\newcommand{\bb}{{\bf 1}}
\newcommand{\w}{\wedge}
\newcommand{\wt}{\widetilde}
\newcommand{\wh}{\widehat}
\newcommand{\ol}{\overline}
\newcommand{\dr}{\partial}
\newcommand{\rdr}{\stackrel{\leftarrow}{\dr}{}}
\newcommand{\llr}{\op\longleftarrow}
\newcommand{\ar}{\op\longrightarrow}
\newcommand{\ot}{\otimes}
\newcommand{\ap}{\approx}
\newenvironment{eqalph}{\stepcounter{equation}
\setcounter{equationa}{\value{equation}} \setcounter{equation}{0}

\begin{eqnarray}}{\end{eqnarray}\setcounter{equation}{\value{equationa}}}
\newcounter{example}
\newcounter{remark}
\newcounter{theorem}
\newcounter{proposition}
\newcounter{lemma}
\newcounter{corollary}
\newcounter{definition}
\newcounter{note}
\def\theremark{\arabic{remark}}
\def\thetheorem{\arabic{theorem}}
\def\thedefinition{\arabic{theorem}}
\newenvironment{proof}{{\bf Proof.}}{
\hfill $\Box$ }
\newenvironment{rem}{\refstepcounter{remark} \medskip {\bf Remark
\theremark.} }{ }
\newenvironment{theo}{\refstepcounter{theorem} \medskip{\bf
Theorem \thetheorem.}\it}{ }
\newenvironment{prop}{\refstepcounter{theorem} \medskip{\bf
Proposition \thetheorem.}\it}{ }
\newenvironment{note}{\refstepcounter{theorem} \medskip{\bf
Condition \thetheorem.}\it}{}
\newenvironment{lem}{\refstepcounter{theorem} \medskip{\bf Lemma
\thetheorem.}\it }{}
\newenvironment{cor}{\refstepcounter{theorem} \medskip{\bf
Corollary \thetheorem.} \it}{}
\newcommand{\mar}[1]{}
\begin{document}
\hbox{}

{\parindent=0pt

{\large \bf Grassmann-graded Lagrangian theory of even and odd variables}
\bigskip

{\it G. SARDANASHVILY}

\medskip

{\it Department of Theoretical Physics, Moscow State University,
117234 Moscow, Russia}

\bigskip

\bigskip

\begin{small}

{\bf Abstract.} Graded Lagrangian formalism in terms of a
Grassmann-graded variational bicomplex on graded manifolds is
developed in a very general setting. This formalism provides the
comprehensive description of reducible degenerate Lagrangian
systems, characterized by hierarchies of non-trivial higher-order
Noether identities and gauge symmetries. This is a general case of
classical field theory and Lagrangian non-relativistic mechanics.

\end{small}

}

\section{Introduction}

Conventional Lagrangian formalism on fibre bundles $Y\to X$ over a
smooth manifold $X$ is formulated in algebraic terms of a
variational bicomplex of exterior forms on jet manifolds of
sections of $Y\to X$
\cite{ander,bau,jmp,cmp04,book09,olv,tul,tak2}. The cohomology of
this bicomplex provides the global first variational formula for
Lagrangians and Euler--Lagrange operators, without appealing to
the calculus of variations. For instance, this is the case of
classical field theory if $\di X>1$ and non-autonomous mechanics
if $X=\Bbb R$ \cite{book09,book10,sard08}.

However, this formalism is not sufficient in order to describe
reducible degenerate Lagrangian systems whose degeneracy is
characterized by a hierarchy of higher order Noether identities.
They constitute the Kozul--Tate chain complex whose cycles are
Grassmann-graded elements of certain graded manifolds
\cite{lmp08,jmp05a,book09}. Moreover, many field models also deal
with Grassmann-graded fields, e.g., fermion fields, antifields and
ghosts \cite{book09,gom,sard08}.

These facts motivate us to develop graded Lagrangian formalism of
even and odd variables \cite{lmp08,cmp04,book09,ijgmmp07}.

Different geometric models of odd variables are described either
on graded manifolds or supermanifolds. Both graded manifolds and
supermanifolds are phrased in terms of sheaves of graded
commutative algebras \cite{bart,book09}. However, graded manifolds
are characterized by sheaves on smooth manifolds, while
supermanifolds are constructed by gluing of sheaves on supervector
spaces. Treating odd variables on a smooth manifold $X$, we follow
the Serre--Swan theorem generalized to graded manifolds (Theorem
\ref{vv0}). It states that, if a graded commutative
$C^\infty(X)$-ring is generated by a projective
$C^\infty(X)$-module of finite rank, it is isomorphic to a ring of
graded functions on a graded manifold whose body is $X$. In
accordance with this theorem, we describe odd variables in terms
of graded manifolds \cite{lmp08,cmp04,book09,ijgmmp07}.

We consider a generic Lagrangian theory of even and odd variables
on an $n$-dimensional smooth real manifold $X$. It is phrased in
terms of the Grassmann-graded variational bicomplex (\ref{7})
\cite{barn,jmp05a,lmp08,cmp04,book09,ijgmmp07}. Graded Lagrangians
$L$ and Euler--Lagrange operators $\dl L$ are defined as elements
of terms $\cS^{0,n}_\infty[F;Y]$ and $\vr(\cS^{1,n}_\infty[F;Y])$
of this bicomplex, respectively. Cohomology of the
Grassmann-graded variational bicomplex (\ref{7}) (Theorems
\ref{v11} -- \ref{v11'}) defines a class of variationally trivial
graded Lagrangians (Theorem \ref{cmp26}) and results in the global
decomposition (\ref{g99}) of $dL$ (Theorem \ref{g103}), the first
variational formula (\ref{g107}) and the first Noether Theorem
\ref{j45}.

A problem is that any Euler--Lagrange operator satisfies Noether
identities, which therefore must be separated into the trivial and
non-trivial ones. These Noether identities obey first-stage
Noether identities, which in turn are subject to the second-stage
ones, and so on. Thus, there is a hierarchy of higher-stage
Noether identities. In accordance with general analysis of Noether
identities of differential operators \cite{oper}, if certain
conditions hold, one can associate to a graded Lagrangian system
the exact antifield Koszul--Tate complex (\ref{v94}) possessing
the boundary operator (\ref{v92}) whose nilpotentness is
equivalent to all non-trivial Noether and higher-stage Noether
identities \cite{jmp05a,lmp08,jmp09}.

It should be noted that the notion of higher-stage Noether
identities has come from that of reducible constraints. The
Koszul--Tate complex of Noether identities has been invented
similarly to that of constraints under the condition that Noether
identities are locally separated into independent and dependent
ones \cite{barn,fisch}. This condition is relevant for
constraints, defined by a finite set of functions which the
inverse mapping theorem is applied to. However, Noether identities
unlike constraints are differential equations. They are given by
an infinite set of functions on a Fr\'echet manifold of infinite
order jets where the inverse mapping theorem fails to be valid.
Therefore, the regularity condition for the Koszul--Tate complex
of constraints is replaced with homology regularity Condition
\ref{v155} in order to construct the Koszul--Tate complex
(\ref{v94}) of Noether identities.

The second Noether theorems (Theorems \ref{w35}, \ref{825} and
\ref{826}) is formulated in homology terms, and it associates to
this Koszul--Tate complex the cochain sequence of ghosts
(\ref{w108}) with the ascent operator (\ref{w108'}) whose
components are non-trivial gauge and higher-stage gauge symmetries
of Lagrangian theory.

\section{Variational bicomplex on fibre bundles}

Given a smooth fibre bundle $Y\to X$, the jet manifolds $J^rY$ of
its sections provide the conventional language of theory of
differential equations and differential operators on $Y\to X$
\cite{bry,kras}. Though we restrict our consideration to finite
order Lagrangian formalism, it is conveniently  formulated on an
infinite order jet manifold $J^\infty Y$ of $Y$ in terms of the
above mentioned variational bicomplex of differential forms on
$J^\infty Y$. However, different variants of a variational
sequence of finite jet order on jet manifolds $J^rY$ also are
considered \cite{and,kru90,vitolo07}.

\begin{rem} Smooth manifolds throughout
are assumed to be Hausdorff, second-countable and,
consequently, paracompact and locally compact, countable at
infinity. It is essential that a paracompact smooth manifold
admits the partition of unity by smooth functions. Given a
manifold $X$, its tangent and cotangent bundles $TX$ and $T^*X$
are endowed with bundle coordinates $(x^\la,\dot x^\la)$ and
$(x^\la,\dot x_\la)$ with respect to holonomic frames
$\{\dr_\la\}$ and $\{dx^\la\}$, respectively. By
$\La=(\la_1...\la_k)$, $|\La|=k$, $\la+\La=(\la\la_1...\la_k)$,
are denoted symmetric multi-indices. Summation over a multi-index
$\La$ means separate summation over each index $\la_i$.
\end{rem}

Let $Y\to X$ be a fibre bundle provided with bundle coordinates
$(x^\la,y^i)$. An $r$-order jet manifold $J^rY$ of its sections is
provided with the adapted coordinates $(x^\la, y^i,
y^i_\La)_{|\La|\leq r}$. These jet manifolds form an inverse
system
\mar{j1}\beq
Y\op\longleftarrow^\pi J^1Y \longleftarrow \cdots J^{r-1}Y
\op\longleftarrow^{\pi^r_{r-1}} J^rY\longleftarrow\cdots,
\label{j1}
\eeq
where $\pi^r_{r-1}$, $r>0$, are affine bundles. Its projective
limit $J^\infty Y$ is defined as a minimal set such that there
exist surjections
\mar{5.74}\beq
\pi^\infty: J^\infty Y\to X, \quad \pi^\infty_0: J^\infty Y\to Y,
\quad \quad \pi^\infty_k: J^\infty Y\to J^kY, \label{5.74}
\eeq
obeying the relations $\pi^\infty_r=\pi^k_r\circ\pi^\infty_k$ for
all admissible $k$ and $r<k$. One can think of elements of
$J^\infty Y$ as being infinite order jets of sections of $Y\to X$.

A set $J^\infty Y$ is provided with the coarsest topology such
that the surjections $\pi^\infty_r$ (\ref{5.74}) are continuous.
Its base consists of inverse images of open subsets of $J^rY$,
$r=0,\ldots$, under the maps $\pi^\infty_r$. With this topology,
$J^\infty Y$ is a paracompact Fr\'echet (complete metrizable)
manifold \cite{cmp04,book09,tak2}. It is called the infinite order
jet manifold. One can show that surjections $\pi^\infty_r$ are
open maps admitting local sections, i.e., $J^\infty Y\to J^rY$ are
continuous bundles. A bundle coordinate atlas
$\{U_Y,(x^\la,y^i)\}$ of $Y\to X$ provides $J^\infty Y$ with a
manifold coordinate atlas
\mar{j3}\beq
\{(\pi^\infty_0)^{-1}(U_Y), (x^\la, y^i_\La)\}_{0\leq|\La|},
\qquad {y'}^i_{\la+\La}=\frac{\dr x^\m}{\dr x'^\la}d_\m y'^i_\La,
\qquad d_\la = \dr_\la + \op\sum_{0\leq|\La|}
y^i_{\la+\La}\dr_i^\La. \label{j3}
\eeq

\begin{theo} \label{17t1} \mar{17t1}
A fibre bundle $Y$ is a strong deformation retract of an infinite
order jet manifold $J^\infty Y$ \cite{ander,jmp,book09}.
\end{theo}

\begin{cor} \label{17c1} \mar{17c1}
By virtue of the well-known Vietoris--Begle theorem \cite{bred},
there is an isomorphism
\mar{j19'}\beq
H^*(J^\infty Y;\Bbb R)=H^*(Y;\Bbb R) \label{j19'}
\eeq
between the cohomology of $J^\infty Y$ with coefficients in the
constant sheaf $\Bbb R$ and that of $Y$.
\end{cor}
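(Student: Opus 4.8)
The plan is to obtain (\ref{j19'}) from the canonical surjection $\pi^\infty_0\col J^\infty Y\to Y$ of (\ref{5.74}) by applying the Vietoris--Begle mapping theorem \cite{bred}. That theorem says: if $f\col Z\to B$ is a continuous closed surjection of paracompact spaces whose point preimages are $\Bbb R$-acyclic (i.e.\ $H^0(f^{-1}(b);\Bbb R)=\Bbb R$ and $H^q(f^{-1}(b);\Bbb R)=0$ for $q>0$), then $f^*\col H^*(B;\Bbb R)\to H^*(Z;\Bbb R)$ is an isomorphism. Both $Y$ (a smooth manifold, Remark 1) and $J^\infty Y$ (a paracompact Fr\'echet manifold, by its construction above) are paracompact, and $\pi^\infty_0$ is onto by (\ref{5.74}); so I would reduce the whole matter to the acyclicity of the fibres of $\pi^\infty_0$, the inverse isomorphism being the one induced by the retraction of Theorem \ref{17t1}.

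For the fibre acyclicity, I would fix $y\in Y$ and observe that, by construction of the inverse system (\ref{j1}) and of its projective limit (\ref{5.74}), the fibre $(\pi^\infty_0)^{-1}(y)$ is the projective limit of the tower of fibres $(\pi^r_0)^{-1}(y)$, $r\geq 0$, in which every transition map $(\pi^r_0)^{-1}(y)\to(\pi^{r-1}_0)^{-1}(y)$ is the restriction over $y$ of the affine bundle $\pi^r_{r-1}$, hence a surjection of an affine space onto a point. Consequently $(\pi^\infty_0)^{-1}(y)$ is homeomorphic to a countable product of affine spaces, so it is contractible and in particular $\Bbb R$-acyclic. This places the hypotheses of the Vietoris--Begle theorem in force and yields (\ref{j19'}).

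The step I expect to be the real obstacle is not the fibre computation but the verification that the cohomological machinery applies on the infinite-dimensional manifold $J^\infty Y$: one may not invoke de Rham theory or any argument tacitly using local finite-dimensionality, and one must control the stalks of the higher direct images $R^q(\pi^\infty_0)_*\underline{\Bbb R}$, which is exactly where closedness (or tautness) of $\pi^\infty_0$ together with the fibrewise acyclicity is needed. If closedness of $\pi^\infty_0$ is felt to be delicate, I would fall back on Theorem \ref{17t1} directly: a strong deformation retraction is a homotopy equivalence, and on paracompact spaces cohomology with coefficients in the constant sheaf $\Bbb R$ coincides with \v{C}ech cohomology, which is a homotopy invariant, giving $H^*(J^\infty Y;\Bbb R)\cong H^*(Y;\Bbb R)$ at once.
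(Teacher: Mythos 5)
Your primary route has a genuine gap: the Vietoris--Begle mapping theorem in the form you quote requires the surjection to be a \emph{closed} map, and $\pi^\infty_0\colon J^\infty Y\to Y$ is not closed in general. Its fibres are non-compact (projective limits of affine spaces), and bundle projections with non-compact fibres typically fail to be closed: already for $\pi^1_0\colon J^1Y\to Y$ with $Y=\Bbb R^2\to\Bbb R$ and coordinates $(x,y,y_x)$ on $J^1Y$, the closed set $\{y\,y_x=1\}$ has the non-closed image $\{y\neq 0\}$, and pulling such a set back to $J^\infty Y$ shows that $\pi^\infty_0$ is not closed either. So after verifying paracompactness, surjectivity and fibre acyclicity you have not ``placed the hypotheses in force'': the one hypothesis you left unchecked is exactly the one that fails. (The fibre computation itself is fine: over a point the tower of affine spaces with surjective affine transition maps has projective limit homeomorphic to a countable product of affine spaces, hence contractible and $\Bbb R$-acyclic; but this does not rescue the closedness requirement.)

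Your fallback, on the other hand, is correct and is essentially the paper's own argument: Corollary \ref{17c1} is presented as a consequence of Theorem \ref{17t1}, i.e.\ of the strong deformation retraction of $J^\infty Y$ onto $Y$, combined with the homotopy invariance of cohomology with coefficients in the constant sheaf $\Bbb R$ on paracompact spaces. The citation of the Vietoris--Begle theorem in \cite{bred} is precisely how that invariance is obtained there: the Vietoris mapping theorem is applied to the projection $Z\times[0,1]\to Z$, whose fibre is compact, so closedness is automatic in that application. Thus the correct fix is to promote your fallback to the main argument (strong deformation retract $\Rightarrow$ homotopy equivalence $\Rightarrow$ isomorphism in \v{C}ech/sheaf cohomology with constant coefficients) and to drop the direct application of Vietoris--Begle to $\pi^\infty_0$.
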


The inverse sequence (\ref{j1}) of jet manifolds yields a direct
sequence
\mar{5.7}\beq
\cO^*(X)\op\longrightarrow^{\pi^*} \cO^*(Y)
\op\longrightarrow^{\pi^1_0{}^*} \cO_1^* \longrightarrow \cdots
\cO_{r-1}^*\op\longrightarrow^{\pi^r_{r-1}{}^*}
 \cO_r^* \longrightarrow\cdots \label{5.7}
\eeq
of differential graded algebras (henceforth DGAs) $\cO^*(X)$,
$\cO^*(Y)$, $\cO_r^*=\cO^*(J^rY)$ of exterior forms on $X$, $Y$
and jet manifolds $J^rY$, where $\pi^r_{r-1}{}^*$ are the
pull-back monomorphisms. Its direct limit $\cO^*_\infty$ consists
of all exterior forms on finite order jet manifolds modulo the
pull-back identification. It is a DGA which inherits operations of
an exterior differential $d$ and an exterior product $\w$ of DGAs
$\cO^*_r$.

\begin{theo} \label{j4} \mar{j4} The cohomology $H^*(\cO_\infty^*)$ of
the de Rham complex
\mar{5.13} \beq
0\longrightarrow \Bbb R\longrightarrow \cO^0_\infty
\op\longrightarrow^d\cO^1_\infty \op\longrightarrow^d \cdots
\label{5.13}
\eeq
of a DGA $\cO^*_\infty$ equals the de Rham cohomology $H^*_{\rm
DR}(Y)$ of a fibre bundle $Y$ \cite{and,bau,book09}.
\end{theo}

One can think of elements of $\cO_\infty^*$ as being differential
forms on an infinite order jet manifold $J^\infty Y$ as follows.
Let $\gO^*_r$ be a sheaf of germs of exterior forms on $J^rY$ and
$\ol\gO^*_r$ the canonical presheaf of local sections of
$\gO^*_r$. Since $\pi^r_{r-1}$ are open maps, there is a direct
sequence of presheaves
\be
\ol\gO^*_0 \op\longrightarrow^{\pi^1_0{}^*} \ol\gO_1^* \cdots
\op\longrightarrow^{\pi^r_{r-1}{}^*}
 \ol\gO_r^* \longrightarrow\cdots.
\ee
Its direct limit $\ol\gO^*_\infty$ is a presheaf of DGAs on
$J^\infty Y$. Let $\gQ^*_\infty$ be a sheaf of DGAs of germs of
$\ol\gO^*_\infty$ on $J^\infty Y$. The structure module
$\cQ^*_\infty=\G(\gQ^*_\infty)$ of global sections of
$\gQ^*_\infty$ is a DGA such that, given an element $\f\in
\cQ^*_\infty$ and a point $z\in J^\infty Y$, there exist an open
neighbourhood $U$ of $z$ and an exterior form $\f^{(k)}$ on some
finite order jet manifold $J^kY$ so that $\f|_U=
\pi^{\infty*}_k\f^{(k)}|_U$. Therefore, one can think of
$\cQ^*_\infty$ as being an algebra of locally exterior forms on
finite order jet manifolds. In particular, there is a monomorphism
$\cO^*_\infty \to\cQ^*_\infty$.

A DGA $\cO_\infty^*$ is split into a variational bicomplex
\cite{lmp00,jmp,cmp04,book09}. If $Y\to X$ is a contractible
bundle $\Bbb R^{n+p}\to \Bbb R^n$, a variational bicomplex is
exact \cite{olv,tul}. A problem is to determine cohomology of this
bicomplex in a general case. One also considers a variational
bicomplex of a DGA $\cQ^*_\infty$ \cite{ander,tak2}. It is
essential that a paracompact space $J^\infty Y$ admits a partition
of unity by elements of a ring $\cQ^0_\infty$ \cite{tak2}. This
fact enabled one to apply the abstract de Rham theorem (Theorem
\ref{+132}) in order to obtain cohomology of a variational
bicomplex $\cQ^*_\infty$ \cite{ander,tak2}. Then we have proved
that cohomology of a variational bicomplex $\cO_\infty^*$ equals
that of a variational bicomplex $\cQ^*_\infty$
\cite{lmp00,jmp,cmp04,book09,ijmms}.

\begin{rem} \label{polin} \mar{polin}
Let $Y\to X$ be a vector bundle. Its global section constitute a
projective $C^\infty(X)$-module of finite rank. The converse also
is true by virtue of the well-known Serre--Swan theorem, extended
to an arbitrary manifold $X$ \cite{book09,ren}. In this case, a
DGA $\cO^*_0$ of exterior forms on $Y$ is isomorphic to the
minimal Chevalley--Eilenberg differential calculus over  a real
commutative ring $C^\infty(Y)$ of smooth real functions on $Y$.
Jet bundles $J^rY\to X$ also are vector bundles. Then one can
consider a differential graded subalgebra $\cP^*_r\subset \cO^*_r$
of differential forms whose coefficients are polynomials in jet
coordinates $y^i_\La$, $0\leq |\La|\leq r$, on $J^r Y\to X$. In
particular, $\cP^0_r$ is a $C^\infty(X)$-ring of polynomials of
coordinates $y^i_\La$. One can associate to such a polynomial of
degree $m$ a section of a symmetric tensor product $\op\vee^m
(J^kY)^*$ of the dual of a jet bundle $J^kY\to X$, and {\it vice
versa}. A DGA $\cP^*_r$ is isomorphic to the minimal
Chevalley--Eilenberg differential calculus over a real ring
$\cP^0_r$. Accordingly, there exists a differential graded
subalgebra $\cP^*_\infty\subset \cO^*_\infty$ of differential
forms whose coefficients are polynomials in jet coordinates
$y^i_\La$, $0\leq |\La|$, of the continuous bundle $J^\infty Y\to
X$. This property is coordinate-independent due to the linear
transition functions (\ref{j3}). In particular, $\cP^0_\infty$ is
a ring of polynomials of coordinates $y^i_\La$, $0\leq |\La|$,
with coefficients in a ring $C^\infty(X)$. A DGA $\cP^*_\infty$ is
the direct system of the above mentioned DGAs $\cP^0_r$. It is
split into a variational bicomplex. Its cohomology can be obtained
\cite{lmp00,cmp04,book09,ijgmmp07}.
\end{rem}

We follow this example in order to construct a Grassmann-graded
variational bicomplex.

\section{Differential calculus over a graded commutative ring}

Let us start with the differential calculus over a graded
commutative ring (henceforth GCR) as a generalization of that over
a commutative ring.

By a Grassmann gradation (or, simply, a gradation if there is no
danger of confusion) throughout is meant a $\Bbb Z_2$-gradation.
Hereafter, the symbol $\nw .$ stands for a Grassmann parity.

An additive group $\cA$ is said to be graded if it is a product
$\cA=\cA_0\oplus \cA_1$ of two additive subgoups $\cA_0$ and
$\cA_1$ whose elements are called even and odd, respectively.

A an algebra $\cA$ is called graded if it is a graded additive
group so that
\be
[aa']=([a]+[a']){\rm mod}\,2, \qquad a\in \cA_{[a]}, \qquad a'\in
\cA_{[a']}.
\ee
Its even part $\cA_0$ is a subalgebra of $\cA$, while the odd one
$\cA_1$ is an $\cA$-module. If $\cA$ is a graded ring, then
$[1]=0$. A graded ring $\cA$ is called graded commutative if
$aa'=(-1)^{[a][a']}a'a$.

Given a graded algebra $\cA$, an $\cA$-module $Q$ is called graded
if it is a graded additive group such that
\be
[aq]=[qa]=([a]+[q]){\rm mod}\,2, \qquad a\in\cA,\quad q\in Q.
\ee
If $\cA$ is a GCR, a graded $\cA$-module $Q$ usually is assumed to
obey the condition $qa = (-1)^{[a][q]}aq$.

In particular, a graded $\Bbb R$-module $B=B_0\oplus B_1$ is
called the graded vector space. It is said to be
$(n,m)$-dimensional if $B_0=\Bbb R^n$, $B_1=\Bbb R^m$.

Let $\cK$ be a commutative ring. A graded algebra $\cA$ is said to
be a $\cK$-algebra if it is a $\cK$-module. For instance,  it is
called a real graded algebra if $\cK=\Bbb R$.

Let $\cA$ be a GCR. The following are standard constructions of
new graded $\cA$-modules from the old ones.

$\bullet$ A direct sum of graded modules is defined just as that
of modules over a commutative ring.

$\bullet$ A tensor product $P\ot Q$ of graded $\cA$-modules $P$
and $Q$ is an additive group generated by elements $p\ot q$, $p\in
P$, $q\in Q$, obeying relations
\be
&& (p+p')\ot q =p\ot q + p'\ot q, \qquad  p\ot(q+q')=p\ot q+p\ot q', \\
&&  ap\ot q=(-1)^{[p][a]}pa\ot q= (-1)^{[p][a]}p\ot aq,  \qquad a\in\cA.
\ee
In particular, a tensor algebra $\ot P$ of a graded $\cA$-module
$P$ is defined as that of a module over a commutative algebra. Its
quotient $\w P$ with respect to an ideal generated by elements
\be
p\ot p' + (-1)^{[p][p']}p'\ot p, \qquad p,p'\in P,
\ee
is a bigraded exterior algebra of a graded module $P$ with respect
to a graded exterior product
\be
p\w p' =- (-1)^{[p][p']}p'\w p.
\ee

$\bullet$ A morphism $\Phi:P\to Q$ of graded $\cA$-modules seen as
additive groups is said to be an even (resp. odd) morphism if
$\Phi$ preserves (resp. change) the Grassmann parity of all
graded-homogeneous elements of $P$ and if it obeys the relations
\be
\Phi(ap)=(-1)^{[\Phi][a]}a\Phi(p), \qquad p\in P, \qquad a\in\cA.
\ee
A morphism $\Phi:P\to Q$ of graded $\cA$-modules as additive
groups is called a graded $\cA$-module morphism if it is
represented by a sum of even and odd morphisms. A set
$\hm_\cA(P,Q)$ of graded morphisms of $P$ to $Q$ is naturally a
graded $\cA$-module. A graded $\cA$-module $P^*=\hm_\cA(P,\cA)$ is
called the dual of a graded $\cA$-module $P$.

A real graded algebra $\cG$ is called a Lie superalgebra if its
product $[.,.]$, called the Lie superbracket, obeys relations
\be
&& [\ve,\ve']=-(-1)^{[\ve][\ve']}[\ve',\ve],\\
&& (-1)^{[\ve][\ve'']}[\ve,[\ve',\ve'']]
+(-1)^{[\ve'][\ve]}[\ve',[\ve'',\ve]] +
(-1)^{[\ve''][\ve']}[\ve'',[\ve,\ve']] =0.
\ee
Obviously, an even part $\cG_0$ of a Lie superalgebra $\cG$ is a
Lie algebra. A graded vector space $P$ is called a $\cG$-module if
it is provided with an $\Bbb R$-bilinear map
\be
&& \cG\times P\ni (\ve,p)\to \ve p\in P, \qquad [\ve
p]=([\ve]+[p]){\rm mod}\,2,\\
&& [\ve,\ve']p=(\ve\circ\ve'-(-1)^{[\ve][\ve']}\ve'\circ\ve)p.
\ee

Let $\cA$ be a real GCR. Let $P$ and $Q$ be graded $\cA$-modules.
The real graded module $\hm_{\Bbb R}(P,Q)$ of $\Bbb R$-linear
graded homomorphisms $\Phi:P\to Q$ can be endowed with the two
graded $\cA$-module structures
\be
(a\Phi)(p)= a\Phi(p), \qquad (\Phi\bll a)(p) = \Phi (a p),\qquad
a\in \cA, \quad p\in P,
\ee
called $\cA$- and $\cA^\bll$-module structures, respectively. Let
us put
\be
\dl_a\Phi= a\Phi -(-1)^{[a][\Phi]}\Phi\bll a, \qquad a\in\cA.
\ee
An element $\Delta\in\hm_{\Bbb R}(P,Q)$ is said to be a $Q$-valued
graded differential operator of order $s$ on $P$ if
$\dl_{a_0}\circ\cdots\circ\dl_{a_s}\Delta=0$ for any tuple of
$s+1$ elements $a_0,\ldots,a_s$ of $\cA$.

In particular, zero order graded differential operators coincide
with graded $\cA$-module morphisms $P\to Q$. A first order graded
differential operator $\Delta$ satisfies the relation
\be
&& \dl_a\circ\dl_b\,\Delta(p)=
ab\Delta(p)- (-1)^{([b]+[\Delta])[a]}b\Delta(ap)-
(-1)^{[b][\Delta]}a\Delta(bp)+\\
&& \qquad (-1)^{[b][\Delta]+([\Delta]+[b])[a]}
=0, \qquad a,b\in\cA, \quad p\in P.
\ee

For instance, let $P=\cA$. A first order $Q$-valued graded
differential operator $\Delta$ on $\cA$ fulfils the condition
\be
\Delta(ab)= \Delta(a)b+ (-1)^{[a][\Delta]}a\Delta(b)
-(-1)^{([b]+[a])[\Delta]} ab \Delta(\bb), \qquad  a,b\in\cA.
\ee
It is called a $Q$-valued graded derivation of $\cA$ if
$\Delta(\bb)=0$, i.e., the graded Leibniz rule
\be
\Delta(ab) = \Delta(a)b + (-1)^{[a][\Delta]}a\Delta(b), \quad
a,b\in \cA,
\ee
holds. If $\dr$ is a graded derivation of $\cA$, then $a\dr$ is so
for any $a\in \cA$. Hence, graded derivations of $\cA$ constitute
a graded $\cA$-module $\gd(\cA,Q)$, called the graded derivation
module.

If $Q=\cA$, a graded derivation module $\gd\cA$ also is a real Lie
superalgebra with respect to a superbracket
\mar{ws14}\beq
[u,u']=u\circ u' - (-1)^{[u][u']}u'\circ u, \qquad u,u'\in \cA.
\label{ws14}
\eeq
Then one can consider the Chevalley--Eilenberg complex
$C^*[\gd\cA;\cA]$ where a real GCR $\cA$ is a regarded as an
$\gd\cA$-module \cite{fuks,book09}. It reads
\mar{ws85}\ben
&& 0\to \Bbb R\ar^{\rm in}\cA\ar^d C^1[\gd\cA;\cA]\ar^d \cdots
C^k[\gd\cA;\cA]\ar^d\cdots, \label{ws85}\\
&& C^k[\gd\cA;\cA]=\hm_{\Bbb R}(\op\w^k \gd\cA,\cA). \nonumber
\een
Let us bring homogeneous elements of $\op\w^k \gd\cA$ into the
form
\be
\ve_1\w\cdots\ve_r\w\e_{r+1}\w\cdots\w \e_k, \qquad
\ve_i\in(\gd\cA)_0, \quad \e_j\in(\gd\cA)_1.
\ee
Then an even coboundary operator $d$ of the complex (\ref{ws85})
is given by the expression
\mar{ws86}\ben
&& dc(\ve_1\w\cdots\w\ve_r\w\e_1\w\cdots\w\e_s)=
\label{ws86}\\
&&\op\sum_{i=1}^r (-1)^{i-1}\ve_i
c(\ve_1\w\cdots\wh\ve_i\cdots\w\ve_r\w\e_1\w\cdots\e_s)+
\nonumber \\
&& \op\sum_{j=1}^s (-1)^r\ve_i
c(\ve_1\w\cdots\w\ve_r\w\e_1\w\cdots\wh\e_j\cdots\w\e_s)
+\nonumber\\
&& \op\sum_{1\leq i<j\leq r} (-1)^{i+j}
c([\ve_i,\ve_j]\w\ve_1\w\cdots\wh\ve_i\cdots\wh\ve_j
\cdots\w\ve_r\w\e_1\w\cdots\w\e_s)+\nonumber\\
&&\op\sum_{1\leq i<j\leq s} c([\e_i,\e_j]\w\ve_1\w\cdots\w
\ve_r\w\e_1\w\cdots
\wh\e_i\cdots\wh\e_j\cdots\w\e_s)+\nonumber\\
&& \op\sum_{1\leq i<r,1\leq j\leq s} (-1)^{i+r+1}
c([\ve_i,\e_j]\w\ve_1\w\cdots\wh\ve_i\cdots\w\ve_r\w
\e_1\w\cdots\wh\e_j\cdots\w\e_s),\nonumber
\een
where the caret $\,\wh{}\,$ denotes omission.

It is easily justified that the complex (\ref{ws85}) contains a
subcomplex $\cO^*[\gd\cA]$ of $\cA$-linear graded morphisms. It is
provided with a structure of a bigraded $\cA$-algebra with respect
to a graded exterior product
\mar{ws103'}\beq
\f\w\f'(u_1,...,u_{r+s})= \op\sum_{i_1<\cdots<i_r;j_1<\cdots<j_s}
{\rm Sgn}^{i_1\cdots i_rj_1\cdots j_s}_{1\cdots r+s}
\f(u_{i_1},\ldots, u_{i_r}) \f'(u_{j_1},\ldots,u_{j_s}),
\label{ws103'}
\eeq
where $u_1,\ldots, u_{r+s}$ are graded-homogeneous elements of
$\gd\cA$ and
\be
u_1\w\cdots \w u_{r+s}= {\rm Sgn}^{i_1\cdots i_rj_1\cdots
j_s}_{1\cdots r+s} u_{i_1}\w\cdots\w u_{i_r}\w u_{j_1}\w\cdots\w
u_{j_s}.
\ee
The coboundary operator $d$ (\ref{ws86}) and the graded exterior
product $\w$ (\ref{ws103'}) bring $\cO^*[\gd\cA]$ into a
differential bigraded algebra (henceforth DBGA) whose elements
obey relations
\be
\f\w \f'=(-1)^{|\f||\f'|+[\f][\f']}\f'\w\f, \qquad d(\f\w\f')=
d\f\w\f' +(-1)^{|\f|}\f\w d\f'.
\ee
It is called the graded Chevalley--Eilenberg differential calculus
over a real GCR $\cA$.

In particular, we have
\mar{ws47}\beq
\cO^1[\gd\cA]=\hm_\cA(\gd\cA,\cA)=\gd\cA^*. \label{ws47}
\eeq
One can extend this duality relation to the graded interior
product of $u\in\gd\cA$ with any element $\f\in \cO^*[\gd\cA]$ by
the rules
\be
&& u\rfloor(bda) =(-1)^{[u][b]}bu(a),\qquad a,b \in\cA, \\
&& u\rfloor(\f\w\f')=
(u\rfloor\f)\w\f'+(-1)^{|\f|+[\f][u]}\f\w(u\rfloor\f').
\ee
As a consequence, any graded derivation $u\in\gd\cA$ of $\cA$
yields a derivation
\be
&& \bL_u\f= u\rfloor d\f + d(u\rfloor\f), \qquad \f\in\cO^*[\gd\cA], \qquad
u\in\gd\cA, \\
&& \bL_u(\f\w\f')=\bL_u(\f)\w\f' + (-1)^{[u][\f]}\f\w\bL_u(\f'),
\ee
called the graded Lie derivative of a DBGA $\cO^*[\gd\cA]$.

Note that, if $\cA$ is a commutative ring, the graded
Chevalley--Eilenberg differential calculus comes to the familiar
one.

The minimal graded Chevalley--Eilenberg differential calculus
$\cO^*\cA\subset \cO^*[\gd\cA]$  over a GCR $\cA$ consists of
monomials $a_0da_1\w\cdots\w da_k$, $a_i\in\cA$. The corresponding
complex
\mar{t100}\beq
0\to\Bbb R\ar \cA\ar^d\cO^1\cA\ar^d \cdots  \cO^k\cA\ar^d \cdots
\label{t100}
\eeq
is called the  bigraded de Rham complex of a real GCR $\cA$.

\section{Differential calculus on graded manifolds}

As was mentioned above, we follow Serre--Swan Theorem \ref{vv0}
below and consider a real GCR $\cA$ of graded functions on a
graded manifold. Then the minimal graded Chevalley--Eilenberg
differential calculus $\cO^*\cA$ over $\cA$ is a DBGA of graded
exterior forms on this graded manifold
\cite{cmp04,book09,ijgmmp07}.

A real GCR $\La$ is called the Grassmann algebra if it is a free
ring such that
\be
\La = \La_0 \oplus \La_1=(\Bbb R\oplus (\La_1)^2)\oplus \La_1,
\ee
i.e., a Grassmann algebra is generated by the unit element $\bb$
and its odd elements. Note that there is a different definition of
a Grassmann algebra \cite{jad}.

Hereafter, we restrict our consideration to Grassmann algebras
which are finite-dimensional vector spaces. In this case, there
exists a real vector space $V$ such that $\La=\w V$ is its
exterior algebra endowed with the Grassmann gradation
\mar{+66}\beq
\La_0=\Bbb R\op\bigoplus_{k=1} \op\w^{2k} V, \qquad
\La_1=\op\bigoplus_{k=1} \op\w^{2k-1} V. \label{+66}
\eeq
One calls $\di V$ the rank of a Grassmann algebra $\La$. Given a
basis $\{c^i\}$ for a vector space $V$, elements of the Grassmann
algebra $\La$ (\ref{+66}) take the form
\be
a=\op\sum_{k=0,1,\ldots} \op\sum_{(i_1\cdots i_k)}a_{i_1\cdots
i_k}c^{i_1}\cdots c^{i_k},
\ee
where the second sum runs through all the tuples $(i_1\cdots i_k)$
such that no two of them are permutations of each other.

A graded manifold of dimension $(n,m)$ is defined as a
local-ringed space $(Z,\gA)$ whose body $Z$ is an $n$-dimensional
smooth manifold and whose structure sheaf $\gA=\gA_0\oplus\gA_1$
is a sheaf of Grassmann algebras of rank $m$ such that
\cite{bart,book09}:

$\bullet$ there is an exact sequence of sheaves
\be
0\to \cR \to\gA \op\to^\si C^\infty_Z\to 0, \qquad
\cR=\gA_1+(\gA_1)^2,
\ee
where $C^\infty_Z$ is a sheaf of smooth real functions on $Z$;

$\bullet$ $\cR/\cR^2$ is a locally free sheaf of
$C^\infty_Z$-modules of finite rank (with respect to pointwise
operations), and a sheaf $\gA$ is locally isomorphic to an
exterior product $\w_{C^\infty_Z}(\cR/\cR^2)$.

Sections of a sheaf $\gA$ are called graded functions on a graded
manifold $(Z,\gA)$. They make up a real GCR $\gA(Z)$ which is a
$C^\infty(Z)$-ring, called the structure ring of $(Z,\gA)$. Let us
recall the well-known Batchelor theorem \cite{bart,book09}.

\begin{theo} \label{lmp1a} \mar{lmp1a}
Let $(Z,\gA)$ be a graded manifold. There exists a vector bundle
$E\to Z$ with an $m$-dimensional typical fibre $V$ such that the
structure sheaf $\gA$ of $(Z,\gA)$ is isomorphic to the structure
sheaf $\gA_E=S_{\w E^*}$ of germs of sections of an exterior
bundle $\w E^*$, whose typical fibre is a Grassmann algebra
$\La=\w V^*$.
\end{theo}

Though Batchelor's isomorphism in Theorem \ref{lmp1a} fails to be
canonical, we restrict our consideration to graded manifolds
$(Z,\gA_E)$, called simple graded manifolds modelled over a vector
bundle $E\to Z$. Accordingly, the structure ring $\cA_E$ of a
simple graded manifold $(Z,\gA_E)$ is a module $\cA_E=\w E^*(Z)$
of sections of an exterior bundle $\w E^*$.

The above-mentioned Serre--Swan theorem and Theorem \ref{lmp1a}
lead to the Serre--Swan theorem for graded manifolds
\cite{jmp05a,book09}.

\begin{theo} \label{vv0} \mar{vv0}
Let $Z$ be a smooth manifold. A $C^\infty(Z)$-GCR $\cA$ is
generated by some projective $C^\infty(Z)$-module of finite rank
iff it is isomorphic to the structure ring $\gA(Z)$ of some graded
manifold $(Z,\gA)$ with a body $Z$.
\end{theo}

Given a simple graded manifold $(Z,\gA_E)$, a trivialization chart
$(U; z^A,y^a)$ of a vector bundle $E\to Z$ yields its splitting
domain $(U; z^A,c^a)$. Graded functions on it are $\La$-valued
functions
\mar{z785}\beq
f=\op\sum_{k=0}^m \frac1{k!}f_{a_1\ldots a_k}(z)c^{a_1}\cdots
c^{a_k}, \label{z785}
\eeq
where $f_{a_1\cdots a_k}(z)$ are smooth functions on $U$ and
$\{c^a\}$ is a fibre basis for $E^*$. One calls $\{z^A,c^a\}$ the
local basis for a graded manifold $(Z,\gA_E)$ \cite{bart,book09}.
Transition functions $y'^a=\rho^a_b(z^A)y^b$ of bundle coordinates
on $E\to Z$ yield the corresponding transformation
$c'^a=\rho^a_b(z^A)c^b$ of the associated local basis for a graded
manifold $(Z,\gA_E)$ and the according coordinate transformation
law of graded functions (\ref{z785}).

Given a graded manifold $(Z,\gA)$, let $\gd\gA(Z)$ be a graded
derivation module of its real structure ring $\gA(Z)$. Its
elements are called graded vector fields on a graded manifold
$(Z,\gA)$. A key point is that graded vector fields $u\in\gd\cA_E$
on a simple graded manifold $(Z,\gA_E)$ can be represented by
sections of some vector bundle as follows \cite{cmp04,book09}. Due
to a canonical splitting $VE= E\times E$, the vertical tangent
bundle $VE$ of $E\to Z$ can be provided with fibre bases
$\{\dr_a\}$, which are the duals of bases $\{c^a\}$. Then graded
vector fields on a splitting domain $(U;z^A,c^a)$ of $(Z,\gA_E)$
read
\mar{hn14}\beq
u= u^A\dr_A + u^a\dr_a, \label{hn14}
\eeq
where $u^A, u^a$ are local $\La$-valued functions on $U$. In
particular,
\be
\dr_a\circ\dr_b =-\dr_b\circ\dr_a, \qquad
\dr_A\circ\dr_a=\dr_a\circ \dr_A.
\ee
The graded derivations (\ref{hn14}) act on graded functions
$f\in\gA_E(U)$ (\ref{z785}) by the rule
\mar{cmp50a}\beq
u(f_{a\ldots b}c^a\cdots c^b)=u^A\dr_A(f_{a\ldots b})c^a\cdots c^b
+u^k f_{a\ldots b}\dr_k\rfloor (c^a\cdots c^b). \label{cmp50a}
\eeq
This rule implies the corresponding coordinate transformation law
\be
u'^A =u^A, \qquad u'^a=\rho^a_ju^j +u^A\dr_A(\rho^a_j)c^j
\ee
of graded vector fields. It follows that graded vector fields
(\ref{hn14}) can be represented by sections of a vector bundle
$\cV_E$ which is locally isomorphic to a vector bundle $\w
E^*\op\ot_Z(E\op\oplus_Z TZ)$.

Given a real GCR $\cA_E$ of graded functions on a graded manifold
$(Z,\gA_E)$ and a real Lie superalgebra $\gd\cA_E$ of its graded
derivations, let us consider the graded Chevalley--Eilenberg
differential calculus
\mar{33f21}\beq
\cS^*[E;Z]=\cO^*[\gd\cA_E] \label{33f21}
\eeq
over $\cA_E$. Since a graded derivation module $\gd\cA_E$ is
isomorphic to a module of sections of a vector bundle $\cV_E\to
Z$, elements of $\cS^*[E;Z]$ are represented by sections of an
exterior bundle $\w\ol\cV_E$ of the $\w E^*$-dual $\ol\cV_E\to Z$
of $\cV_E$ which is locally isomorphic to a vector bundle $\w
E^*\op\ot_Z(E^*\op\oplus_Z T^*Z)$. With respect to the dual fibre
bases $\{dz^A\}$ for $T^*Z$ and $\{dc^b\}$ for $E^*$, sections of
$\ol\cV_E$ take the coordinate form
\be
\f=\f_A dz^A + \f_adc^a, \qquad \f'_a=\rho^{-1}{}_a^b\f_b, \qquad
\f'_A=\f_A +\rho^{-1}{}_a^b\dr_A(\rho^a_j)\f_bc^j,
\ee
$\f_A$, $f_a$ are local $\La$-valued functions on $U$. The duality
isomorphism $\cS^1[E;Z]=\gd\cA_E^*$ (\ref{ws47}) is given by a
graded interior product
\be
u\rfloor \f=u^A\f_A + (-1)^{\nw{\f_a}}u^a\f_a.
\ee
Elements of $\cS^*[E;Z]$ are called graded exterior forms on a
graded manifold $(Z,\gA_E)$.

Seen as an $\cA_E$-algebra, the DBGA $\cS^*[E;Z]$ (\ref{33f21}) on
a splitting domain $(U;z^A,c^a)$ is locally generated by graded
one-forms $dz^A$, $dc^i$ such that
\be
dz^A\w dc^i=-dc^i\w dz^A, \qquad dc^i\w dc^j= dc^j\w dc^i.
\ee
Accordingly, the coboundary operator $d$ (\ref{ws86}), called the
graded exterior differential, reads
\be
d\f= dz^A \w \dr_A\f +dc^a\w \dr_a\f,
\ee
where derivatives $\dr_A$, $\dr_a$ act on coefficients of graded
exterior forms by the formula (\ref{cmp50a}), and they are graded
commutative with graded forms $dz^A$, $dc^a$.

\begin{lem} \label{v62} \mar{v62}
The DBGA $\cS^*[E;Z]$ (\ref{33f21}) is a minimal differential
calculus over $\cA_E$ \cite{book09}.
\end{lem}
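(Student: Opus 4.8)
The plan is to show that the graded Chevalley--Eilenberg differential calculus $\cS^*[E;Z]=\cO^*[\gd\cA_E]$ coincides with the \emph{minimal} such calculus $\cO^*\cA_E$, i.e.\ that every graded exterior form is a finite $\cA_E$-linear combination of monomials $a_0\,da_1\w\cdots\w da_k$ with $a_i\in\cA_E$. By the general construction recalled before Lemma \ref{v62}, one always has $\cO^*\cA_E\subset\cS^*[E;Z]$, so only the reverse inclusion needs proof. The strategy is local-to-global: first prove the claim on a splitting domain, then patch.

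First I would work on a splitting domain $(U;z^A,c^a)$ of $(Z,\gA_E)$. There the module $\gd\cA_E$ is freely generated over $\cA_E(U)$ by the derivations $\{\dr_A,\dr_a\}$ (as in (\ref{hn14})), and the excerpt already records that $\cS^1[E;Z]$ is $\cA_E$-linearly generated by the one-forms $dz^A$ and $dc^a$, which are exactly the images under the graded exterior differential $d$ of the coordinate graded functions $z^A,c^a\in\cA_E(U)$. Hence every generator of $\cS^1[E;Z]$ over $U$ lies in $\cO^1\cA_E$. Since $\cS^*[E;Z]$ is generated as a bigraded $\cA_E$-algebra by $\cS^1[E;Z]$ under the graded exterior product (\ref{ws103'}), and since $d$ is a graded derivation for $\w$ satisfying $d(a_0\,da_1\w\cdots)=da_0\w da_1\w\cdots$, it follows that on $U$ every element of $\cS^*[E;Z]$ is a polynomial in the $dz^A,dc^a$ with coefficients in $\cA_E(U)$, hence a finite sum of monomials $a_0\,da_1\w\cdots\w da_k$. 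This gives $\cS^*[E;Z]|_U=\cO^*\cA_E|_U$.

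Next I would globalize. Here the relevant facts are that $Z$ is a paracompact smooth manifold, that $\cA_E=\w E^*(Z)$ is a $C^\infty(Z)$-ring, and that a partition of unity $\{\varphi_\iota\}\subset C^\infty(Z)\subset\cA_E$ subordinate to a cover by splitting domains is available. Given a global graded exterior form $\f\in\cS^k[E;Z]$, I would write $\f=\sum_\iota\varphi_\iota\f$, express each $\varphi_\iota\f$ on its splitting domain in the local monomial form just obtained, and observe that multiplication by $\varphi_\iota\in\cA_E$ and the relation $d\varphi_\iota\in\cO^1\cA_E$ keep everything inside $\cO^*\cA_E$; since $\varphi_\iota$ has support in its chart the resulting expression extends by zero to a global element of $\cO^k\cA_E$, and the (locally finite) sum over $\iota$ gives $\f\in\cO^k\cA_E$. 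Hence $\cS^*[E;Z]\subset\cO^*\cA_E$, and with the trivial inclusion this yields equality, which is exactly the assertion that $\cS^*[E;Z]$ is the minimal differential calculus over $\cA_E$.

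The main obstacle I anticipate is purely bookkeeping rather than conceptual: one must check that the local expansions on overlapping splitting domains are compatible under the transition laws $c'^a=\rho^a_b(z)c^b$ and the induced transformations of $dz^A,dc^a$ recorded in the excerpt, so that multiplying by a partition-of-unity element and summing genuinely produces a well-defined global graded form lying in $\cO^*\cA_E$; the linearity of the transition functions in the odd coordinates is what makes this go through. A secondary point to be careful about is the finiteness of the monomial expansion, which follows because $E$ has finite rank $m$, so the exterior algebra in the $dc^a$ (and in the finitely many $dz^A$ in a chart) is finite-dimensional and all sums are finite.
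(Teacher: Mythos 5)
The paper itself gives no proof of this lemma (it simply cites \cite{book09}), so there is nothing to compare line by line; your local-to-global strategy is certainly the natural one. Your local step is fine: on a splitting domain $(U;z^A,c^a)$ the module $\cS^1[E;Z]$ is freely generated by $dz^A,dc^a$, which are differentials of local graded functions, and the algebra is locally generated by these one-forms, so locally every graded form is a finite sum of monomials $a_0\,da_1\w\cdots\w da_k$.

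The globalization step, however, has a genuine gap as written. Membership in the minimal calculus $\cO^k\cA_E$ means being a \emph{finite} sum of monomials built from \emph{global} elements of $\cA_E$, whereas your decomposition $\f=\sum_\iota\varphi_\iota\f$ over a general locally finite partition of unity produces an infinite (only locally finite) sum of such monomials, and nothing in the definition of $\cO^k\cA_E$ lets you pass to that limit; the very point of the lemma is this global finiteness. The repair is standard and is in fact the same device the paper uses in Appendix~B: a smooth manifold $Z$ admits a \emph{finite} cover $\{U_i=\cup_j U_{ij}\}$ whose members are disjoint unions of splitting domains, so that a partition of unity subordinate to it is finite, and on each $U_i$ the (piecewise-defined, hence smooth) cutoff coordinate functions and odd generators give finitely many global elements of $\cA_E$ whose differentials generate $\cS^1[E;Z]|_{U_i}$. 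Equivalently, one can invoke the fact that $\cS^1[E;Z]$, being the module of sections of the finite-rank vector bundle $\ol\cV_E\to Z$, is a finitely generated projective $\cA_E$-module, and choose a finite global generating set of the form $\{b\,da\}$. Your closing remark on finiteness addresses only the rank of the fibre (finitely many $dc^a$, $dz^A$ per chart), not the number of charts, which is where the issue lies; with the finite-cover refinement your argument goes through, including the bookkeeping with cutoffs that you correctly anticipate.
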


The bigraded de Rham complex (\ref{t100}) of the minimal graded
Chevalley--Eilenberg differential calculus $\cS^*[E;Z]$ reads
\mar{+137}\beq
0\to \Bbb R\to \cA_E \ar^d \cS^1[E;Z]\ar^d\cdots
\cS^k[E;Z]\ar^d\cdots. \label{+137}
\eeq
Its cohomology $H^*(\cA_E)$  is called the de Rham cohomology of a
graded manifold $(Z,\gA_E)$. In particular, given a DGA $\cO^*(Z)$
of exterior forms on $Z$, there exists a canonical monomorphism
\mar{uut}\beq
\cO^*(Z)\to \cS^*[E;Z] \label{uut}
\eeq
and a body epimorphism $\cS^*[E;Z]\to \cO^*(Z)$ which are cochain
morphisms of the de Rham complex (\ref{+137}) and the de Rham
complex of $\cO^*(Z)$. Then one can show the following
\cite{book09,ijgmmp07}.

\begin{theo} \label{33t3} \mar{33t3}
The de Rham cohomology of a graded manifold $(Z,\gA_E)$ equals the
de Rham cohomology of its body $Z$.
\end{theo}

\begin{cor}
Any closed graded exterior form is decomposed into a sum $\f=\si
+d\xi$ where $\si$ is a closed exterior form on $Z$.
\end{cor}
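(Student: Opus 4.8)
The plan is to derive this immediately from Theorem \ref{33t3} together with the cochain morphism \eqref{uut}. Let $\f\in\cS^k[E;Z]$ be a closed graded exterior form, $d\f=0$. By Theorem \ref{33t3} the de Rham cohomology $H^*(\cA_E)$ of the complex \eqref{+137} coincides with the de Rham cohomology $H^*_{\rm DR}(Z)$ of the body, and the isomorphism is induced precisely by the monomorphism \eqref{uut} $\cO^*(Z)\to\cS^*[E;Z]$ (this is what "the de Rham cohomology equals that of $Z$" means, once one unwinds the proof of Theorem \ref{33t3}: the monomorphism \eqref{uut} is a quasi-isomorphism of the two de Rham complexes). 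Hence the cohomology class $[\f]\in H^k(\cA_E)$ is the image under \eqref{uut} of a class $[\si]\in H^k_{\rm DR}(Z)$, represented by a closed exterior form $\si\in\cO^k(Z)$.

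Concretely, $[\f]=[\si]$ in $H^k(\cA_E)$ means that the closed graded form $\f-\si$ (viewing $\si$ as an element of $\cS^k[E;Z]$ via \eqref{uut}) is exact in the complex \eqref{+137}, i.e.\ there is $\xi\in\cS^{k-1}[E;Z]$ with $\f-\si=d\xi$. This gives exactly the asserted decomposition $\f=\si+d\xi$ with $\si$ a closed exterior form on $Z$. For $k=0$ one reads this off directly: a closed graded function is a graded function $f$ with $df=0$, and since $\cS^*[E;Z]$ is the minimal Chevalley--Eilenberg calculus, $df = dz^A\w\dr_A f + dc^a\w\dr_a f$; vanishing forces $f$ to be a locally constant real number, so $\si=f\in\bR\subset\cO^0(Z)$ and $\xi=0$.

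The only genuine content is citing Theorem \ref{33t3} correctly, so the proof is short; the one point needing a word of care is that the isomorphism in Theorem \ref{33t3} is realized by the inclusion \eqref{uut} rather than merely being an abstract isomorphism of cohomology groups — but this is implicit in the statement that \eqref{uut} and the body epimorphism $\cS^*[E;Z]\to\cO^*(Z)$ are cochain morphisms of the respective de Rham complexes, the composite $\cO^*(Z)\to\cS^*[E;Z]\to\cO^*(Z)$ being the identity, so the induced map $H^*_{\rm DR}(Z)\to H^*(\cA_E)$ is a split monomorphism which Theorem \ref{33t3} upgrades to an isomorphism. I do not anticipate a real obstacle here; the corollary is essentially a restatement of Theorem \ref{33t3} in representative-level language.

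\begin{proof}
Let $\f\in\cS^k[E;Z]$ be closed, $d\f=0$. By Theorem \ref{33t3}, the de Rham cohomology $H^*(\cA_E)$ of the complex \eqref{+137} is isomorphic to the de Rham cohomology $H^*_{\rm DR}(Z)$ of the body $Z$, the isomorphism being induced by the canonical monomorphism \eqref{uut}, which is a cochain morphism of the de Rham complex of $\cO^*(Z)$ into \eqref{+137} (that this map induces the isomorphism, and not merely some abstract isomorphism, follows since the body epimorphism $\cS^*[E;Z]\to\cO^*(Z)$ is also a cochain morphism and its composite with \eqref{uut} is the identity on $\cO^*(Z)$). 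Hence the class $[\f]\in H^k(\cA_E)$ is the image of a class represented by a closed exterior form $\si\in\cO^k(Z)$. Therefore $\f-\si$, with $\si$ regarded as a graded exterior form via \eqref{uut}, is closed and cohomologically trivial in \eqref{+137}, so $\f-\si=d\xi$ for some $\xi\in\cS^{k-1}[E;Z]$. Thus $\f=\si+d\xi$ with $\si$ a closed exterior form on $Z$. (For $k=0$, $d\f=dz^A\w\dr_A\f+dc^a\w\dr_a\f=0$ forces $\f\in\Bbb R\subset\cO^0(Z)$, and one takes $\xi=0$.)
\end{proof}
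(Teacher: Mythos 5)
Your proof is correct and coincides with the paper's (implicit) argument: the corollary is simply the representative-level form of Theorem \ref{33t3}, whose isomorphism is induced by the cochain monomorphism (\ref{uut}), so every closed graded form differs from a closed exterior form on $Z$ by a form exact in the complex (\ref{+137}). The only caveat is that your auxiliary argument (a split monomorphism between abstractly isomorphic groups must be an isomorphism) is not conclusive when $H^*_{\rm DR}(Z)$ is infinite-dimensional, so, as you yourself indicate, the real justification is that the proof of Theorem \ref{33t3} in the cited references shows the monomorphism (\ref{uut}) itself to be a quasi-isomorphism.
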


\section{Grassmann-graded variational bicomplex}

Let $X$ be an $n$-dimensional smooth manifold and $Y\to X$ a
vector bundle over $X$. In Remark \ref{polin}, we mention a
polynomial variational bicomplex of a DGA $\cP^*_\infty$. The
latter is the direct limit of DGAs $\cP^*_r$ where $\cP^*_r$ is
the minimal Chevalley--Eilenberg differential calculus over a ring
$\cP^0_r$ of sections of symmetric tensor products of a vector jet
bundle $J^rY\to X$.

Let $(X,\gA_E)$ be a simple graded manifold modelled over a vector
bundle $E\to X$. Any jet bundle $J^rE\to X$ also is a vector
bundle. Then let $(X,\cA_{J^rE})$ denote a simple graded manifold
modelled over a vector bundle $J^rE\to X$. Its structure module
$\cA_{J^rE}$ is a real GCR of sections of an exterior bundle $\w
(J^rE)^*$ where $(J^rE)^*$ denotes the dual of $J^rE\to X$. Let
$S^*[J^rE,X]$ be the minimal Chevalley--Eilenberg differential
calculus over a real GCR $\cA_{J^rE}$. It is a BGDA of graded
exterior forms on a simple graded manifold $(X,\cA_{J^rE})$. There
is a direct system
\be
\cS^*[E;X]\ar \cS^*[J^1E;X]\ar\cdots \cS^*[J^rE;X]\ar\cdots
\ee
of BGDAs $S^*[J^rE,X]$. Its direct limit $\cS^*_\infty[E;X]$  is
the Grassmann-graded counterpart of the above mentioned DGA
$\cP^*_\infty$. A BDGA $\cS^*_\infty[E;X]$ is split into a
Grassmann-graded variational bicomplex which leads to graded
Lagrangian formalism of odd variables represented by generating
elements of the structure ring $\cA_E$ of a graded manifold
$(X,\gA_E)$ \cite{cmp04,book09,ijgmmp07}.

Note that the definition of jets of these odd variables as
elements of structure rings of graded manifolds $\cA_{J^rE}$
differs from that of jets of fibred graded manifolds
\cite{hern,mont06}, but it reproduces the heuristic notion of jets
of odd variables in Lagrangian field theory \cite{barn,bran01}.

In order to formulate graded Lagrangian theory both of even and
odd variables, let us consider a composite bundle $F\to Y\to X$
where $F\to Y$ is a vector bundle provided with bundle coordinates
$(x^\la, y^i, q^a)$. Jet manifolds $J^rF$ of $F\to X$ also are
vector bundles $J^rF\to J^rY$ coordinated by $(x^\la, y^i_\La,
q^a_\La)$, $0\leq |\La|\leq r$. Let $(J^rY,\gA_r)$ (where
$J^0Y=Y$, $\gA_0=\gA_F$) be a simple graded manifold modelled over
such a vector bundle. Its local basis is $(x^\la, y^i_\La,
c^a_\La)$, $0\leq|\La|\leq r$. Let
$\cS^*_r[F;Y]=\cS^*_r[J^rF;J^rY]$ denote a DBGA of graded exterior
forms on a graded manifold $(J^rY,\gA_r)$. In particular, there is
the cochain monomorphism (\ref{uut}):
\mar{34f3}\beq
\cO^*_r=\cO^*(J^rY)\to \cS^*_r[F;Y]. \label{34f3}
\eeq

A surjection $\pi^{r+1}_r:J^{r+1}Y\to J^rY$ yields an epimorphism
of graded manifolds
\be
(\pi^{r+1}_r,\wh \pi^{r+1}_r):(J^{r+1}Y,\gA_{r+1}) \to
(J^rY,\gA_r),
\ee
including a sheaf monomorphism $\wh
\pi^{r+1}_r:\pi_r^{r+1*}\gA_r\to \gA_{r+1}$, where
$\pi_r^{r+1*}\gA_r$ is the pull-back onto $J^{r+1}Y$ of a
continuous fibre bundle $\gA_r\to J^rY$. This sheaf monomorphism
induces a monomorphism of canonical presheaves $\ol \gA_r\to \ol
\gA_{r+1}$, which associates to each open subset $U\subset
J^{r+1}Y$ a ring of sections of $\gA_r$ over $\pi^{r+1}_r(U)$.
Accordingly, there is a monomorphism
\mar{34f1}\beq
\pi_r^{r+1*}:\cS^0_r[F;Y]\to \cS^0_{r+1}[F;Y] \label{34f1}
\eeq
of structure rings of graded functions on graded manifolds
$(J^rY,\gA_r)$ and $(J^{r+1}Y,\gA_{r+1})$. By virtue of Lemma
\ref{v62}, the differential calculus $\cS^*_r[F;Y]$ and
$\cS^*_{r+1}[F;Y]$ are minimal. Therefore, the monomorphism
(\ref{34f1}) yields a monomorphism of DBGAs
\mar{v4'}\beq
\pi_r^{r+1*}:\cS^*_r[F;Y]\to \cS^*_{r+1}[F;Y]. \label{v4'}
\eeq
As a consequence, we have a direct system of DBGAs
\mar{j2}\beq
\cS^*[F;Y]\ar^{\pi^*} \cS^*_1[F;Y]\ar\cdots \cS^*_{r-1}[F;Y]
\op\ar^{\pi^{r*}_{r-1}}\cS^*_r[F;Y]\ar\cdots. \label{j2}
\eeq
Its direct limit $\cS^*_\infty [F;Y]$ consists of all graded
exterior forms $\f\in \cS^*[F_r;J^rY]$ on graded manifolds
$(J^rY,\gA_r)$ modulo the monomorphisms (\ref{v4'}).

The cochain monomorphisms $\cO^*_r\to \cS^*_r[F;Y]$ (\ref{34f3})
provide a monomorphism of the direct system (\ref{5.7}) to the
direct system (\ref{j2}) and, consequently, a monomorphism
\mar{v7}\beq
\cO^*_\infty\to \cS^*_\infty[F;Y] \label{v7}
\eeq
of their direct limits. In particular, $\cS^*_\infty[F;Y]$ is an
$\cO^0_\infty$-algebra. Accordingly, the body epimorphisms
$\cS^*_r[F;Y]\to \cO^*_r$ yield an epimorphism of
$\cO^0_\infty$-algebras
\mar{v7'}\beq
\cS^*_\infty[F;Y]\to \cO^*_\infty.  \label{v7'}
\eeq
It is readily observed that the morphisms (\ref{v7}) and
(\ref{v7'}) are cochain morphisms between the de Rham complex
(\ref{5.13}) of a DGA $\cO^*_\infty$ and the de Rham complex
\mar{g110}\beq
0\to\Bbb R\ar \cS^0_\infty[F;Y]\ar^d \cS^1_\infty[F;Y]\cdots
\ar^d\cS^k_\infty[F;Y] \ar\cdots \label{g110}
\eeq
of a DBGA $\cS^*_\infty[F;Y]$. Moreover, the corresponding
homomorphisms of cohomology groups of these complexes are
isomorphisms as follows.

\begin{theo} \label{v9} \mar{v9} There is an isomorphism
\mar{v10'}\beq
H^*(\cS^*_\infty[F;Y])= H^*_{DR}(Y) \label{v10'}
\eeq
of the cohomology of the de Rham complex (\ref{g110}) to the de
Rham cohomology of $Y$.
\end{theo}

\begin{proof}
The complex (\ref{g110}) is the direct limit of the de Rham
complexes of DBGAs $\cS^*_r[F;Y]$. In accordance with the
well-known theorem \cite{book09,massey}, the direct limit of
cohomology groups of these complexes is the cohomology of the de
Rham complex (\ref{g110}). By virtue of Theorem \ref{33t3},
cohomology of the de Rham complex of $\cS^*_r[F;Y]$ equals the de
Rham cohomology of $J^rY$ and, consequently, that of $Y$, which is
the strong deformation retract of any jet manifold $J^rY$ because
$J^kY\to J^{k-1}Y$ are affine bundles. Hence, the isomorphism
(\ref{v10'}) holds.
\end{proof}

\begin{cor} \mar{34c1} \label{34c1}
Any closed graded form $\f\in \cS^*_\infty[F;Y]$ is decomposed
into the sum $\f=\si +d\xi$ where $\si$ is a closed exterior form
on $Y$.
\end{cor}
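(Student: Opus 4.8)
The plan is to deduce this at once from Theorem \ref{v9}, in exactly the way the Corollary to Theorem \ref{33t3} is deduced from that theorem. Given a closed graded form $\f\in\cS^*_\infty[F;Y]$, I would pass to its class $[\f]$ in the cohomology $H^*(\cS^*_\infty[F;Y])$ of the de Rham complex (\ref{g110}), use the isomorphism (\ref{v10'}) to represent $[\f]$ by the class of a closed exterior form $\si$ on $Y$, and conclude that $\f$ and (the image of) $\si$ differ by an exact form, i.e. $\f=\si+d\xi$ for some $\xi\in\cS^*_\infty[F;Y]$, which is the assertion. Here $\si$ is tacitly identified with its image under $\cO^*(Y)\to\cO^*_\infty\to\cS^*_\infty[F;Y]$, the same abuse of notation as in the Corollary to Theorem \ref{33t3}.

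The one point that must be made carefully is that the isomorphism (\ref{v10'}) is the one induced by the natural cochain map, namely the pull-back $\cO^*(Y)\to\cO^*_\infty$ followed by the monomorphism (\ref{v7}) $\cO^*_\infty\to\cS^*_\infty[F;Y]$. I would check this by composing two facts already available: (i) $\cO^*(Y)\to\cO^*_\infty$ is a quasi-isomorphism, which is Theorem \ref{j4} (equivalently, Corollary \ref{17c1} together with Theorem \ref{17t1}, since $Y$ is a strong deformation retract of $J^\infty Y$); and (ii) the monomorphism (\ref{v7}) is a quasi-isomorphism, which is precisely what the proof of Theorem \ref{v9} establishes, by passing to the direct limit of the de Rham complexes of the $\cS^*_r[F;Y]$ and applying Theorem \ref{33t3} on each simple graded manifold $(J^rY,\gA_r)$. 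Both maps being natural, their composite $\cO^*(Y)\to\cS^*_\infty[F;Y]$ realizes (\ref{v10'}), so a de Rham class of $Y$ is sent to the class of its image in $\cS^*_\infty[F;Y]$.

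I expect no genuine obstacle: the statement is a formal corollary of Theorem \ref{v9}. The only mildly delicate step is the naturality bookkeeping in the previous paragraph — confirming that the abstract isomorphism of Theorem \ref{v9} coincides with the one induced by the chain-level inclusion of exterior forms on $Y$. Once that is in place, choosing $\si\in\cO^*(Y)$ closed with $[\si]=[\f]$ and reading off $\f=\si+d\xi$ is immediate.
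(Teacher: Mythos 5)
Your proposal is correct and matches the paper's intent: the paper states this as an immediate corollary of Theorem \ref{v9} (with no separate argument), exactly as in the corollary following Theorem \ref{33t3}, the isomorphism (\ref{v10'}) being the one induced at the cochain level by the natural monomorphisms, so a closed $\f$ is cohomologous to the image of a closed form $\si$ on $Y$. Your extra care about naturality (that (\ref{v10'}) is realized by $\cO^*(Y)\to\cO^*_\infty\to\cS^*_\infty[F;Y]$, using Theorem \ref{j4} and the direct-limit argument with Theorem \ref{33t3}) is precisely the bookkeeping the paper leaves implicit.
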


One can think of  elements of $\cS^*_\infty[F;Y]$ as being graded
differential forms on an infinite order jet manifold $J^\infty Y$.
Indeed, let $\gS^*_r[F;Y]$ be a sheaf of DBGAs on $J^rY$ and
$\ol\gS^*_r[F;Y]$ its canonical presheaf. Then the above mentioned
presheaf monomorphisms $\ol \gA_r\to \ol \gA_{r+1}$ yield a direct
system of presheaves
\be
\ol\gS^*[F;Y]\ar \ol\gS^*_1[F;Y] \ar\cdots \ol\gS^*_r[F;Y]
\ar\cdots,
\ee
whose direct limit $\ol\gS_\infty^*[F;Y]$ is a presheaf of DBGAs
on an infinite order jet manifold $J^\infty Y$. Let
$\gQ^*_\infty[F;Y]$ be a sheaf of DBGAs of germs of a presheaf
$\ol\gS_\infty^*[F;Y]$. One can think of a pair $(J^\infty Y,
\gQ^0_\infty[F;Y])$ as being a graded Fr\'echet manifold, whose
body is an infinite order jet manifold $J^\infty Y$ and a
structure sheaf $\gQ^0_\infty[F;Y]$ is a sheaf of germs of graded
functions on graded manifolds $(J^rY,\gA_r)$. The structure module
$\cQ^*_\infty[F;Y]=\G(\gQ^*_\infty[F;Y])$ of sections of
$\gQ^*_\infty[F;Y]$ is a DBGA such that, given an element $\f\in
\cQ^*_\infty[F;Y]$ and a point $z\in J^\infty Y$, there exist an
open neighbourhood $U$ of $z$ and a graded exterior form
$\f^{(k)}$ on some finite order jet manifold $J^kY$ so that
$\f|_U= \pi^{\infty*}_k\f^{(k)}|_U$.

In particular, there is a monomorphism $\cS^*_\infty[F;Y]
\to\cQ^*_\infty[F;Y]$. Due to this monomorphism, one can restrict
$\cS^*_\infty[F;Y]$ to the coordinate chart (\ref{j3}) of
$J^\infty Y$ and can say that $\cS^*_\infty[F;Y]$ as an
$\cO^0_\infty$-algebra is locally generated by elements
\be
(c^a_\La,
dx^\la,\theta^a_\La=dc^a_\La-c^a_{\la+\La}dx^\la,\theta^i_\La=
dy^i_\La-y^i_{\la+\La}dx^\la), \qquad 0\leq |\La|,
\ee
where $c^a_\La$, $\theta^a_\La$ are odd and $dx^\la$,
$\theta^i_\La$ are even. We agree to call $(y^i,c^a)$ the local
generating basis for $\cS^*_\infty[F;Y]$. Let the collective
symbol $s^A$ stand for its elements. Accordingly, the notations
$s^A_\La$ of their jets and $\theta^A_\La=ds^A_\La-
s^A_{\la+\La}dx^\la$ of contact forms are introduced. For the sake
of simplicity, we further denote $[A]=[s^A]$.

A DBGA $\cS^*_\infty[F;Y]$ is decomposed into
$\cS^0_\infty[F;Y]$-modules $\cS^{k,r}_\infty[F;Y]$ of $k$-contact
and $r$-horizontal graded forms together with the corresponding
projections
\be
h_k:\cS^*_\infty[F;Y]\to \cS^{k,*}_\infty[F;Y], \qquad
h^m:\cS^*_\infty[F;Y]\to \cS^{*,m}_\infty[F;Y].
\ee
Accordingly, a graded exterior differential $d$ on
$\cS^*_\infty[F;Y]$ falls into the sum $d=d_V+d_H$ of a vertical
graded differential
\be
d_V \circ h^m=h^m\circ d\circ h^m, \qquad d_V(\f)=\theta^A_\La \w
\dr^\La_A\f, \qquad \f\in\cS^*_\infty[F;Y],
\ee
and a total graded differential
\be
&& d_H\circ h_k=h_k\circ d\circ h_k, \qquad d_H\circ h_0=h_0\circ d,
\qquad d_H(\f)=dx^\la\w d_\la(\f),\\
&& d_\la = \dr_\la + \op\sum_{0\leq|\La|} s^A_{\la+\La}\dr_A^\La.
\ee
These differentials obey the nilpotent relations
\be
d_H^2=0, \qquad d_V^2=0, \qquad d_Hd_V + d_Vd_H=0.
\ee

A DBGA $\cS^*_\infty[F;Y]$ also is provided with a graded
projection endomorphism
\be
&& \vr=\op\sum_{k>0} \frac1k\ol\vr\circ h_k\circ h^n:
\cS^{*>0,n}_\infty[F;Y]\to \cS^{*>0,n}_\infty[F;Y], \\
&& \ol\vr(\f)= \op\sum_{0\leq|\La|} (-1)^{\nm\La}\theta^A\w
[d_\La(\dr^\La_A\rfloor\f)], \qquad \f\in \cS^{>0,n}_\infty[F;Y],
\ee
such that $\vr\circ d_H=0$, and with a nilpotent graded
variational operator
\be
\dl=\vr\circ d: \cS^{*,n}_\infty[F;Y]\to \cS^{*+1,n}_\infty[F;Y].
\ee
These operators split a DBGA $\cS^{*,}_\infty[F;Y]$ into a
Grassmann-graded variational bicomplex
\mar{7}\beq
\begin{array}{ccccrlcrlcccrlcrl}
 & &  &  & & \vdots & & & \vdots  & & & & &
\vdots & &   & \vdots \\
& & & & _{d_V} & \put(0,-7){\vector(0,1){14}} & & _{d_V} &
\put(0,-7){\vector(0,1){14}} & &  & & _{d_V} &
\put(0,-7){\vector(0,1){14}}& & _{-\dl} & \put(0,-7){\vector(0,1){14}} \\
 &  & 0 & \to & &\cS^{1,0}_\infty[F;Y] &\op\to^{d_H} & &
\cS^{1,1}_\infty[F;Y] & \op\to^{d_H} & \cdots & & &
\cS^{1,n}_\infty[F;Y] &\op\to^\vr &  & \vr(\cS^{1,n}_\infty[F;Y])\to  0\\
& & & & _{d_V} &\put(0,-7){\vector(0,1){14}} & & _{d_V} &
\put(0,-7){\vector(0,1){14}} & & &  & _{d_V} &
\put(0,-7){\vector(0,1){14}}
 & & _{-\dl} & \put(0,-7){\vector(0,1){14}} \\
0 & \to & \Bbb R & \to & & \cS^0_\infty[F;Y] &\op\to^{d_H} & &
\cS^{0,1}_\infty[F;Y] & \op\to^{d_H} & \cdots & & &
\cS^{0,n}_\infty[F;Y] & \equiv &  & \cS^{0,n}_\infty[F;Y] \\
& & & & & \put(0,-7){\vector(0,1){14}} & &  &
\put(0,-7){\vector(0,1){14}} & & & &  &
\put(0,-7){\vector(0,1){14}} & &  & \\
0 & \to & \Bbb R & \to & & \cO^0(X) &\op\to^d & & \cO^1(X) &
\op\to^d & \cdots & & &
\cO^n(X) & \op\to^d & 0 &  \\
& & & & &\put(0,-5){\vector(0,1){10}} & & &
\put(0,-5){\vector(0,1){10}} & &  &  & &
\put(0,-5){\vector(0,1){10}} & &  & \\
& & & & &0 & &  & 0 & & & & &  0 & &  &
\end{array}
\label{7}
\eeq

We restrict our consideration to its short variational subcomplex
\mar{g111}\beq
 0\to \Bbb R\to \cS^0_\infty[F;Y]\ar^{d_H}\cS^{0,1}_\infty[F;Y]
\cdots \ar^{d_H} \cS^{0,n}_\infty[F;Y]\ar^\dl
\vr(\cS^{1,n}_\infty[F;Y]) \label{g111}
\eeq
and a subcomplex of one-contact graded forms
\mar{g112}\beq
 0\to \cS^{1,0}_\infty[F;Y]\ar^{d_H} \cS^{1,1}_\infty[F;Y]\cdots
\ar^{d_H}\cS^{1,n}_\infty[F;Y]\ar^\vr \vr(\cS^{1,n}_\infty[F;Y])
\to 0. \label{g112}
\eeq

\begin{theo} \label{v11} \mar{v11}
Cohomology of the complex (\ref{g111}) equals the de Rham
cohomology of $Y$.
\end{theo}

\begin{theo} \label{v11'} \mar{v11'}
The complex (\ref{g112}) is exact.
\end{theo}

These theorems are proved in Appendix B.

\section{Graded Lagrangian formalism}

Decomposed into the variational bicomplex, a DBGA
$\cS^*_\infty[F;Y]$ describes graded Lagrangian theory on a graded
manifold $(Y,\gA_F)$. Its graded Lagrangian is defined as an
element
\be
L=\cL\om\in \cS^{0,n}_\infty[F;Y], \qquad \om=dx^1\w\cdots\w dx^n,
\ee
of the graded variational complex (\ref{g111}). Accordingly, a
graded exterior form
\mar{0709'}\beq
\dl L= \theta^A\w \cE_A\om=\op\sum_{0\leq|\La|}
 (-1)^{|\La|}\theta^A\w d_\La (\dr^\La_A \cL)\om\in \vr(\cS^{1,n}_\infty[F;Y]) \label{0709'}
\eeq
is said to be its graded Euler--Lagrange operator. We agree to
call a pair $(\cS^{0,n}_\infty[F;Y],L)$ the graded Lagrangian
system.

The following is a corollary of Theorems \ref{v11} and
\ref{cmp26'} \cite{cmp04,book09}.

\begin{theo} \label{cmp26} \mar{cmp26}
Every $d_H$-closed graded form $\f\in\cS^{0,m<n}_\infty[F;Y]$
falls into the sum
\be
\f=h_0\si + d_H\xi, \qquad \xi\in \cS^{0,m-1}_\infty[F;Y],
\ee
where $\si$ is a closed $m$-form on $Y$. Any $\dl$-closed (i.e.,
variationally trivial) graded Lagrangian $L\in
\cS^{0,n}_\infty[F;Y]$ is the sum
\be
L=h_0\si + d_H\xi, \qquad \xi\in \cS^{0,n-1}_\infty[F;Y],
\ee
where $\si$ is a closed $n$-form on $Y$.
\end{theo}

The exactness of the complex (\ref{g112}) at a term
$\cS^{1,n}_\infty[F;Y]$ results in the following
\cite{cmp04,book09}.

\begin{theo} \label{g103} \mar{g103}
Given a graded Lagrangian $L$, there is the decomposition
\mar{g99,'}\ben
&& dL=\dl L - d_H\Xi_L,
\qquad \Xi\in \cS^{n-1}_\infty[F;Y], \label{g99}\\
&& \Xi_L=L+\op\sum_{s=0} \theta^A_{\nu_s\ldots\nu_1}\w
F^{\la\nu_s\ldots\nu_1}_A\om_\la, \label{g99'} \qquad
\om_\la=\dr_\la\rfloor\om,\\
&& F_A^{\nu_k\ldots\nu_1}= \dr_A^{\nu_k\ldots\nu_1}\cL-d_\la
F_A^{\la\nu_k\ldots\nu_1} +\si_A^{\nu_k\ldots\nu_1},\qquad
k=1,2,\ldots,\nonumber
\een
where local graded functions $\si$ obey the relations
$\si^\nu_A=0$, $\si_A^{(\nu_k\nu_{k-1})\ldots\nu_1}=0$.
\end{theo}

\begin{proof} The decomposition (\ref{g99}) is a
straightforward consequence of the exactness of the complex
(\ref{g112}) at a term $\cS^{1,n}_\infty[F,Y]$ and the fact that
$\vr$ is a projector. The coordinate expression (\ref{g99'})
results from a direct computation
\be
&& -d_H\Xi= -d_H[\th^A F^\la_A+\th^A_\nu F^{\la\nu}_A +\cdots
+\th^A_{\nu_s\ldots\nu_1} F^{\la\nu_s\ldots\nu_1}_A \\
&& \qquad +\th^A_{\nu_{s+1}\nu_s\ldots\nu_1}\w
F^{\la\nu_{s+1}\nu_s\ldots\nu_1}_A+ \cdots]\w\om_\la= [\th^Ad_\la
F^\la_A +\th^A_\nu (F^\nu_A +d_\la F^{\la\nu}_A)+\cdots \\
&& \qquad +\th^A_{\nu_{s+1}\nu_s\ldots\nu_1}
(F^{\nu_{s+1}\nu_s\ldots\nu_1}_A +
d_\la F^{\la\nu_{s+1}\nu_s\ldots\nu_1}_A) +\cdots]\w\om=\\
&&  \qquad [\th^Ad_\la F^\la_A +\th^A_\nu (\dr^\nu_A\cL)+\cdots
+ \th^A_{\nu_{s+1}\nu_s\ldots\nu_1}
(\dr^{\nu_{s+1}\nu_s\ldots\nu_1}_A\cL)+\cdots]\w\om=\\
&& \qquad  \th^A
(d_\la F^\la_A-\dr_A\cL) \w\om + dL= -\dl L+dL.
\ee
\end{proof}

The form $\Xi_L$ (\ref{g99'}) provides a global Lepage equivalent
of a graded Lagrangian $L$.

Given a graded Lagrangian system $(\cS^*_\infty[F;Y], L)$, by its
infinitesimal transformations are meant contact graded derivations
of a real GCR $\cS^0_\infty[F;Y]$. They constitute a
$\cS^0_\infty[F;Y]$-module $\gd \cS^0_\infty[F;Y]$ which is a real
Lie superalgebra with respect to the Lie superbracket
(\ref{ws14}). The following holds \cite{cmp04,book09}.

\begin{theo} \label{35t1} \mar{35t1}
The derivation module $\gd\cS^0_\infty[F;Y]$ is isomorphic to the
$\cS^0_\infty[F;Y]$-dual $(\cS^1_\infty[F;Y])^*$ of a module of
graded one-forms $\cS^1_\infty[F;Y]$. It follows that a DBGA
$\cS^*_\infty[F;Y]$ is the minimal Chevalley--Eilenberg
differential calculus over a real GCR $\cS^0_\infty[F;Y]$.
\end{theo}

Let $\vt\rfloor\f$, $\vt\in \gd\cS^0_\infty[F;Y]$, $\f\in
\cS^1_\infty[F;Y]$, denote the corresponding interior product.
Extended to a DBGA $\cS^*_\infty[F;Y]$, it obeys the rule
\be
\vt\rfloor(\f\w\si)=(\vt\rfloor \f)\w\si
+(-1)^{|\f|+[\f][\vt]}\f\w(\vt\rfloor\si), \qquad \f,\si\in
\cS^*_\infty[F;Y].
\ee

Restricted to the coordinate chart (\ref{j3}) of $J^\infty Y$, the
algebra $\cS^*_\infty[F;Y]$ is a free $\cS^0_\infty[F;Y]$-module
generated by one-forms $dx^\la$, $\theta^A_\La$. Due to the
isomorphism stated in Theorem \ref{35t1}, any graded derivation
$\vt\in\gd\cS^0_\infty[F;Y]$ takes the local form
\mar{gg3}\beq
\vt=\vt^\la \dr_\la + \vt^A\dr_A + \op\sum_{0<|\La|}\vt^A_\La
\dr^\La_A, \label{gg3}
\eeq
where $\dr^\La_A\rfloor dy_\Si^B=\dl_A^B\dl^\La_\Si$ up to
permutations of multi-indices $\La$ and $\Si$. Every graded
derivation $\vt$ (\ref{gg3}) yields a graded Lie derivative
\be
\bL_\vt\f=\vt\rfloor d\f+ d(\vt\rfloor\f), \qquad
\bL_\vt(\f\w\si)=\bL_\vt(\f)\w\si
+(-1)^{[\vt][\f]}\f\w\bL_\vt(\si),
\ee
of a DBGA $\cS^*_\infty[F;Y]$.

A graded derivation $\vt$ (\ref{gg3}) is called contact if a Lie
derivative $\bL_\vt$ preserves an ideal of contact graded forms of
a DBGA $\cS^*_\infty[F;Y]$. It takes the form
\mar{g105}\beq
\vt=\up_H+\up_V=\up^\la d_\la + [\up^A\dr_A +\op\sum_{|\La|>0}
d_\La(\up^A-s^A_\m\up^\m)\dr_A^\La], \label{g105}
\eeq
where $\up_H$ and $\up_V$ denotes the horizontal and vertical
parts of $\vt$ \cite{cmp04,book09}. A glance at the expression
(\ref{g105}) shows that a contact graded derivation $\vt$ as an
infinite order jet prolongation of its restriction
\mar{jj15}\beq
\up=\up^\la\dr_\la +\up^A\dr_A \label{jj15}
\eeq
to a GCR $S^0[F;Y]$. Since coefficients $\vt^\la$ and $\vt^i$
depend on jet coordinates $y^i_\La$, $0<|\La|$, in general, one
calls $\up$ (\ref{jj15}) a generalized vector field.

\begin{theo} \label{j44} \mar{j44}
A corollary of the decomposition (\ref{g99}) is that the Lie
derivative of a graded Lagrangian along any contact graded
derivation (\ref{g105}) obeys the first variational formula
\mar{g107}\beq
\bL_\vt L= \up_V\rfloor\dl L +d_H(h_0(\vt\rfloor \Xi_L)) + d_V
(\up_H\rfloor\om)\cL, \label{g107}
\eeq
where $\Xi_L$ is the Lepage equivalent (\ref{g99'}) of $L$
\cite{jmp05,cmp04}.
\end{theo}

A contact graded derivation $\vt$ (\ref{g105}) is called a
variational symmetry of a graded Lagrangian $L$ if the Lie
derivative $\bL_\vt L$ is $d_H$-exact, i.e., $\bL_\vt L=d_H\si$.

\begin{lem} \mar{35l10} \label{35l10}
A glance at the expression (\ref{g107}) shows the following. (i) A
contact graded derivation $\vt$ is a variational symmetry only if
it is projected onto $X$. (ii) Any projectable contact graded
derivation is a variational symmetry of a variationally trivial
graded Lagrangian. (iii) A contact graded derivations $\vt$ is a
variational symmetry iff its vertical part $\up_V$ (\ref{g105}) is
well. (iv) It is a variational symmetry iff a graded density
$\up_V\rfloor \dl L$ is $d_H$-exact.
\end{lem}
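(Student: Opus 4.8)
The plan is to read off all four assertions directly from the first variational formula (\ref{g107}), from the definition of a variational symmetry ($\bL_\vt L=d_H\si$, with $\si$ taken $0$-contact, $\si\in\cS^{0,n-1}_\infty[F;Y]$), and, for (ii), from the description of variationally trivial Lagrangians in Theorem \ref{cmp26}. The structural fact to establish first is a bidegree count. Since $L\in\cS^{0,n}_\infty[F;Y]$ has top horizontal degree $n=\dim X$, we have $d_HL=0$, whence $dL=d_VL\in\cS^{1,n}_\infty[F;Y]$ and $\bL_\vt L=\vt\rfloor dL+d(\vt\rfloor L)$ lies in $\cS^{0,n}_\infty[F;Y]\oplus\cS^{1,n-1}_\infty[F;Y]$. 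In (\ref{g107}) the summands $\up_V\rfloor\dl L$ and $d_H(h_0(\vt\rfloor\Xi_L))$ are $(0,n)$-forms, while $\cL\,d_V(\up_H\rfloor\om)$ is a $(1,n-1)$-form; moreover, since $\up_H=\up^\la d_\la$ and the total derivatives $d_\la$ annihilate contact forms, $\up_H\rfloor\om=\up^\la\om_\la$, so $\cL\,d_V(\up_H\rfloor\om)=\cL\,(d_V\up^\la)\w\om_\la$, which (the factor $\cL$ and the forms $\om_\la$ being fixed) vanishes iff every $\up^\la$ is $d_V$-closed, i.e. iff $\vt$ is projected onto $X$.

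Granting this, (i) is immediate: if $\vt$ is a variational symmetry then $\bL_\vt L=d_H\si$ with $\si$ of bidegree $(0,n-1)$, so $\bL_\vt L$ is a $(0,n)$-form, and equating the $(1,n-1)$-components of the two sides of (\ref{g107}) forces $d_V(\up_H\rfloor\om)=0$; hence $\vt$ is projected onto $X$. Part (ii) is then also immediate from (\ref{g107}): if $L$ is variationally trivial then $\dl L=0$ (and, by Theorem \ref{cmp26}, $L=h_0\si+d_H\xi$ with $\si$ a closed $n$-form on $Y$), while projectability of $\vt$ makes the summand $\cL\,d_V(\up_H\rfloor\om)$ vanish, so $\bL_\vt L=d_H(h_0(\vt\rfloor\Xi_L))$ is $d_H$-exact, i.e. $\vt$ is a variational symmetry.

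For (iii) I would first observe that $\up_V=\vt-\up_H$ is again a contact graded derivation, since both $\vt$ and $\up_H=\up^\la d_\la$ preserve the ideal of contact forms, and that $\up_V$ has vanishing horizontal part; hence (\ref{g107}) applied to $\up_V$ reads $\bL_{\up_V}L=\up_V\rfloor\dl L+d_H(h_0(\up_V\rfloor\Xi_L))$. Subtracting this from (\ref{g107}) for $\vt$, and using additivity of the interior product in the derivation ($\vt\rfloor\Xi_L-\up_V\rfloor\Xi_L=\up_H\rfloor\Xi_L$), one gets
\[
\bL_\vt L-\bL_{\up_V}L=d_H\big(h_0(\up_H\rfloor\Xi_L)\big)+\cL\,d_V(\up_H\rfloor\om).
\]
If $\vt$ is a variational symmetry it is projected onto $X$ by (i), the last term drops out, and $\bL_\vt L$ and $\bL_{\up_V}L$ differ by a $d_H$-exact form; therefore $\bL_\vt L$ is $d_H$-exact iff $\bL_{\up_V}L$ is, which is (iii). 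Finally (iv) follows by combining (iii) with (\ref{g107}) for $\up_V$: since $d_H(h_0(\up_V\rfloor\Xi_L))$ is already $d_H$-exact, $\bL_{\up_V}L$ is $d_H$-exact iff the graded density $\up_V\rfloor\dl L$ is.

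The main obstacle I anticipate is purely organisational: one must check that $\bL_\vt L$ genuinely splits as $\cS^{0,n}_\infty[F;Y]\oplus\cS^{1,n-1}_\infty[F;Y]$ and that its $(1,n-1)$-component is precisely $\cL\,d_V(\up_H\rfloor\om)$, which vanishes exactly on derivations projected onto $X$. One should also keep in mind that the equivalences in (iii) and (iv) are to be read for $\vt$ projected onto $X$ — which, by (i), is the only case in which $\vt$ can be a variational symmetry at all. Once these points are in place, everything else is a formal rearrangement of (\ref{g107}).
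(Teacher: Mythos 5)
Your proposal is correct and takes essentially the paper's own route: the paper offers no argument beyond ``a glance at (\ref{g107})'', and your bidegree bookkeeping (isolating the $(1,n-1)$-component $\cL\, d_V\up^\la\w\om_\la$), the application of (\ref{g107}) to the contact derivation $\up_V$, and the subtraction giving $\bL_\vt L-\bL_{\up_V}L=d_H(h_0(\up_H\rfloor\Xi_L))+\cL\, d_V(\up_H\rfloor\om)$ are exactly the details that glance entails. The only looseness --- concluding $d_V\up^\la=0$ from $\cL\, d_V\up^\la\w\om_\la=0$ tacitly assumes $\cL$ is not a zero divisor, and the ``iff'' in (iii)--(iv) must be read for $\vt$ projected onto $X$ --- is already present in the paper's own formulation, and you flag the latter point explicitly.
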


Note that generalized symmetries of differential equations and
Lagrangians of even variables has been intensively studied
\cite{and93,kras,olv}.

\begin{theo} \label{j45} \mar{j45} If a contact graded derivation $\vt$
(\ref{g105}) is a variational symmetry of a graded Lagrangian $L$,
the first variational formula (\ref{g107}) restricted to Ker$\,\dl
L$ leads to the weak conservation law
\be
 0\ap
d_H(h_0(\vt\rfloor\Xi_L)-\si).
\ee
\end{theo}

For the sake of brevity, the common symbol $\up$ further stands
for the generalized graded vector field $\up$ (\ref{jj15}), a
contact graded derivation $\vt$ determined by $\up$, and a Lie
derivative $\bL_\vt$.

A vertical contact graded derivation $\up= \up^A\dr_A$ is said to
be nilpotent if $\up(\up\f)=0$ for any horizontal graded form
$\f\in S^{0,*}_\infty[F,Y]$. It is nilpotent only if it is odd and
iff the equality $\up(\up^A)=0$ holds for all $\up^A$
\cite{cmp04}.

\begin{rem} \label{rr35} \mar{rr5}
For the sake of convenience, right derivations $\op\up^\lto
=\rdr_A\up^A$ also are  considered. They act on graded functions
and differential forms $\f$ on the right by the rules
\be
&& \op\up^\lto(\f)=d\f\lfloor\op\up^\lto +d(\f\lfloor\op\up^\lto),
\qquad \op\up^\lto(\f\w\f')=(-1)^{[\f']}\op\up^\lto(\f)\w\f'+
\f\w\op\up^\lto(\f'),\\
&& \theta_{\La A}\lfloor\rdr^{\Si B}=\dl^A_B\dl^\Si_\La.
\ee
\end{rem}

\section{Noether identities}

Let $(\cS^*_\infty[F;Y],L)$ be a graded Lagrangian system.
Describing its Noether identities, we follow the general analysis
of Noether identities of differential operators on fibre bundles
\cite{oper}.

Without a lose of generality, let a Lagrangian $L$ be even. Its
Euler--Lagrange operator $\dl L$ (\ref{0709'}) takes its values
into a graded vector bundle
\mar{41f33}\beq
\ol{VF}=V^*F\op\ot_F\op\w^n T^*X\to F, \label{41f33}
\eeq
where $V^*F$ is the vertical cotangent bundle of $F\to X$. It
however is not a vector bundle over $Y$. Therefore, we restrict
our consideration to the case of a pull-back composite bundle
\mar{41f1}\beq
F=Y\op\times_X F^1\to Y\to X, \label{41f1}
\eeq
where $F^1\to X$ is a vector bundle.

\begin{rem}
Let us introduce the following notation. Given the vertical
tangent bundle $VE$ of a fibre bundle $E\to X$, by its
density-dual bundle is meant a fibre bundle
\mar{41f2}\beq
\ol{VE}=V^*E\op\ot_E \op\w^n T^*X. \label{41f2}
\eeq
If $E\to X$ is a vector bundle, we have
\be
\ol{VE}=\ol E\op\times_X E, \qquad \ol E=E^*\op\ot_X\op\w^n T^*X,
\ee
where $\ol E$ is called the density-dual of $E$. Let
$E=E^0\oplus_X E^1$ be a graded vector bundle over $X$. Its graded
density-dual is defined as $\ol E=\ol E^1\oplus_X \ol E^0$. In
these terms, we treat a composite bundle $F$ as a graded vector
bundle over $Y$ possessing only an odd part. The density-dual
$\ol{VF}$ (\ref{41f2}) of the vertical tangent bundle $VF$ of
$F\to X$ is $\ol{VF}$ (\ref{41f33}). If $F$ is the pull-back
bundle (\ref{41f1}), then
\mar{41f4}\beq
\ol{VF}=((\ol F^1\op\oplus_Y V^*Y)\op\ot_Y\op\w^n T^*X)\op\oplus_Y
F^1 \label{41f4}
\eeq
is a graded vector bundle over $Y$. Given a graded vector bundle
$E=E^0\oplus_Y E^1$ over $Y$, we consider a composite bundle $E\to
E^0\to X$ and a DBGA
\mar{41f5}\beq
\cP^*_\infty[E;Y]=\cS^*_\infty[E;E^0]. \label{41f5}
\eeq
\end{rem}

\begin{lem} \label{41l1} \mar{41l1} One can associate to any
graded Lagrangian system $(\cS^*_\infty[F;Y],L)$ the chain complex
(\ref{v042}) whose one-boundaries vanish on $\Ker \dl L$.
\end{lem}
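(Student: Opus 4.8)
The plan is to construct explicitly the chain complex alluded to in the statement, following the pattern of the Koszul--Tate resolution of constraints but adapted to the differential setting, and then to verify the boundary-vanishing property by a direct computation with the graded Euler--Lagrange operator. First I would introduce, alongside the original generating basis $(y^i,c^a)$ of $\cS^*_\infty[F;Y]$, a set of antifields dual to the components $\cE_A$ of $\dl L$. Concretely, using the graded density-dual $\ol{VF}$ from (\ref{41f4}) and the DBGA notation $\cP^*_\infty[E;Y]=\cS^*_\infty[E;E^0]$ from (\ref{41f5}), I would take the graded vector bundle $\ol{VF}$ over $Y$, equip it with antifields $\ol s_A$ of Grassmann parity $[A]+1$ and of antifield number $1$, and form the DBGA $\cP^*_\infty[\ol{VF};Y]$. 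The first term of the complex is the $\cS^0_\infty[F;Y]$-module of horizontal densities generated over $\cS^0_\infty[F;Y]$ by the $\ol s_A$; the zeroth term is $\cS^{0,n}_\infty[F;Y]$ itself. One then defines the boundary operator $\ol\dl$ on one-chains by
\be
\ol\dl(\ol s_A\,\om) = \cE_A\,\om = \op\sum_{0\leq|\La|}(-1)^{|\La|}d_\La(\dr^\La_A\cL)\,\om,
\ee
extended as a graded $\cS^0_\infty[F;Y]$-module morphism that is a right graded derivation of antifield number $-1$, and annihilating the $s^A_\La$. This is exactly the complex (\ref{v042}) referred to in the statement.

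The next step is to check that $\ol\dl$ is well defined and nilpotent in the relevant low degree, i.e. that it genuinely defines a chain complex. Nilpotence on one-chains is automatic here because two-chains have not yet been introduced; the content is rather that $\ol\dl$ respects the module structure and the identification of elements of $\cS^*_\infty[F;Y]$ modulo contact forms, which follows since the $\cE_A$ are well-defined coefficients of the globally defined form $\dl L$ in $\vr(\cS^{1,n}_\infty[F;Y])$ by (\ref{0709'}). Then comes the heart of the lemma: a one-boundary is by construction an element of the form $\op\sum \Phi^A\cE_A\,\om$ with $\Phi^A\in\cS^0_\infty[F;Y]$ the components of an arbitrary one-chain $\Phi^A\ol s_A\,\om$. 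I must show every such element vanishes on $\Ker\dl L$. But $\Ker\dl L$ is precisely the set of solutions at which all $\cE_A$ vanish (as functions on the appropriate jet manifold), so $\op\sum\Phi^A\cE_A=0$ there identically; hence the one-boundary, being $\cS^0_\infty[F;Y]$-linear in the $\cE_A$, vanishes there as well. This is the easy direction and is essentially a tautology once the complex is set up correctly.

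I expect the main obstacle to be not the vanishing property itself but the bookkeeping needed to make the construction honest: pinning down the bundle over which the antifields live so that $F$ being the pull-back composite bundle (\ref{41f1}) is genuinely used — it is here that the reduction to $F=Y\times_X F^1$ matters, because only then is $\ol{VF}$ in (\ref{41f4}) a graded vector bundle \emph{over $Y$}, so that $\cP^*_\infty[\ol{VF};Y]$ makes sense and the antifields get a consistent coordinate transformation law. A secondary subtlety is fixing conventions for right graded derivations and Grassmann parities of the $\ol s_A$ (following Remark \ref{rr35}) so that $\ol\dl$ lands in the right parity and antifield-number grading. Once these conventions are in place, the proof is a short verification. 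The author's proof presumably records the explicit complex as display (\ref{v042}) and then gives the one-line argument that one-boundaries are $\cS^0_\infty[F;Y]$-combinations of the $\cE_A$ and hence vanish on $\Ker\dl L$.
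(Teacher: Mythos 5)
Your construction coincides with the paper's: you enlarge $\cS^*_\infty[F;Y]$ to the DBGA $\cP^*_\infty[\ol{VF};Y]$ with antifields $\ol s_A$ of parity $([A]+1)\,{\rm mod}\,2$ and antifield number $1$, and take the nilpotent right graded derivation $\ol\dl=\rdr^A\cE_A$ as boundary operator, which is exactly the complex (\ref{v042}). The genuine slip is in the verification step. In (\ref{v042}) the ``one-boundaries'' are the elements $\ol\dl\Phi$ with $\Phi\in\cP^{0,n}_\infty[\ol{VF};Y]_2$ a two-chain of antifield number $2$; they lie in the antifield-number-$1$ term. You instead verify the vanishing of $\ol\dl$ applied to one-chains, i.e.\ elements of antifield number $0$ of the form $\sum_{0\leq|\La|}\Phi^{A,\La}d_\La\cE_A\,\om$ --- these are the candidates for Noether identities discussed right after the lemma, not the one-boundaries of the lemma. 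What is needed is that $\ol\dl$ applied to a chain quadratic in the antifields and their jets, $\sum H^{(A,\La)(B,\Si)}\ol s_{\La A}\ol s_{\Si B}\om$, is again linear in the total derivatives $d_\Si\cE_B$ (with antifield-dependent coefficients) and therefore vanishes on $\Ker\dl L$; this is the same tautology you invoke, just one degree up, so the gap is easily closed, but as written you prove a different (lower-degree) statement.

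Two smaller points. Chains involve the full jets $\ol s_{\La A}$ of the antifields, not only $\ol s_A$, and the derivation acts by $\ol\dl(\ol s_{\La A})=d_\La\cE_A$. Also, your remark that nilpotence is ``automatic because two-chains have not yet been introduced'' is off: two-chains are part of (\ref{v042}), and $\ol\dl^2=0$ (needed for (\ref{v042}) to be a chain complex at all) holds because $\cE_A\in\cS^0_\infty[F;Y]$ contains no antifields, so $\ol\dl(d_\La\cE_A)=0$.
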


\begin{proof} Let us consider the density-dual $\ol{VF}$ (\ref{41f4})
of the vertical tangent bundle $VF\to F$, and let us enlarge an
original algebra $\cS^*_\infty[F;Y]$ to the DBGA
$\cP^*_\infty[\ol{VF};Y]$ (\ref{41f5}) with a local generating
basis $(s^A, \ol s_A)$, $[\ol s_A]=([A]+1){\rm mod}\,2$. Following
the physical terminology \cite{barn,gom}, we agree to call its
elements $\ol s_A$ the antifields of antifield number Ant$[\ol
s_A]= 1$. A DBGA $\cP^*_\infty[\ol{VF};Y]$ is endowed with a
nilpotent right graded derivation $\ol\dl=\rdr^A \cE_A$, where
$\cE_A$ are the variational derivatives (\ref{0709'}). Then we
have a chain complex
\mar{v042}\beq
0\lto \im\ol\dl \llr^{\ol\dl} \cP^{0,n}_\infty[\ol{VF};Y]_1
\llr^{\ol\dl} \cP^{0,n}_\infty[\ol{VF};Y]_2 \label{v042}
\eeq
of graded densities of antifield number $\leq 2$. Its
one-boundaries $\ol\dl\Phi$, $\Phi\in
\cP^{0,n}_\infty[\ol{VF};Y]_2$, by very definition, vanish on
$\Ker \dl L$.
\end{proof}

Any one-cycle
\mar{0712}\beq
\Phi= \op\sum_{0\leq|\La|} \Phi^{A,\La}\ol s_{\La A} \om \in
\cP^{0,n}_\infty[\ol{VF};Y]_1\label{0712}
\eeq
of the complex (\ref{v042}) is a differential operator on a fibre
bundle $\ol{VF}$ such that it is linear on fibres of $\ol{VF}\to
F$ and its kernel contains the graded Euler--Lagrange operator
$\dl L$ (\ref{0709'}), i.e.,
\mar{0713}\beq
\ol\dl\Phi=0, \qquad \op\sum_{0\leq|\La|} \Phi^{A,\La}d_\La
\cE_A\om=0. \label{0713}
\eeq
These equalities are Noether identities of an Euler--Lagrange
operator $\dl L$ \cite{jmp05,lmp08,oper}.

In particular, one-chains $\Phi$ (\ref{0712}) are necessarily
Noether identities if they are boundaries. Therefore, these
Noether identities are called trivial. Accordingly, non-trivial
Noether identities modulo the trivial ones are associated to
elements of the first homology $H_1(\ol\dl)$ of the complex
(\ref{v042}). A Lagrangian $L$ is called degenerate if there are
non-trivial Noether identities.

Non-trivial Noether identities can obey first-stage Noether
identities. In order to describe them, let us assume that the
module $H_1(\ol \dl)$ is finitely generated. Namely, there exists
a graded projective $C^\infty(X)$-module $\cC_{(0)}\subset H_1(\ol
\dl)$ of finite rank possessing a local basis $\{\Delta_r\om\}$:
\mar{41f7}\beq
\Delta_r\om=\op\sum_{0\leq|\La|} \Delta_r^{A,\La}\ol s_{\La
A}\om,\qquad \Delta_r^{A,\La}\in \cS^0_\infty[F;Y], \label{41f7}
\eeq
such that any element $\Phi\in H_1(\ol \dl)$ factorizes as
\mar{xx2}\beq
\Phi= \op\sum_{0\leq|\Xi|} \Phi^{r,\Xi} d_\Xi \Delta_r \om, \qquad
\Phi^{r,\Xi}\in \cS^0_\infty[F;Y], \label{xx2}
\eeq
through elements (\ref{41f7}) of $\cC_{(0)}$. Thus, all
non-trivial Noether identities (\ref{0713}) result from Noether
identities
\mar{v64}\beq
\ol\dl\Delta_r= \op\sum_{0\leq|\La|} \Delta_r^{A,\La} d_\La
\cE_A=0, \label{v64}
\eeq
called the complete Noether identities.

\begin{lem} \label{41l2} \mar{41l2} If the homology $H_1(\ol\dl)$
of the complex (\ref{v042}) is finitely generated in the above
mentioned sense, this complex can be extended to the one-exact
chain complex (\ref{v66}) with a boundary operator whose
nilpotency conditions are equivalent to the complete Noether
identities (\ref{v64}).
\end{lem}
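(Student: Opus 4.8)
The plan is to carry out the Koszul--Tate extension of the complex (\ref{v042}) by adjoining one new antifield for every complete Noether identity (\ref{41f7}); this is the standard resolution of the module $H_1(\ol\dl)$ by its generators. Since $\cC_{(0)}$ is a graded projective $C^\infty(X)$-module of finite rank, Serre--Swan Theorem \ref{vv0} (cf.\ Remark \ref{polin}) realises it as the structure module of a graded vector bundle over $X$, which I pull back to a graded vector bundle $\ol E_{(0)}\to Y$. Following the notation (\ref{41f5}), I then enlarge $\cP^*_\infty[\ol{VF};Y]$ to the DBGA $\cP^*_\infty\{0\}:=\cP^*_\infty[\ol{VF}\op\oplus_Y\ol E_{(0)};Y]$, whose local generating basis is $(s^A,\ol s_A,\ol c_r)$ with $[\ol c_r]=([\Delta_r]+1){\rm mod}\,2$ and antifield number $2$, so that the generators $\ol c_r$ are dual to the basis $\{\Delta_r\}$ of $\cC_{(0)}$. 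If $\cP^{0,n}_\infty\{0\}_k$ denotes its submodule of graded densities of antifield number $k$, then $\cP^{0,n}_\infty\{0\}_1=\cP^{0,n}_\infty[\ol{VF};Y]_1$ because $\ol c_r$ has antifield number $2$, while $\cP^{0,n}_\infty\{0\}_2$ acquires in addition the terms linear in the jets $\ol c_{\La r}$.

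Next I would equip $\cP^*_\infty\{0\}$ with the right graded derivation
\be
\ol\dl_0=\rdr^A\cE_A+\op\sum_r\rdr^r\Delta_r
\ee
of Grassmann parity $1$, which lowers the antifield number by one and commutes with $d_H$ (so that $\ol\dl_0(s^A_\La)=0$, $\ol\dl_0(\ol s_{\La A})=d_\La\cE_A$, $\ol\dl_0(\ol c_{\La r})=d_\La\Delta_r$). This extends (\ref{v042}) to the chain complex (\ref{v66})
\be
0\lto\im\ol\dl\llr^{\ol\dl_0}\cP^{0,n}_\infty\{0\}_1\llr^{\ol\dl_0}\cP^{0,n}_\infty\{0\}_2\llr^{\ol\dl_0}\cP^{0,n}_\infty\{0\}_3 .
\ee
Since $\ol\dl_0$ restricts to the nilpotent operator $\ol\dl$ on the subalgebra $\cP^*_\infty[\ol{VF};Y]$ and $\ol\dl_0\circ\ol\dl_0$ is an even right graded derivation, hence fixed by its values on $dx^\la$ and on the generators, the condition $\ol\dl_0\circ\ol\dl_0=0$ reduces to $\ol\dl_0(\ol\dl_0\ol c_r)=0$. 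But $\ol\dl_0(\ol\dl_0\ol c_r)=\ol\dl_0\Delta_r=\ol\dl\Delta_r=\op\sum_{0\leq|\La|}\Delta_r^{A,\La}d_\La\cE_A$, the middle equality because $\Delta_r$ (\ref{41f7}) contains no $\ol c$'s. Hence $\ol\dl_0\circ\ol\dl_0=0$ if and only if all the complete Noether identities (\ref{v64}) hold, and they do hold by the definition of $\cC_{(0)}\subset H_1(\ol\dl)$; this is precisely the asserted equivalence, and it shows $\ol\dl_0$ to be a boundary operator on (\ref{v66}).

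The remaining point is one-exactness, $H_1(\ol\dl_0)=0$. As $\ol c_r$ has antifield number $2$, a one-cycle of (\ref{v66}) is exactly a one-cycle of (\ref{v042}): an element $\Phi\in\cP^{0,n}_\infty[\ol{VF};Y]_1$ with $\ol\dl\Phi=0$, i.e.\ a representative of a class in $H_1(\ol\dl)$. By the finite-generation assumption, the factorisation (\ref{xx2}) gives $\Phi=\op\sum_{0\leq|\Xi|}\Phi^{r,\Xi}d_\Xi\Delta_r\om+\ol\dl\Psi$ with $\Phi^{r,\Xi}\in\cS^0_\infty[F;Y]$ and $\Psi\in\cP^{0,n}_\infty[\ol{VF};Y]_2$. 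Since $\ol\dl_0$ commutes with $d_\Xi$, maps $\ol c_{\Xi r}\om$ to $d_\Xi\Delta_r\om$, and $\Phi^{r,\Xi}$ carries no antifields, the first term equals $\pm\,\ol\dl_0\big(\op\sum_{0\leq|\Xi|}\Phi^{r,\Xi}\ol c_{\Xi r}\om\big)$ up to a sign fixed by the Grassmann parities, while $\ol\dl\Psi=\ol\dl_0\Psi$; hence $\Phi\in\im\ol\dl_0$, so (\ref{v66}) is one-exact.

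Everything except the last step is formal bookkeeping of antifield numbers, Grassmann parities and signs in the right-derivation calculus over the minimal Chevalley--Eilenberg differential calculus (Lemma \ref{v62}). The substantive point, where I expect the effort to concentrate, is the one-exactness step: it uses crucially the finite-generation/factorisation hypothesis (\ref{xx2}) on $H_1(\ol\dl)$, without which an arbitrary $\ol\dl$-cycle of antifield number one cannot be rewritten as a $\ol\dl_0$-boundary. A subsidiary care-point is to check that $\ol\dl_0$ is consistently defined as a $d_H$-compatible right graded derivation on all of $\cP^*_\infty\{0\}$ (in particular that it acts on the jets $\ol c_{\La r}$ by $d_\La\Delta_r$), which is guaranteed once $\cP^*_\infty\{0\}$ is recognised as the minimal differential calculus over its degree-zero subring.
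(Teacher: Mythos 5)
Your proposal is correct and follows essentially the same route as the paper: Serre--Swan realisation of $\cC_{(0)}$ as sections of the density-dual of a graded vector bundle, enlargement of $\cP^*_\infty[\ol{VF};Y]$ by antifields $\ol c_r$ of antifield number $2$, the boundary operator $\dl_0=\ol\dl+\rdr^r\Delta_r$ whose nilpotency is exactly the complete Noether identities (\ref{v64}), and one-exactness via the factorisation (\ref{xx2}) rewriting any one-cycle, up to a $\ol\dl$-boundary, as $\dl_0\bigl(\sum\Phi^{r,\Xi}\ol c_{\Xi r}\om\bigr)$. The extra bookkeeping you supply (explicit reduction of nilpotency to the value on $\ol c_r$, compatibility with $d_H$) is consistent with, though more detailed than, the paper's argument.
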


\begin{proof}
By virtue of Serre--Swan Theorem \ref{vv0}, a graded module
$\cC_{(0)}$ is isomorphic to a module of sections of the
density-dual $\ol E_0$ of some graded vector bundle $E_0\to X$.
Let us enlarge $\cP^*_\infty[\ol{VF};Y]$ to a DBGA
\mar{41f14}\beq
\ol\cP^*_\infty\{0\}=\cP^*_\infty[\ol{VF}\op\oplus_Y \ol E_0;Y]
\label{41f14}
\eeq
possessing the local generating basis $(s^A,\ol s_A, \ol c_r)$
where $\ol c_r$ are antifields of Grassmann parity $[\ol
c_r]=([\Delta_r]+1){\rm mod}\,2$ and antifield number ${\rm
Ant}[\ol c_r]=2$. The DBGA (\ref{41f14}) is provided with an odd
right graded derivation $\dl_0=\ol\dl + \rdr^r\Delta_r$ which is
nilpotent iff the complete Noether identities (\ref{v64}) hold.
Then $\dl_0$ is a boundary operator of a chain complex
\mar{v66}\beq
0\lto \im\ol\dl \op\lto^{\ol\dl}
\cP^{0,n}_\infty[\ol{VF};Y]_1\op\lto^{\dl_0}
\ol\cP^{0,n}_\infty\{0\}_2 \op\lto^{\dl_0}
\ol\cP^{0,n}_\infty\{0\}_3 \label{v66}
\eeq
of graded densities of antifield number $\leq 3$. Let $H_*(\dl_0)$
denote its homology. We have $H_0(\dl_0)=H_0(\ol\dl)=0$.
Furthermore, any one-cycle $\Phi$ up to a boundary takes the form
(\ref{xx2}) and, therefore, it is a $\dl_0$-boundary
\be
\Phi= \op\sum_{0\leq|\Si|} \Phi^{r,\Xi} d_\Xi \Delta_r\om
=\dl_0(\op\sum_{0\leq|\Si|} \Phi^{r,\Xi}\ol c_{\Xi r}\om).
\ee
Hence, $H_1(\dl_0)=0$, i.e., the complex (\ref{v66}) is one-exact.
\end{proof}

Let us consider the second homology $H_2(\dl_0)$ of the complex
(\ref{v66}). Its two-chains  read
\mar{41f9}\beq
\Phi= G + H= \op\sum_{0\leq|\La|} G^{r,\La}\ol c_{\La r}\om +
\op\sum_{0\leq|\La|,|\Si|} H^{(A,\La)(B,\Si)}\ol s_{\La A}\ol
s_{\Si B}\om. \label{41f9}
\eeq
Its two-cycles define first-stage Noether identities
\mar{v79}\beq
\dl_0 \Phi=0, \qquad   \op\sum_{0\leq|\La|} G^{r,\La}d_\La\Delta_r
\om =-\ol\dl H. \label{v79}
\eeq
Conversely, let the equality (\ref{v79}) hold. Then it is a cycle
condition of the two-chain (\ref{41f9}).

The first-stage Noether identities (\ref{v79}) are trivial either
if a two-cycle $\Phi$ (\ref{41f9}) is a $\dl_0$-boundary or its
summand $G$ vanishes on $\Ker \dl L$. Therefore, non-trivial
first-stage Noether identities fails to exhaust the second
homology $H_2(\dl_0)$ the complex (\ref{v66}) in general.

\begin{lem} \label{v134'} \mar{v134'}
Non-trivial first-stage Noether identities modulo the trivial ones
are identified with elements of the homology $H_2(\dl_0)$ iff any
$\ol\dl$-cycle $\f\in \ol\cP^{0,n}_\infty\{0\}_2$ is a
$\dl_0$-boundary.
\end{lem}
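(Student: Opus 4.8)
The plan is to argue both implications directly from the definitions, using that $\dl_0=\ol\dl+\rdr^r\Delta_r$ and that the module $\cC_{(0)}$ was chosen so that every one-cycle of (\ref{v042}) factorizes through the complete Noether identities (\ref{v64}). First I would fix a two-cycle $\Phi=G+H$ as in (\ref{41f9}), so that the cycle condition $\dl_0\Phi=0$ reads exactly (\ref{v79}): $\sum_{0\leq|\La|}G^{r,\La}d_\La\Delta_r\om=-\ol\dl H$. By definition, such a $\Phi$ gives a non-trivial first-stage Noether identity unless it is a $\dl_0$-boundary or its head term $G$ vanishes on $\Ker\dl L$. So the content of the lemma is that the second alternative, ``$G$ vanishes on $\Ker\dl L$'', can be absorbed into the first, ``$\Phi$ is a $\dl_0$-boundary'', precisely under the stated condition.

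For the ``if'' direction, assume every $\ol\dl$-cycle $\f\in\ol\cP^{0,n}_\infty\{0\}_2$ is a $\dl_0$-boundary, and suppose the head $G$ of a two-cycle $\Phi$ vanishes on $\Ker\dl L$. The key step is to show $G$ is then itself an $\ol\dl$-cycle modulo lower antifield-number terms: since $G=\sum G^{r,\La}\ol c_{\La r}\om$ vanishes on $\Ker\dl L$, the graded density $\sum G^{r,\La}d_\La\Delta_r\om$ is $\ol\dl$-exact (this is the characterization of one-chains vanishing on $\Ker\dl L$ as $\ol\dl$-boundaries, used already in the proof of Lemma \ref{41l1}), which via (\ref{v79}) is automatic; what one actually needs is that $G$, viewed in $\ol\cP^{0,n}_\infty\{0\}_2$, is $\ol\dl$-closed. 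One computes $\ol\dl G=\ol\dl(\Phi-H)=\ol\dl\Phi-\ol\dl H$; since $\dl_0\Phi=\ol\dl\Phi+\sum G^{r,\La}d_\La\Delta_r\om=0$, we get $\ol\dl\Phi=-\sum G^{r,\La}d_\La\Delta_r\om=\ol\dl H$, hence $\ol\dl G=0$. By hypothesis $G=\dl_0\Psi$ for some three-chain $\Psi$, and then $\Phi-\dl_0\Psi$ is a two-cycle whose head vanishes, so iterating (or directly, since $H$-type terms are handled by one-exactness of the complex already established in Lemma \ref{41l2}) shows $\Phi$ is a $\dl_0$-boundary. Thus the only surviving obstructions in $H_2(\dl_0)$ are the genuinely non-trivial first-stage Noether identities, giving the claimed identification.

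For the ``only if'' direction, suppose non-trivial first-stage Noether identities are identified with $H_2(\dl_0)$, and let $\f\in\ol\cP^{0,n}_\infty\{0\}_2$ be an $\ol\dl$-cycle. Then $\f$ is a $\dl_0$-cycle with vanishing head $G=0$ (it has no $\ol c$-terms, or one reduces to that case), so $\f$ represents a class in $H_2(\dl_0)$ corresponding to a first-stage Noether identity with $G=0$ — which is trivial by the very definition of triviality. Under the assumed identification, a trivial first-stage Noether identity corresponds to the zero class, i.e. $\f$ is a $\dl_0$-boundary. This is exactly the desired conclusion.

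The main obstacle I anticipate is the bookkeeping in the ``if'' direction: one must be careful that subtracting the $\dl_0$-boundary $\dl_0\Psi$ does not reintroduce a non-vanishing head or spoil the cycle condition, and that the reduction terminates — this is where one leans on the one-exactness of (\ref{v66}) from Lemma \ref{41l2} and on the standard fact that a one-chain of (\ref{v042}) vanishing on $\Ker\dl L$ is an $\ol\dl$-boundary, so that the antifield-number-$2$ ``$H$''-part is always removable. The rest is a routine, if slightly tedious, diagram chase through the complex (\ref{v66}).
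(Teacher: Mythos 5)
The ``if'' direction of your argument has a genuine gap. Your ``key step'' is to verify that the head $G=\sum_{0\leq|\La|} G^{r,\La}\ol c_{\La r}\om$ of a two-cycle is $\ol\dl$-closed and then to invoke the hypothesis to conclude $G=\dl_0\Psi$. But $\ol\dl=\rdr^A\cE_A$ differentiates only with respect to the antifields $\ol s_{\La A}$, and by antifield-number counting the coefficients $G^{r,\La}$ lie in $\cS^0_\infty[F;Y]$; hence $\ol\dl G=0$ holds identically for \emph{every} two-chain, whether or not $G$ vanishes on $\Ker\dl L$, so your computation carries no information. Worse, the hypothesis cannot legitimately be applied to $G$: the $\ol c$-linear part of any $\dl_0$-boundary has coefficients proportional to $d_\Si\cE_A$, so a $G$-type density is a $\dl_0$-boundary only if its coefficients already vanish on $\Ker\dl L$; read your way, the hypothesis would force every head to be a boundary and is never satisfiable. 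The hypothesis is meant for (and in a correct proof is only applied to) $\ol\dl$-cycles quadratic in $\ol s$, i.e.\ lying in $\cP^{0,n}_\infty[\ol{VF};Y]_2\subset\ol\cP^{0,n}_\infty\{0\}_2$ --- compare the $k=-1$ instance of Condition \ref{v155}. The input you explicitly set aside is precisely the one you need: ``$G$ vanishes on $\Ker\dl L$'' must be used in the form $G=\ol\dl\Psi$ for a three-chain $\Psi$ linear in $\ol c_{\La r}\ol s_{\Si A}$. Writing $\Phi=\dl_0\Psi+[(\ol\dl-\dl_0)\Psi+H]$, the bracketed term is quadratic in $\ol s$ because $\ol\dl-\dl_0=-\rdr^r\Delta_r$ trades each $\ol c_{\La r}$ for $d_\La\Delta_r$, which is linear in $\ol s$; the cycle condition together with $\dl_0^2=0$ gives $\ol\dl[(\ol\dl-\dl_0)\Psi+H]=0$, and only now does the hypothesis apply, making $\Phi$ a $\dl_0$-boundary. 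Your fallback through ``one-exactness of (\ref{v66})'' does not repair this: one-exactness controls $H_1(\dl_0)$, whereas removability of the $\ol s$-quadratic part of a two-cycle is exactly the content of the hypothesis, not of Lemma \ref{41l2}.

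Your ``only if'' direction is essentially sound: a $\ol\dl$-cycle quadratic in $\ol s$ is a $\dl_0$-cycle whose head vanishes, hence a trivial first-stage Noether identity, and the assumed identification forces trivial identities to be boundaries. The paper instead first uses the factorization (\ref{xx2}) to exhibit such a cycle as $\dl_0\Psi-\si$ with $\si$ a $G$-type chain whose coefficients vanish on $\Ker\dl L$, which makes the appeal to triviality unambiguous, but your shorter route reaches the same conclusion.
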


\begin{proof}
It suffices to show that, if the summand $G$ of a two-cycle $\Phi$
(\ref{41f9}) is $\ol\dl$-exact, then $\Phi$ is a boundary. If
$G=\ol\dl \Psi$, let us write
\mar{v169'}\beq
\Phi=\dl_0\Psi +(\ol \dl-\dl_0)\Psi + H. \label{v169'}
\eeq
Hence, the cycle condition (\ref{v79}) reads
\be
\dl_0\Phi=\ol\dl((\ol\dl-\dl_0)\Psi + H)=0.
\ee
Since any $\ol\dl$-cycle $\f\in \ol\cP^{0,n}_\infty\{0\}_2$, by
assumption, is $\dl_0$-exact, then $(\ol \dl-\dl_0)\Psi + H$ is a
$\dl_0$-boundary. Consequently, $\Phi$ (\ref{v169'}) is
$\dl_0$-exact. Conversely, let $\Phi\in
\ol\cP^{0,n}_\infty\{0\}_2$ be a $\ol\dl$-cycle, i.e.,
\be
\ol\dl\Phi= 2\Phi^{(A,\La)(B,\Sigma)}\ol s_{\La A} \ol\dl\ol
s_{\Sigma B}\om= 2\Phi^{(A,\La)(B,\Sigma)}\ol s_{\La A} d_\Si
\cE_B\om=0.
\ee
It follows that $\Phi^{(A,\La)(B,\Sigma)} \ol\dl\ol s_{\Sigma
B}=0$ for all indices $(A,\La)$. Omitting a $\ol\dl$-boundary
term, we obtain
\be
\Phi^{(A,\La)(B,\Sigma)} \ol s_{\Sigma B}= G^{(A,\La)(r,\Xi)}d_\Xi
\Delta_r.
\ee
Hence, $\Phi$ takes the form $\Phi=G'^{(A,\La)(r,\Xi)}
d_\Xi\Delta_r \ol s_{\La A}\om$. Then there exists a three-chain
$\Psi= G'^{(A,\La)(r,\Xi)} \ol c_{\Xi r} \ol s_{\La A}\om$ such
that
\mar{41f12}\beq
\dl_0\Psi=\Phi +\si = \Phi + G''^{(A,\La)(r,\Xi)}d_\La\cE_A \ol
c_{\Xi r} \om. \label{41f12}
\eeq
Owing to the equality $\ol\dl\Phi=0$, we have $\dl_0\si=0$. Thus,
$\si$ in the expression (\ref{41f12}) is $\ol\dl$-exact
$\dl_0$-cycle. By assumption, it is $\dl_0$-exact, i.e.,
$\si=\dl_0\psi$.   Consequently, a $\ol\dl$-cycle $\Phi$ is a
$\dl_0$-boundary $\Phi=\dl_0\Psi -\dl_0\psi$.
\end{proof}

A degenerate Lagrangian system is called reducible if it admits
non-trivial first stage Noether identities.

If the condition of Lemma \ref{v134'} is satisfied, let us assume
that non-trivial first-stage Noether identities are finitely
generated as follows. There exists a graded projective
$C^\infty(X)$-module $\cC_{(1)}\subset H_2(\dl_0)$ of finite rank
possessing a local basis $\{\Delta_{r_1}\om\}$:
\mar{41f13}\beq
\Delta_{r_1}\om=\op\sum_{0\leq|\La|} \Delta_{r_1}^{r,\La}\ol
c_{\La r}\om + h_{r_1}\om,   \label{41f13}
\eeq
such that any element $\Phi\in H_2(\dl_0)$ factorizes as
\mar{v80'}\beq
\Phi= \op\sum_{0\leq|\Xi|} \Phi^{r_1,\Xi} d_\Xi \Delta_{r_1}\om,
\qquad \Phi^{r_1,\Xi}\in \cS^0_\infty[F;Y], \label{v80'}
\eeq
through elements (\ref{41f13}) of $\cC_{(1)}$. Thus, all
non-trivial first-stage Noether identities (\ref{v79}) result from
the equalities
\mar{v82'}\beq
 \op\sum_{0\leq|\La|} \Delta_{r_1}^{r,\La} d_\La \Delta_r +\ol\dl
h_{r_1} =0, \label{v82'}
\eeq
called the complete first-stage Noether identities.

\begin{lem} \label{v139'} \mar{v139'} The one-exact complex
(\ref{v66}) of a reducible Lagrangian system is extended to the
two-exact one (\ref{v87'}) with a boundary operator whose
nilpotency conditions are equivalent to the complete Noether
identities (\ref{v64}) and the complete first-stage Noether
identities (\ref{v82'}).
\end{lem}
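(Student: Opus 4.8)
The plan is to repeat the construction of Lemma \ref{41l2} one stage higher. First I would apply Serre--Swan Theorem \ref{vv0} to the finitely generated graded module $\cC_{(1)}\subset H_2(\dl_0)$ of non-trivial first-stage Noether identities, identifying it with the module of sections of the density-dual $\ol E_1$ of some graded vector bundle $E_1\to X$. Then I would enlarge the DBGA $\ol\cP^*_\infty\{0\}$ (\ref{41f14}) to
\beq
\ol\cP^*_\infty\{1\}=\cP^*_\infty[\ol{VF}\op\oplus_Y\ol E_0\op\oplus_Y\ol E_1;Y]
\eeq
with the local generating basis $(s^A,\ol s_A,\ol c_r,\ol c_{r_1})$, adjoining new antifields $\ol c_{r_1}$ of Grassmann parity $[\ol c_{r_1}]=([\Delta_{r_1}]+1){\rm mod}\,2$ and antifield number ${\rm Ant}[\ol c_{r_1}]=3$, and I would take as the candidate boundary operator the odd right graded derivation $\dl_1=\dl_0+\rdr^{r_1}\Delta_{r_1}$, that is, the derivation sending $\ol c_{r_1}$ to the density $\Delta_{r_1}$ of (\ref{41f13}) and agreeing with $\dl_0$ on all the other generators. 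The complex (\ref{v87'}) is then the resulting chain complex of graded densities of antifield number $\leq 4$ equipped with $\dl_1$.

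The second step is to check nilpotency. Since $\dl_0^2=0$ is already equivalent to the complete Noether identities (\ref{v64}) by Lemma \ref{41l2}, the only new contribution is $\dl_1^2$ on $\ol c_{r_1}$. Because the coefficients $\Delta_{r_1}^{r,\La}$ belong to $\cS^0_\infty[F;Y]$ and the correction term $h_{r_1}$ depends only on $(s^A,\ol s_A)$, the operator $\dl_0$ annihilates the $\Delta_{r_1}^{r,\La}$ and acts on $h_{r_1}$ through $\ol\dl$ alone, so that $\dl_1^2\ol c_{r_1}\om=\dl_0(\Delta_{r_1}\om)=\sum_{0\leq|\La|}\Delta_{r_1}^{r,\La}d_\La\Delta_r\om+\ol\dl h_{r_1}\om$. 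This vanishes precisely by the complete first-stage Noether identities (\ref{v82'}), and conversely its vanishing returns (\ref{v82'}); hence $\dl_1$ is nilpotent if and only if both (\ref{v64}) and (\ref{v82'}) hold.

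The third step is two-exactness. Since $\dl_1$ coincides with $\dl_0$ on chains of antifield number $\leq 2$, one has $H_0(\dl_1)=H_1(\dl_1)=0$ as in Lemma \ref{41l2}, and any $\dl_1$-two-cycle is a $\dl_0$-two-cycle. Under the standing hypotheses (the condition of Lemma \ref{v134'} and the finite-generation assumption (\ref{v80'})), such a cycle equals, up to a $\dl_0$-boundary, an expression $\sum_{0\leq|\Xi|}\Phi^{r_1,\Xi}d_\Xi\Delta_{r_1}\om$ with $\Phi^{r_1,\Xi}\in\cS^0_\infty[F;Y]$, which is the $\dl_1$-image of $\sum_{0\leq|\Xi|}\Phi^{r_1,\Xi}\ol c_{\Xi r_1}\om$ (using that $\dl_1$ commutes with the total derivatives $d_\Xi$). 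Hence $H_2(\dl_1)=0$ and the complex (\ref{v87'}) is two-exact.

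I expect the bookkeeping in $\dl_1^2=0$ to be the main obstacle: one must keep the antifield-number-$2$ part (which reproduces (\ref{v64})) cleanly separated from the antifield-number-$3$ part (which yields (\ref{v82'})), handling with care the quadratic-in-$\ol s$ correction $h_{r_1}$ occurring in (\ref{41f13}) together with the right-derivation sign conventions of Remark \ref{rr35}. A secondary point to be careful about is that the Serre--Swan identification must be chosen so that the generating family $\{\Delta_{r_1}\om\}$ of (\ref{41f13}) really matches a basis of $\cC_{(1)}$, so that the factorization (\ref{v80'}) makes every two-cycle an honest $\dl_1$-boundary rather than a boundary only modulo a $\ol\dl$-exact defect.
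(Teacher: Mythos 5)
Your proposal follows the paper's proof essentially verbatim: the same Serre--Swan identification of $\cC_{(1)}$ with sections of $\ol E_1$, the same enlargement to $\ol\cP^*_\infty\{1\}$ with antifields $\ol c_{r_1}$ of antifield number $3$, the same boundary operator $\dl_1=\dl_0+\rdr^{r_1}\Delta_{r_1}$, and the same use of the factorization (\ref{v80'}) to kill $H_2(\dl_1)$. Your explicit check that $\dl_1^2\ol c_{r_1}\om=\dl_0(\Delta_{r_1}\om)$ reproduces (\ref{v82'}) merely spells out what the paper asserts, so the argument is correct and coincides with the paper's.
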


\begin{proof}
By virtue of Serre--Swan Theorem \ref{vv0}, a graded module
$\cC_{(1)}$ is isomorphic to a module of sections of the
density-dual $\ol E_1$ of some graded vector bundle $E_1\to X$.
Let us enlarge the DBGA $\ol\cP^*_\infty\{0\}$ (\ref{41f14}) to a
DBGA
\be
\ol\cP^*_\infty\{1\}=\cP^*_\infty[\ol{VF}\op\oplus_Y \ol
E_0\op\oplus_Y\ol E_1;Y]
\ee
possessing a local generating basis $\{s^A,\ol s_A, \ol c_r, \ol
c_{r_1}\}$ where $\ol c_{r_1}$ are first stage Noether antifields
of Grassmann parity $[\ol c_{r_1}]=([\Delta_{r_1}]+1){\rm mod}\,2$
and antifield number Ant$[\ol c_{r_1}]=3$. This DBGA is provided
with an odd right graded derivation $\dl_1=\dl_0 + \rdr^{r_1}
\Delta_{r_1}$ which is nilpotent iff the complete Noether
identities (\ref{v64}) and the complete first-stage Noether
identities (\ref{v82'}) hold. Then $\dl_1$ is a boundary operator
of a chain complex
\mar{v87'}\beq
0\lto \im\ol\dl \op\lto^{\ol\dl}
\cP^{0,n}_\infty[\ol{VF};Y]_1\op\lto^{\dl_0}
\ol\cP^{0,n}_\infty\{0\}_2 \op\lto^{\dl_1}
\ol\cP^{0,n}_\infty\{1\}_3 \op\lto^{\dl_1}
\ol\cP^{0,n}_\infty\{1\}_4 \label{v87'}
\eeq
of graded densities of antifield number $\leq 4$. Let $H_*(\dl_1)$
denote its homology. It is readily observed that
\be
H_0(\dl_1)=H_0(\ol\dl), \qquad H_1(\dl_1)=H_1(\dl_0)=0.
\ee
By virtue of the expression (\ref{v80'}), any two-cycle of the
complex (\ref{v87'}) is a boundary
\be
 \Phi= \op\sum_{0\leq|\Xi|} \Phi^{r_1,\Xi} d_\Xi \Delta_{r_1}\om
=\dl_1(\op\sum_{0\leq|\Xi|} \Phi^{r_1,\Xi} \ol c_{\Xi r_1}\om).
\ee
It follows that $H_2(\dl_1)=0$, i.e., the complex (\ref{v87'}) is
two-exact.
\end{proof}

If the third homology $H_3(\dl_1)$ of the complex (\ref{v87'}) is
not trivial, its elements correspond to second-stage Noether
identities which the complete first-stage ones satisfy, and so on.
Iterating the arguments, one comes to the following.

A degenerate graded Lagrangian system $(\cS^*_\infty[F;Y],L)$ is
called $N$-stage reducible if it admits finitely generated
non-trivial $N$-stage Noether identities, but no non-trivial
$(N+1)$-stage ones. It is characterized as follows
\cite{jmp05a,lmp08}.

$\bullet$ There are graded vector bundles $E_0,\ldots, E_N$ over
$X$, and a DBGA $\cP^*_\infty[\ol{VF};Y]$ is enlarged to a DBGA
\mar{v91}\beq
\ol\cP^*_\infty\{N\}=\cP^*_\infty[\ol{VF}\op\oplus_Y \ol
E_0\op\oplus_Y\cdots\op\oplus_Y \ol E_N;Y] \label{v91}
\eeq
with the local generating basis $(s^A,\ol s_A, \ol c_r, \ol
c_{r_1}, \ldots, \ol c_{r_N})$ where $\ol c_{r_k}$ are Noether
$k$-stage antifields of antifield number Ant$[\ol c_{r_k}]=k+2$.

$\bullet$ The DBGA (\ref{v91}) is provided with the nilpotent
right graded derivation
\mar{v92,'}\ben
&&\dl_{\rm KT}=\dl_N=\ol\dl +
\op\sum_{0\leq|\La|}\rdr^r\Delta_r^{A,\La}\ol s_{\La A} +
\op\sum_{1\leq k\leq N}\rdr^{r_k} \Delta_{r_k},
\label{v92}\\
&& \Delta_{r_k}\om= \op\sum_{0\leq|\La|}
\Delta_{r_k}^{r_{k-1},\La}\ol c_{\La r_{k-1}}\om +
\label{v92'}\\
&& \qquad \op\sum_{0\leq |\Si|, |\Xi|}(h_{r_k}^{(r_{k-2},\Si)(A,\Xi)}\ol
c_{\Si r_{k-2}}\ol s_{\Xi A}+...)\om \in
\ol\cP^{0,n}_\infty\{k-1\}_{k+1}, \nonumber
\een
of antifield number -1. The index $k=-1$ here stands for $\ol
s_A$. The nilpotent derivation $\dl_{\rm KT}$ (\ref{v92}) is
called the Koszul--Tate operator.

$\bullet$ With this graded derivation, the module
$\ol\cP^{0,n}_\infty\{N\}_{\leq N+3}$ of densities of antifield
number $\leq (N+3)$ is decomposed into the exact Koszul--Tate
chain complex
\mar{v94}\ben
&& 0\lto \im \ol\dl \llr^{\ol\dl}
\cP^{0,n}_\infty[\ol{VF};Y]_1\llr^{\dl_0}
\ol\cP^{0,n}_\infty\{0\}_2\llr^{\dl_1}
\ol\cP^{0,n}_\infty\{1\}_3\cdots
\label{v94}\\
&& \qquad
 \llr^{\dl_{N-1}} \ol\cP^{0,n}_\infty\{N-1\}_{N+1}
\llr^{\dl_{\rm KT}} \ol\cP^{0,n}_\infty\{N\}_{N+2}\llr^{\dl_{\rm
KT}} \ol\cP^{0,n}_\infty\{N\}_{N+3} \nonumber
\een
which satisfies the following homology regularity condition.

\begin{note} \label{v155} \mar{v155} Any $\dl_{k<N}$-cycle
$\f\in \ol\cP_\infty^{0,n}\{k\}_{k+3}\subset
\ol\cP_\infty^{0,n}\{k+1\}_{k+3}$ is a $\dl_{k+1}$-boundary.
\end{note}

\begin{rem}
The exactness of the complex (\ref{v94}) means that any
$\dl_{k<N}$-cycle $\f\in \cP_\infty^{0,n}\{k\}_{k+3}$, is a
$\dl_{k+2}$-boundary, but not necessary a $\dl_{k+1}$-one.
\end{rem}

$\bullet$ The nilpotentness $\dl_{\rm KT}^2=0$ of the Koszul--Tate
operator (\ref{v92}) is equivalent to complete non-trivial Noether
identities (\ref{v64}) and complete non-trivial $(k\leq N)$-stage
Noether identities
\mar{v93}\beq
\op\sum_{0\leq|\La|} \Delta_{r_k}^{r_{k-1},\La}d_\La
(\op\sum_{0\leq|\Si|} \Delta_{r_{k-1}}^{r_{k-2},\Si}\ol c_{\Si
r_{k-2}}) = - \ol\dl(\op\sum_{0\leq |\Si|,
|\Xi|}h_{r_k}^{(r_{k-2},\Si)(A,\Xi)}\ol c_{\Si r_{k-2}}\ol s_{\Xi
A}). \label{v93}
\eeq
This item means the following.

\begin{prop} Any $\dl_k$-cocycle $\Phi\in
\cP^{0,n}_\infty\{k\}_{k+2}$ is a $k$-stage Noether identity, and
{\it vice versa}.
\end{prop}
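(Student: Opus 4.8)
The plan is to read the Proposition off the nilpotency relation $\dl_{\rm KT}^2=0$ recorded just above it, by induction on the stage $k$. The base step requires nothing new: a $\dl_0$-cocycle in $\cP^{0,n}_\infty[\ol{VF};Y]_1$ is exactly a one-cycle $\Phi$ (\ref{0712}) of the complex (\ref{v042}), and (\ref{0713}) says verbatim that its coefficients form a (zero-stage) Noether identity of $\dl L$; the one-stage case is the content of (\ref{41f9})--(\ref{v79}) together with Lemma \ref{v134'}. So I take $k\ge 1$, assume the statement at all lower stages, and carry out the inductive step.

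The first move is the antifield-number stratification. Since the top-stage antifield $\ol c_{r_k}$ has antifield number $k+2$ and no other monomial in the generating basis of $\ol\cP^*_\infty\{k\}$ formed with $\ol c_{r_k}$ reaches that number, every $\Phi\in\ol\cP^{0,n}_\infty\{k\}_{k+2}$ decomposes uniquely as $\Phi=\sum_{0\le|\La|}G^{r_k,\La}\,\ol c_{\La r_k}\,\om+\Phi''$ with $G^{r_k,\La}\in\cS^0_\infty[F;Y]$ and $\Phi''\in\ol\cP^{0,n}_\infty\{k-1\}_{k+2}$. Now write $\dl_k=\dl_{k-1}+\rdr^{r_k}\Delta_{r_k}$ as in (\ref{v92}). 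Using that $\dl_{k-1}$ differentiates only the antifields of stage $<k$ (so it annihilates the coefficients $G^{r_k,\La}$ and the volume form, while $\rdr^{r_k}$ annihilates $\Phi''$) and that the Koszul--Tate operators commute with the total derivatives $d_\La$, one obtains, up to the sign bookkeeping for right graded derivations, the identity $\dl_k\Phi=\sum_{0\le|\La|}G^{r_k,\La}(d_\La\Delta_{r_k})\,\om+\dl_{k-1}\Phi''$.

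Next I would insert the explicit form (\ref{v92'}) of $\Delta_{r_k}$ into this identity and sort the vanishing of $\dl_k\Phi$ stratum by stratum according to the antifield content of $\ol\cP^{0,n}_\infty\{k-1\}_{k+1}$. The stratum linear in $\ol c_{r_{k-1}}$ is clean and, after substituting the leading part $\sum\Delta_{r_{k-1}}^{r_{k-2},\Si}\ol c_{\Si r_{k-2}}$ of $\Delta_{r_{k-1}}$, reproduces a relation of the form (\ref{v93}) for the $\cS^0_\infty[F;Y]$-linear prolongation $\sum G^{r_k,\La}d_\La\Delta_{r_k}$ of the complete $k$-stage generators, the right-hand $h$-terms of (\ref{v93}) being supplied by $-\ol\dl$ of the $\ol s_A$-carrying part of $\Phi''$; the lower strata are disposed of by the inductive hypothesis applied to the components of $\Phi''$. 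By the factorization property (\ref{v80'}) this is precisely what it means for $\Phi$ to be a $k$-stage Noether identity, which gives one implication. For the converse one runs the computation backwards: a $k$-stage Noether identity supplies coefficients $G^{r_k,\La}$ from which a density $\Phi\in\ol\cP^{0,n}_\infty\{k\}_{k+2}$ is reassembled, but the reassembly only forces $\dl_k\Phi$ to be a $\dl_{k-1}$-exact --- hence $\ol\dl$-exact --- $\dl_k$-cycle, not to vanish.

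I expect closing this gap in the converse to be the decisive step, and it is exactly where the regularity of the Koszul--Tate complex is needed: the homology regularity Condition \ref{v155} (equivalently, the one- and two-exactness of Lemmas \ref{41l2} and \ref{v139'} and their higher-stage iterations built into (\ref{v94})) forces every $\ol\dl$-exact $\dl_k$-cycle to be a $\dl_k$-boundary, so that subtracting that boundary from $\Phi$ yields a genuine $\dl_k$-cocycle representing the same Noether identity. The remainder --- decomposing by antifield number, tracking the order in which the right derivations $\rdr^{r_k}$ act, and collecting the signs --- is routine, being of the same nature as the computations already performed in the proofs of Lemmas \ref{41l2}, \ref{v134'} and \ref{v139'}.
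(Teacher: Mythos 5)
Your forward direction is essentially the paper's: writing $\Phi=G+H$ as in (\ref{v156'}) and extracting the $\ol c_{\Si r_{k-1}}$-dependent stratum of the condition $\dl_k\Phi=0$ gives exactly (\ref{v145'}), which is by definition the $k$-stage Noether identity; the induction on $k$ and the appeal to an inductive hypothesis for the lower strata are superfluous, since only that stratum is needed. The genuine problem is in your converse. You assert that the reassembly forces $\dl_k\Phi$ to be ``$\dl_{k-1}$-exact, hence $\ol\dl$-exact,'' and then invoke the regularity condition in the form ``every $\ol\dl$-exact $\dl_k$-cycle is a $\dl_k$-boundary.'' Neither step works as stated: $\dl_{k-1}$-exactness of the residual is precisely what has to be established, not a consequence of the reassembly (and it would not imply $\ol\dl$-exactness anyway), while the statement you quote, applied to the residual $\dl_k\Phi$, is vacuous --- $\dl_k\Phi$ is tautologically a $\dl_k$-boundary, namely of $\Phi$ itself, and subtracting $\Phi$ gives nothing. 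The form of Condition \ref{v155} you quote is the one used in Lemma \ref{v134'} and in the neighbouring statement that trivial identities are $\dl_k$-boundaries, not the one this Proposition needs.

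What the paper actually does in the converse: once (\ref{v145'}) holds, the residual $\dl_k(G+H)$ lies in $\ol\cP^{0,n}_\infty\{k-2\}_{k+1}$ (a glance at (\ref{v92'})); since it contains no antifields of stages $k-1$ and $k$, nilpotency makes it a $\dl_{k-2}$-cycle, so Condition \ref{v155}, applied at stage $k-2$, yields a $\dl_{k-1}$-boundary, $\dl_k(G+H)=\dl_{k-1}\Psi$ with $\Psi\in\ol\cP^{0,n}_\infty\{k-1\}_{k+2}$. Because $\Psi$ involves no $\ol c_{r_k}$, one has $\dl_k\Psi=\dl_{k-1}\Psi$, so $\Phi-\Psi$ is a genuine $\dl_k$-cycle; and because $\dl_{k-1}\Psi$ contributes nothing to the $\ol c_{\Si r_{k-1}}$-dependent stratum, the cycle condition for $\Phi-\Psi$ still contains (\ref{v145'}) as that stratum, i.e. the corrected cocycle encodes the same identity. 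These two points --- at which stage the homology regularity condition is applied, and why the correction term neither destroys the cycle property nor alters the identity --- are the substance of the converse, and they are missing from your argument as written.
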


\begin{proof} Any $(k+2)$-chain $\Phi\in \cP^{0,n}_\infty\{k\}_{k+2}$ takes the form
\mar{v156'}\beq
 \Phi= G+H=\op\sum_{0\leq|\La|} G^{r_k,\La}\ol c_{\La r_k}\om +
 \op\sum_{0\leq \Si, 0\leq\Xi}(H^{(A,\Xi)(r_{k-1},\Si)}\ol s_{\Xi
A}\ol c_{\Si r_{k-1}}+...)\om. \label{v156'}
\eeq
If it is a $\dl_k$-cycle, then
\mar{v145'}\beq
 \op\sum_{0\leq|\La|} G^{r_k,\La}d_\La (\op\sum_{0\leq|\Si|}
\Delta_{r_k}^{r_{k-1},\Si}\ol c_{\Si r_{k-1}}) +
\ol\dl(\op\sum_{0\leq \Si, 0\leq\Xi}H^{(A,\Xi)(r_{k-1},\Si)}\ol
s_{\Xi A}\ol c_{\Si r_{k-1}})=0 \label{v145'}
\eeq
are the $k$-stage Noether identities. Conversely, let the
condition (\ref{v145'}) hold. Then it can be extended to a cycle
condition as follows. It is brought into the form
\be
&& \dl_k(\op\sum_{0\leq|\La|} G^{r_k,\La}\ol c_{\La r_k} +
\op\sum_{0\leq \Si, 0\leq\Xi}H^{(A,\Xi)(r_{k-1},\Si)}\ol
s_{\Xi A}\ol c_{\Si r_{k-1}})=\\
&& \qquad  -\op\sum_{0\leq|\La|} G^{r_k,\La}d_\La h_{r_k}
+\op\sum_{0\leq \Si, 0\leq\Xi}H^{(A,\Xi)(r_{k-1},\Si)}\ol s_{\Xi
A}d_\Si \Delta_{r_{k-1}}.
\ee
A glance at the expression (\ref{v92'}) shows that a term in the
right-hand side of this equality belongs to
$\cP^{0,n}_\infty\{k-2\}_{k+1}$. It is a $\dl_{k-2}$-cycle then a
$\dl_{k-1}$-boundary $\dl_{k-1}\Psi$ in accordance with Condition
\ref{v155}. Then the equality (\ref{v145'}) is a $\ol c_{\Si
r_{k-1}}$-dependent part of a cycle condition
\be
\dl_k(\op\sum_{0\leq|\La|} G^{r_k,\La}\ol c_{\La r_k} +
\op\sum_{0\leq \Si, 0\leq\Xi}H^{(A,\Xi)(r_{k-1},\Si)}\ol s_{\Xi
A}\ol c_{\Si r_{k-1}} -\Psi)=0,
\ee
but $\dl_k\Psi$ does not make a contribution to this condition.
\end{proof}

\begin{prop}
Any trivial $k$-stage Noether identity is a $\dl_k$-boundary
$\Phi\in \cP^{0,n}_\infty\{k\}_{k+2}$.
\end{prop}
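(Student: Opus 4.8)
The plan is to reverse the construction carried out in the preceding Proposition. A trivial $k$-stage Noether identity is, by the definition introduced just before Lemma \ref{v134'} (and its higher-stage analogues), a two-cycle-type chain $\Phi\in \cP^{0,n}_\infty\{k\}_{k+2}$ whose generating summand $G$ vanishes on $\Ker\dl L$, i.e.\ whose $G$-part is $\ol\dl$-exact, while the remaining terms are free. So first I would write such a $\Phi$ in the canonical form (\ref{v156'}), $\Phi=G+H$, with $G^{r_k,\La}=\ol\dl\Psi^{r_k,\La}$ for some $\Psi^{r_k,\La}\in \ol\cP^{0,n}_\infty\{k-1\}$ (this is exactly what ``vanishes on $\Ker\dl L$'' buys us, via the one-exactness part of the Koszul--Tate complex, i.e.\ $H_1(\ol\dl)$ being generated by the $\Delta_r$ and the lower homologies vanishing).

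Next, mimicking the computation in the proof of Lemma \ref{v134'} (equations (\ref{v169'})--(\ref{41f12})) but one antifield-number higher, I would split off a $\dl_k$-boundary: set $\widetilde\Psi=\sum \Psi^{r_k,\La}\ol c_{\La r_k}\om$ (or the appropriate higher-stage antifield), so that $\dl_k\widetilde\Psi=\Phi+\si$, where $\si$ collects the correction terms produced by applying $\dl_k$ to $\widetilde\Psi$ (the $d_\La\Delta_r$-type and $h_{r_k}$-type contributions). The point is that $\si$ lies in $\cP^{0,n}_\infty\{k-2\}_{k+1}$ by the structure of the Koszul--Tate operator (\ref{v92'}), and, because $\Phi$ was a cycle, $\si$ is itself a $\dl_{k-2}$-cycle — hence, by the homology regularity Condition \ref{v155} together with the two-exactness/one-exactness already established, $\si$ is a $\dl_{k-1}$-boundary, $\si=\dl_{k-1}\chi$. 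Then $\Phi=\dl_k\widetilde\Psi-\dl_k\chi=\dl_k(\widetilde\Psi-\chi)$ (using that $\dl_{k-1}$ and $\dl_k$ agree on the relevant submodule, as in the proof above), which exhibits $\Phi$ as the desired $\dl_k$-boundary.

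The main obstacle — and the step I would spend the most care on — is the bookkeeping of antifield numbers and the verification that the correction term $\si$ genuinely falls into $\ol\cP^{0,n}_\infty\{k-2\}_{k+1}$ so that Condition \ref{v155} applies to it; this is where the precise form (\ref{v92'}) of the higher-stage generators $\Delta_{r_k}$, with their $h_{r_k}$-corrections bilinear in lower antifields, is essential. A secondary subtlety is making precise the definition of ``trivial'' at general stage $k$ (boundary, or $G$-part vanishing on $\Ker\dl L$) and checking that both cases are covered uniformly — the boundary case is immediate, and the $\ol\dl$-exact-$G$ case is the one the argument above addresses. Everything else is the routine replay of the Lemma \ref{v134'} computation shifted up in degree, and can be stated without grinding through the indices.
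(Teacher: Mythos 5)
Your argument is essentially the paper's own proof: dispose of the case where $\Phi$ is already a $\dl_k$-boundary, write $G=\ol\dl\Psi$, split off $\dl_k\Psi$, and absorb the remainder using the homology regularity Condition \ref{v155}. The one correction needed is an off-by-one in your bookkeeping of the remainder: $\si=(\ol\dl-\dl_k)\Psi+H$ has antifield number $k+2$ and no dependence on $\ol c_{r_k}$, so it lies in $\ol\cP^{0,n}_\infty\{k-1\}_{k+2}$ (not $\ol\cP^{0,n}_\infty\{k-2\}_{k+1}$); it is a $\dl_{k-1}$-cycle (not a $\dl_{k-2}$-cycle), since $\dl_k$ restricts to $\dl_{k-1}$ on that submodule, and Condition \ref{v155} applied at stage $k-1$ makes it a $\dl_k$-boundary (not merely a $\dl_{k-1}$-boundary). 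With those indices corrected, the conclusion $\Phi=\dl_k(\Psi-\chi)$ follows exactly as you describe.
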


\begin{proof}
The $k$-stage Noether identities (\ref{v145'}) are trivial either
if a $\dl_k$-cycle $\Phi$ (\ref{v156'}) is a $\dl_k$-boundary or
its summand $G$ vanishes on $\Ker \dl L$. Let us show that, if the
summand $G$ of $\Phi$ (\ref{v156'}) is $\ol\dl$-exact, then $\Phi$
is a $\dl_k$-boundary. If $G=\ol\dl \Psi$, one can write
\be
\Phi=\dl_k\Psi +(\ol \dl-\dl_k)\Psi + H.
\ee
Hence, the $\dl_k$-cycle condition reads
\be
\dl_k\Phi=\dl_{k-1}((\ol\dl-\dl_k)\Psi + H)=0.
\ee
By virtue of Note \ref{v155}, any $\dl_{k-1}$-cycle $\f\in
\ol\cP^{0,n}_\infty\{k-1\}_{k+2}$ is $\dl_k$-exact. Then $(\ol
\dl-\dl_k)\Psi + H$ is a $\dl_k$-boundary. Consequently, $\Phi$
(\ref{v156'}) is $\dl_k$-exact.
\end{proof}

Note that all non-trivial $k$-stage Noether identities
(\ref{v145'}), by assumption, factorize as
\be
\Phi= \op\sum_{0\leq|\Xi|} \Phi^{r_k,\Xi} d_\Xi \Delta_{r_k}\om,
\qquad \Phi^{r_1,\Xi}\in \cS^0_\infty[F;Y],
\ee
through the complete ones (\ref{v93}).

It may happen that a graded Lagrangian system possesses
non-trivial Noether identities of any stage. However, we restrict
our consideration to $N$-reducible Lagrangian systems.

\section{Second Noether theorems}

Different variants of the second Noether theorem have been
suggested in order to relate reducible Noether identities and
gauge symmetries \cite{barn,jmp05,jmp09}. The inverse second
Noether Theorem \ref{w35}, that we formulate in homology terms,
associates to the Koszul--Tate complex (\ref{v94}) of non-trivial
Noether identities the cochain sequence (\ref{w108}) with the
ascent operator $\bu$ (\ref{w108'}) whose components are
non-trivial gauge and higher-stage gauge symmetries.

\begin{rem} \label{42n1} \mar{42n1}
Let us use the following notation. Given the DBGA
$\ol\cP^*_\infty\{N\}$ (\ref{v91}), we consider a DBGA
\mar{w5}\beq
\cP^*_\infty\{N\}=\cP^*_\infty[F\op\oplus_Y E_0\op\oplus_Y\cdots
\op\oplus_Y E_N;Y], \label{w5}
\eeq
possessing a local generating basis $(s^A, c^r, c^{r_1}, \ldots,
c^{r_N})$, $[c^{r_k}]=([\ol c_{r_k}]+1){\rm mod}\,2$, and a DBGA
\mar{w6}\beq
P^*_\infty\{N\}=\cP^*_\infty[\ol{VF}\op\oplus_Y E_0\oplus\cdots
\op\oplus_Y E_N \op\oplus_Y \ol E_0\op\oplus_Y\cdots\op\oplus_Y
\ol E_N;Y] \label{w6}
\eeq
with a local generating basis $(s^A, \ol s_A, c^r, c^{r_1},
\ldots, c^{r_N},\ol c_r, \ol c_{r_1}, \ldots, \ol c_{r_N})$.
Following the physical terminology, we call their elements
$c^{r_k}$ the $k$-stage ghosts of ghost number gh$[c^{r_k}]=k+1$
and antifield number ${\rm Ant}[c^{r_k}]=-(k+1)$. A
$C^\infty(X)$-module $\cC^{(k)}$ of $k$-stage ghosts is the
density-dual of a module $\cC_{(k)}$ of $k$-stage antifields. The
DBGAs $\ol\cP^*_\infty\{N\}$ (\ref{v91}) and $\cP^*_\infty\{N\}$
(\ref{w5}) are subalgebras of $P^*_\infty\{N\}$ (\ref{w6}). The
Koszul--Tate operator $\dl_{\rm KT}$ (\ref{v92}) is naturally
extended to a graded derivation of a DBGA $P^*_\infty\{N\}$.
\end{rem}

\begin{rem} \label{42n10} \mar{42n10} Any
graded differential form $\f\in \cS^*_\infty[F;Y]$ and any finite
tuple $(f^\La)$, $0\leq |\La|\leq k$, of local graded functions
$f^\La\in \cS^0_\infty[F;Y]$ obey the following relations
\cite{book09}:
\mar{qq1}\ben
&& \op\sum_{0\leq |\La|\leq k} f^\La d_\La \f\w \om= \op\sum_{0\leq
|\La|}(-1)^{|\La|}d_\La (f^\La)\f\w \om +d_H\si,
\label{qq1a}\\
&& \op\sum_{0\leq |\La|\leq k} (-1)^{|\La|}d_\La(f^\La \f)=
\op\sum_{0\leq |\La|\leq k} \eta (f)^\La d_\La \f, \nonumber\\
&& \eta (f)^\La = \op\sum_{0\leq|\Si|\leq k-|\La|}(-1)^{|\Si+\La|}
\frac{(|\Si+\La|)!}{|\Si|!|\La|!} d_\Si f^{\Si+\La},
\label{qq1c}\\
&& \eta(\eta(f))^\La=f^\La. \label{qq1d}
\een
\end{rem}

\begin{theo} \label{w35} \mar{w35} Given a Koszul--Tate complex (\ref{v94}),
the module of graded densities $\cP_\infty^{0,n}\{N\}$ is
decomposed into a cochain sequence
\mar{w108,'}\ben
&& 0\to \cS^{0,n}_\infty[F;Y]\ar^{\bu}
\cP^{0,n}_\infty\{N\}^1\ar^{\bu}
\cP^{0,n}_\infty\{N\}^2\ar^{\bu}\cdots, \label{w108}\\
&& \bu=u + u^{(1)}+\cdots + u^{(N)}= u^A\dr_A + u^r\dr_r +\cdots  +
u^{r_{N-1}}\dr_{r_{N-1}}, \label{w108'}
\een
graded in a ghost number. Its ascent operator $\bu$ (\ref{w108'})
is an odd graded derivation of ghost number 1 where $u$
(\ref{w33}) is a variational symmetry of a graded Lagrangian $L$
and the graded derivations $u_{(k)}$ (\ref{w38}), $k=1,\ldots, N$,
obey the relations (\ref{w34}).
\end{theo}

\begin{proof} Given the Koszul--Tate operator (\ref{v92}), let us extend an original
grade Lagrangian $L$ to a Lagrangian
\mar{w8}\beq
L_e=L+L_1=L + \op\sum_{0\leq k\leq N} c^{r_k}\Delta_{r_k}\om=L
+\dl_{\rm KT}( \op\sum_{0\leq k\leq N} c^{r_k}\ol c_{r_k}\om)
\label{w8}
\eeq
of zero antifield number. It is readily observed that a
Koszul--Tate operator $\dl_{\rm KT}$ is an exact symmetry of the
extended Lagrangian $L_e\in P^{0,n}_\infty\{N\}$ (\ref{w8}). Since
a graded derivation $\dl_{\rm KT}$ is vertical, it follows from
the first variational formula (\ref{g107}) that
\mar{w16}\ben
&& \left[\frac{\op\dl^\lto \cL_e}{\dl \ol s_A}\cE_A
+\op\sum_{0\leq k\leq N} \frac{\op\dl^\lto \cL_e}{\dl \ol
c_{r_k}}\Delta_{r_k}\right]\om =  \left[\up^A\cE_A +
\op\sum_{0\leq k\leq N}\up^{r_k}\frac{\dl
\cL_e}{\dl c^{r_k}}\right]\om= d_H\si, \label{w16} \\
&& \up^A= \frac{\op\dl^\lto \cL_e}{\dl \ol s_A}=u^A+w^A
=\op\sum_{0\leq|\La|} c^r_\La\eta(\Delta^A_r)^\La +
 \op\sum_{1\leq i\leq N}\op\sum_{0\leq|\La|}
c^{r_i}_\La\eta(\op\dr^\lto{}^A(h_{r_i}))^\La, \nonumber\\
&& \up^{r_k}=\frac{\op\dl^\lto \cL_e}{\dl \ol c_{r_k}} =u^{r_k}+
w^{r_k}= \op\sum_{0\leq|\La|}
c^{r_{k+1}}_\La\eta(\Delta^{r_k}_{r_{k+1}})^\La +
\op\sum_{k+1<i\leq N} \op\sum_{0\leq|\La|}
c^{r_i}_\La\eta(\op\dr^\lto{}^{r_k}(h_{r_i}))^\La. \nonumber
\een
The equality (\ref{w16}) is split into a set of equalities
\mar{w19,20}\ben
&& \frac{\op\dl^\lto (c^r\Delta_r)}{\dl \ol s_A}\cE_A \om
=u^A\cE_A \om=d_H\si_0, \label{w19}\\
&&  \left[\frac{\op\dl^\lto (c^{r_k}\Delta_{r_k})}{\dl \ol s_A}\cE_A
+\op\sum_{0\leq i<k} \frac{\op\dl^\lto (c^{r_k}\Delta_{r_k})}{\dl
\ol c_{r_i}}\Delta_{r_i}\right] \om= d_H\si_k,  \label{w20}
\een
where $k=1,\ldots,N$. A glance at the equality (\ref{w19}) shows
that, by virtue of the first variational formula (\ref{g107}), an
odd graded derivation
\mar{w33}\beq
u= u^A\frac{\dr}{\dr s^A}, \qquad u^A =\op\sum_{0\leq|\La|}
c^r_\La\eta(\Delta^A_r)^\La, \label{w33}
\eeq
of $\cP^0\{0\}$ is a variational symmetry of a graded Lagrangian
$L$. Every equality (\ref{w20}) falls into a set of equalities
graded by the polynomial degree in antifields. Let us consider
that of them linear in antifields $\ol c_{r_{k-2}}$. We have
\be
&& \frac{\op\dl^\lto}{\dl \ol
s_A}(c^{r_k}\op\sum_{0\leq|\Si|,|\Xi|}h_{r_k}^{(r_{k-2},\Si)(A,\Xi)}
\ol
c_{\Si r_{k-2}}\ol s_{\Xi A})\cE_A\om + \\
&& \qquad \frac{\op\dl^\lto}{\dl \ol
c_{r_{k-1}}}(c^{r_k}\op\sum_{0\leq|\Si|}\Delta_{r_k}^{r'_{k-1},\Si}\ol
c_{\Si r'_{k-1}})\op\sum_{0\leq|\Xi|}
\Delta_{r_{k-1}}^{r_{k-2},\Xi}\ol c_{\Xi r_{k-2}}\om= d_H\si_k.
\ee
This equality is brought into the form
\be
 \op\sum_{0\leq|\Xi|}
(-1)^{|\Xi|}d_\Xi(c^{r_k}\op\sum_{0\leq|\Si|}
h_{r_k}^{(r_{k-2},\Si)(A,\Xi)} \ol c_{\Si r_{k-2}})\cE_A \om +
u^{r_{k-1}}\op\sum_{0\leq|\Xi|} \Delta_{r_{k-1}}^{r_{k-2},\Xi}\ol
c_{\Xi r_{k-2}} \om= d_H\si_k.
\ee
Using the relation (\ref{qq1a}), we obtain an equality
\mar{ddd1}\ben
&& \op\sum_{0\leq|\Xi|} c^{r_k}\op\sum_{0\leq|\Si|}
h_{r_k}^{(r_{k-2},\Si)(A,\Xi)} \ol c_{\Si r_{k-2}} d_\Xi\cE_A\om +
 u^{r_{k-1}}\op\sum_{0\leq|\Xi|}
\Delta_{r_{k-1}}^{r_{k-2},\Xi}\ol c_{\Xi r_{k-2}}\om= d_H\si'_k.
\label{ddd1}
\een
A variational derivative of both its sides with respect to $\ol
c_{r_{k-2}}$ leads to a relation
\mar{w34}\beq
\op\sum_{0\leq|\Si|} d_\Si u^{r_{k-1}}\dr_{r_{k-1}}^\Si
u^{r_{k-2}} =\ol\dl(\al^{r_{k-2}}),\qquad \al^{r_{k-2}} =
-\op\sum_{0\leq|\Si|} \eta(h_{r_k}^{(r_{k-2})(A,\Xi)})^\Si
d_\Si(c^{r_k} \ol s_{\Xi A}), \label{w34}
\eeq
which an odd graded derivation
\mar{w38}\beq
u^{(k)}= u^{r_{k-1}}\dr_{r_{k-1}}=\op\sum_{0\leq|\La|}
c^{r_k}_\La\eta(\Delta^{r_{k-1}}_{r_k})^\La\dr_{r_{k-1}}, \quad
k=1,\ldots,N, \label{w38}
\eeq
satisfies. Graded derivations $u$ (\ref{w33}) and $u^{(k)}$
(\ref{w38}) are assembled into the ascent operator $\bu$
(\ref{w108'}) of the cochain sequence (\ref{w108}).
\end{proof}

A glance at the expression (\ref{w33}) shows that a variational
symmetry $u$ is a linear differential operator on a
$C^\infty(X)$-module $\cC^{(0)}$ of ghosts. Therefore, it is a
gauge symmetry of a graded Lagrangian $L$ which is associated to
the complete Noether identities (\ref{v64}) \cite{jmp09,book09}.
This association is unique due to the following direct second
Noether theorem.

\begin{theo} \mar{825} \label{825}
A variational derivative of the equality (\ref{w19}) with respect
to ghosts $c^r$ leads to the equality
\be
\dl_r(u^A\cE_A
\om)=\op\sum_{0\leq|\La|}(-1)^{|\La|}d_\La(\eta(\Delta^A_r)^\La\cE_A)=
\op\sum_{0\leq|\La|}(-1)^{|\La|} \eta(\eta(\Delta^A_r))^\La
d_\La\cE_A=0,
\ee
which reproduces the complete Noether identities (\ref{v64}) by
means of the relation (\ref{qq1d}).
\end{theo}

Moreover, the gauge symmetry $u$ (\ref{w33}) is complete in the
following sense. Let
\be
\op\sum_{0\leq|\Xi|} C^RG^{r,\Xi}_R d_\Xi \Delta_r \om
\ee
be some projective $C^\infty(X)$-module of finite rank of
non-trivial Noether identities (\ref{xx2}) parameterized by the
corresponding ghosts $C^R$. We have the equalities
\be
&& 0=\op\sum_{0\leq|\Xi|} C^RG^{r,\Xi}_R d_\Xi
(\op\sum_{0\leq|\La|}\Delta_r^{A,\La}d_\La \cE_A)
\om=\op\sum_{0\leq|\La|}(\op\sum_{0\leq|\Xi|}\eta(G^r_R)^\Xi
C^R_\Xi)
\Delta_r^{A,\La}d_\La \cE_A\om+d_H(\si)=\\
&& \qquad
\op\sum_{0\leq|\La|}(-1)^{|\La|}d_\La(\Delta_r^{A,\La}\op\sum_{0\leq|\Xi|}\eta(G^r_R)^\Xi
C^R_\Xi)\cE_A \om +d_H\si =\\
&& \qquad
\op\sum_{0\leq|\La|}\eta(\Delta_r^A)^\La
d_\La(\op\sum_{0\leq|\Xi|}\eta(G^r_R)^\Xi C^R_\Xi)\cE_A \om
+d_H\si=\\
&&\qquad \op\sum_{0\leq|\La|}u_r^{A,\La}d_\La(\op\sum_{0\leq|\Xi|}\eta(G^r_R)^\Xi
C^R_\Xi)\cE_A \om +d_H\si.
\ee
It follows that a graded derivation
\be
d_\La(\op\sum_{0\leq|\Xi|}\eta(G^r_R)^\Xi C^R_\Xi)u_r^{A,\La}\dr_A
\ee
is a variational symmetry of a graded Lagrangian $L$ and,
consequently, its gauge symmetry parameterized by ghosts $C^R$. It
factorizes through the gauge symmetry (\ref{w33}) by putting
ghosts
\be
c^r= \op\sum_{0\leq|\Xi|}\eta(G^r_R)^\Xi C^R_\Xi.
\ee

Turn now to the relation (\ref{w34}). For $k=1$, it takes the form
\be
\op\sum_{0\leq|\Si|} d_\Si u^r\dr_r^\Si u^A =\ol \dl(\al^A)
\ee
of a first-stage gauge symmetry condition on $\Ker \dl L$ which
the non-trivial gauge symmetry $u$ (\ref{w33}) satisfies.
Therefore, one can treat an odd graded derivation
\be
u^{(1)}= u^r\frac{\dr}{\dr c^r}, \qquad u^r=\op\sum_{0\leq|\La|}
c^{r_1}_\La\eta(\Delta^r_{r_1})^\La,
\ee
as a first-stage gauge symmetry associated to the complete
first-stage Noether identities
\be
 \op\sum_{0\leq|\La|} \Delta_{r_1}^{r,\La}d_\La
(\op\sum_{0\leq|\Si|} \Delta_r^{A,\Si}\ol s_{\Si A}) = -
\ol\dl(\op\sum_{0\leq |\Si|, |\Xi|}h_{r_1}^{(B,\Si)(A,\Xi)}\ol
s_{\Si B}\ol s_{\Xi A}).
\ee

Iterating the arguments, one comes to the relation (\ref{w34})
which provides a $k$-stage gauge symmetry condition, associated to
the complete $k$-stage Noether identities (\ref{v93}).

\begin{theo} \mar{826} \label{826}
Conversely, given the $k$-stage gauge symmetry condition
(\ref{w34}), a variational derivative of the equality (\ref{ddd1})
with respect to ghosts $c^{r_k}$ leads to an equality, reproducing
the $k$-stage Noether identities (\ref{v93})
 by means of the relations (\ref{qq1c}) and (\ref{qq1d}).
\end{theo}

This is a higher-stage extension of the direct second Noether
theorem to reducible gauge symmetries. The odd graded derivation
$u_{(k)}$ (\ref{w38}) is called the $k$-stage gauge symmetry. It
is complete as follows. Let
\be
\op\sum_{0\leq|\Xi|} C^{R_k}G^{r_k,\Xi}_{R_k} d_\Xi \Delta_{r_k}
\om
\ee
be a projective $C^\infty(X)$-module of finite rank of non-trivial
$k$-stage Noether identities (\ref{xx2}) factorizing through the
complete ones (\ref{v93}) and parameterized by the corresponding
ghosts $C^{R_k}$. One can show that it defines a $k$-stage gauge
symmetry factorizing through $u^{(k)}$ (\ref{w38}) by putting
$k$-stage ghosts
\be
c^{r_k}= \op\sum_{0\leq|\Xi|}\eta(G^{r_k}_{R_k})^\Xi C^{R_k}_\Xi.
\ee

The odd graded derivation $u_{(k)}$ (\ref{w38}) is said to be the
complete non-trivial $k$-stage gauge symmetry of a Lagrangian $L$.
Thus, components of the ascent operator $\bu$ (\ref{w108'}) are
complete non-trivial gauge and higher-stage gauge symmetries.

\section{Appendix A.}

We quote the following generalization of the abstract de Rham
theorem \cite{hir}. Let
\be
0\to S\ar^h S_0\ar^{h^0} S_1\ar^{h^1}\cdots\ar^{h^{p-1}}
S_p\ar^{h^p} S_{p+1}, \qquad p>1,
\ee
be an exact sequence of sheaves of Abelian groups over a
paracompact topological space $Z$, where the sheaves $S_q$, $0\leq
q<p$, are acyclic, and let
\beq
0\to \G(Z,S)\ar^{h_*} \G(Z,S_0)\ar^{h^0_*}
\G(Z,S_1)\ar^{h^1_*}\cdots\ar^{h^{p-1}_*} \G(Z,S_p)\ar^{h^p_*}
\G(Z,S_{p+1}) \label{+130}
\eeq
be the corresponding cochain complex of sections of these sheaves.

\begin{theo} \label{+132} \mar{+132}
The $q$-cohomology groups of the cochain complex (\ref{+130}) for
$0\leq q\leq p$ are isomorphic to the cohomology groups $H^q(Z,S)$
of $Z$ with coefficients in the sheaf $S$ \cite{jmp,tak2}.
\end{theo}

\section{Appendix B.}

The proof of Theorems \ref{v11} and \ref{v11'} falls into the
following three steps \cite{cmp04,book09,ijgmmp07}.

(I) We start with showing that the complexes (\ref{g111}) --
(\ref{g112}) are locally exact.

\begin{lem} \label{0465} \mar{0465}
If $Y=\Bbb R^{n+k}\to \Bbb R^n$, the complex (\ref{g111}) is
acyclic.
\end{lem}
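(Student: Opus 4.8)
The plan is to prove a local Grassmann-graded version of the algebraic Poincaré lemma, following the standard homotopy-operator strategy for the variational bicomplex but now carrying along the odd generators $c^a_\La$ as formal parameters. First I would set up notation: over $Y=\Bbb R^{n+k}\to\Bbb R^n$ the DBGA $\cS^*_\infty[F;Y]$ restricted to the canonical coordinate chart is a free $\cS^0_\infty[F;Y]$-module generated by $dx^\la$ and the contact forms $\theta^A_\La$, with local generating basis $s^A=(y^i,c^a)$, and the differentials $d_H,d_V$ act by the explicit formulas recorded just before diagram (\ref{7}). Since everything in sight is a polynomial-type expression in the $c^a_\La$ with $C^\infty$ coefficients in $(x,y^i_\La)$, the whole complex (\ref{g111}) decomposes as a (finite, because $m<\infty$) direct sum over monomials in the odd variables, and on each summand the surviving structure is exactly the ordinary horizontal complex of the even variational bicomplex on $\Bbb R^{n+k}\to\Bbb R^n$ — with the extra bookkeeping that a sign $(-1)$ is attached whenever $d_\la$ or a homotopy operator passes an odd generator.

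The key steps, in order: (i) recall the classical fact that the horizontal subcomplex $\cO^{0,0}_\infty\to\cdots\to\cO^{0,n}_\infty\to$ of the variational bicomplex on a contractible bundle $\Bbb R^{n+k}\to\Bbb R^n$ is exact away from the constants, by an explicit homotopy operator (the "$h$"-homotopy of Tulczyjew/Anderson); (ii) write down the graded analogue of this homotopy operator on $\cS^{0,*}_\infty[F;Y]$ by the same formula, treating $c^a_\La$ on the same footing as $y^i_\La$ but inserting the Koszul sign each time the operator is commuted past an odd jet coordinate — concretely, in the contracting homotopy one integrates along the fibre dilation $s^A_\La\mapsto ts^A_\La$ and the only modification relative to the even case is that the total derivative $d_\la=\dr_\la+\sum s^A_{\la+\La}\dr^\La_A$ already has the right graded form; (iii) check the homotopy identity $d_H\circ K + K\circ d_H = \mathrm{id} - (\text{projection onto }h_0\si,\ \si\text{ closed on }\Bbb R^{n+k})$ by a direct computation in which the signs conspire exactly as in the ungraded case because each odd generator $c^a_\La$ is moved past consistently; (iv) conclude acyclicity of (\ref{g111}) in positive horizontal degree, and that in degree $0$ the $d_H$-closed elements are the constants $\Bbb R$ (since a $d_H$-closed graded function on a contractible base with no $x$-dependence in its differential is constant). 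One then reads off from the same homotopy, restricted to one-contact forms, the exactness of (\ref{g112}) using that $\vr$ is a projector and $\vr\circ d_H=0$, but the Lemma as stated only asks for the acyclicity of (\ref{g111}), so I would stop there.

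The main obstacle I anticipate is purely one of sign/grading bookkeeping in step (iii): one must verify that the contracting homotopy $K$ genuinely is $\cS^0_\infty[F;Y]$-graded (i.e. built out of even and odd components of the appropriate parity) and that the Koszul signs picked up when $d_\la$ or $\dr^\La_A$ act across products $c^{a_1}_{\La_1}\cdots c^{a_k}_{\La_k}$ — governed by the rule (\ref{cmp50a}) — cancel correctly against the combinatorial signs $(-1)^{|\La|}$ appearing in $d_H$. A clean way to sidestep most of this is to use the monomial decomposition remarked above: on the subspace of forms whose coefficients are a fixed monomial $c^{a_1}_{\La_1}\cdots c^{a_k}_{\La_k}$ times an even function, $d_H$ is literally the even total differential twisted by a single overall sign depending only on $k$ and on how many total derivatives have acted, so local exactness reduces verbatim to the even statement (the polynomial variational bicomplex case of Remark \ref{polin}), which is already known. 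I would present the proof in this reduced form: invoke the known local exactness of the even horizontal complex, extend scalars by the exterior algebra on the odd jet generators, and note that $d_H$ is a derivation of this extension so that the even homotopy tensored with the identity on the odd part does the job, with the residual homology in top horizontal degree being $\Bbb R$ exactly as in the even case.
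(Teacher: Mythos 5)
Your primary plan (steps (i)--(iv)) is essentially the paper's own argument: the paper splits $\f=\f_0+\wt\f$ by the fibre dilation $s^A_\La\mapsto\la s^A_\La$ applied to \emph{all} generators, even and odd alike, handles $\f_0$ by the ordinary Poincar\'e lemma on $\Bbb R^{n+k}$, and exhibits $\wt\f$ as $d_H\xi$ via the explicit graded homotopy operators (\ref{0473}), (\ref{0474}), (\ref{0477}) of Barnich--Brandt--Henneaux and Olver. Had you stopped there, this would be a match.

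However, the ``reduced form'' in which you say you would actually present the proof contains a genuine gap. The complex (\ref{g111}) does \emph{not} decompose as a direct sum over monomials in the odd variables, and the even homotopy tensored with the identity on the odd part does \emph{not} satisfy the homotopy identity. The reason is that $d_H=dx^\la\w d_\la$ with $d_\la=\dr_\la+\sum s^A_{\la+\La}\dr^\La_A$ differentiates the odd jet generators nontrivially: $d_\la c^a_\La=c^a_{\la+\La}$, so $d_H$ shifts the multi-indices of the odd factors and only preserves the total Grassmann degree, not the individual monomial. Concretely, $\f=c^a_\la\,dx^\la\in\cS^{0,1}_\infty[F;Y]$ is $d_H$-closed (by symmetry of $c^a_{\mu\la}$) and $d_H$-exact, namely $\f=d_H(c^a)$; but its coefficient has no dependence on the even jet coordinates, so an operator built as ``even homotopy $\otimes$ identity on the odd generators'' annihilates it, and $d_HK\f+Kd_H\f=0\neq\f$. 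The contracting homotopy must dilate the odd jets $c^a_\La$ on exactly the same footing as the $y^i_\La$ --- which is what your steps (ii)--(iii) correctly prescribe and what the paper's formula $\wt\f=\int_0^1\frac{d\la}{\la}\sum s^A_\La\dr^\La_A\f$ does. (Also, the bound on the Grassmann degree is not $m$ but $m$ times the number of multi-indices up to the relevant jet order, since each $c^a_\La$ is an independent odd generator.) A second, smaller omission: acyclicity of (\ref{g111}) includes exactness at the top term $\cS^{0,n}_\infty[F;Y]$ with respect to $\dl$, i.e.\ that every $\dl$-closed density is $h_0\si+d_H\xi$. The uniform identity $d_HK+Kd_H=\mathrm{id}-(\mbox{projection})$ cannot hold in horizontal degree $n$ (it would make every Lagrangian variationally trivial); this case needs the separate top-degree homotopy, the paper's (\ref{0474}) or Olver's (5.107), and your outline does not address it.
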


\begin{proof}
Referring to \cite{barn} for the proof, we summarize a few
formulas. Any horizontal graded form $\f\in \cS^{0,*}_\infty[F;Y]$
admits a decomposition
\mar{0471}\beq
\f=\f_0 + \wt\f, \qquad \wt\f=
\op\int^1_0\frac{d\la}{\la}\op\sum_{0\leq|\La|}s^A_\La
\dr^\La_A\f, \label{0471}
\eeq
where $\f_0$ is an exterior form on $\Bbb R^{n+k}$. Let $\f\in
\cS^{0,m<n}_\infty[F;Y]$ be $d_H$-closed.  Then its component
$\f_0$ (\ref{0471}) is an exact exterior form on $\Bbb R^{n+k}$
and $\wt\f=d_H\xi$, where $\xi$ is given by the following
expressions. Let us introduce an operator
\be
D^{+\nu}\wt\f=\op\int^1_0\frac{d\la}{\la}\sum_{0\leq k}
k\dl^\nu_{(\m_1}\dl^{\al_1}_{\m_2}\cdots\dl^{\al_{k-1}}_{\m_k)}
\la s^A_{(\al_1\ldots\al_{k-1})}
\dr_A^{\m_1\ldots\m_k}\wt\f(x^\m,\la s^A_\La, dx^\m).
\ee
The relation $[D^{+\nu},d_\m]\wt\f=\dl^\nu_\m\wt\f$ holds, and it
leads to a desired expression
\mar{0473}\beq
 \xi=\op\sum_{k=0}\frac{(n-m-1)!}{(n-m+k)!}D^{+\nu} P_k
\dr_\nu\rfloor\wt\f, \qquad P_0=1, \qquad
 P_k=d_{\nu_1}\cdots d_{\nu_k}D^{+\nu_1}\cdots
D^{+\nu_k}. \label{0473}
\eeq
Now let $\f\in \cS^{0,n}_\infty[F;Y]$ be a graded density such
that $\dl\f=0$. Then its component $\f_0$ (\ref{0471}) is an exact
$n$-form on $\Bbb R^{n+k}$ and $\wt\f=d_H\xi$, where $\xi$ is
given by the expression
\mar{0474}\beq
\xi=\op\sum_{|\La|\geq 0}\op\sum_{\Si+\Xi=\La}(-1)^{|\Si|}s^A_\Xi
d_\Si\dr^{\m+\La}_A\wt\f\om_\m. \label{0474}
\eeq
We also quote the homotopy operator (5.107) in \cite{olv} which
leads to the expression
\mar{0477}\ben
&&\xi=\op\int_0^1 I(\f)(x^\m,\la s^A_\La,
dx^\m)\frac{d\la}{\la}, \label{0477}\\
&& I(\f)=\op\sum_{0\leq|\La|}\op\sum_\m
\frac{\La_\m+1}{n-m+|\La|+1} d_\La\left[ \op\sum_{0\leq|\Xi|}
(-1)^\Xi \frac{(\m+\La+\Xi)!}{(\m+\La)!\Xi!}s^A d_\Xi
\dr_A^{\m+\La+\Xi}(\dr_\m\rfloor \f)\right], \nonumber
\een
where $\La!=\La_{\m_1}!\cdots \La_{\m_n}!$, and $\La_\m$ denotes a
number of occurrences of the index $\m$ in $\La$ \cite{olv}. The
graded forms (\ref{0474}) and (\ref{0477}) differ in a $d_H$-exact
graded form.
\end{proof}

\begin{lem} \label{g220} \mar{g220}
If $Y=\Bbb R^{n+k}\to \Bbb R^n$, the complex (\ref{g112}) is
exact.
\end{lem}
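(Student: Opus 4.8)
The plan is to prove exactness of (\ref{g112}) at each of its terms by exhibiting contracting homotopies, in complete analogy with the treatment of (\ref{g111}) in Lemma \ref{0465}; the operators needed are the one-contact counterparts of the even-variable homotopy operators of \cite{olv,ander,barn}, and only the Grassmann-sign bookkeeping has to be supplied. Exactness at the left-hand term $\cS^{1,0}_\infty[F;Y]$ I would argue directly: on the coordinate chart (\ref{j3}) a one-contact, $0$-horizontal graded form of finite jet order $N$ reads $\f=\sum_{0\leq|\La|\leq N}\f^\La_A\,\theta^A_\La$ with $\f^\La_A\in\cS^0_\infty[F;Y]$, and since $d_H\theta^A_\La=dx^\mu\w\theta^A_{\mu+\La}$ the form $d_H\f$ is a combination of the independent monomials $dx^\mu\w\theta^A_\Xi$ whose coefficient is $d_\mu\f^\Xi_A$ plus, when $\mu$ occurs in $\Xi$, the term $\f^{\Xi-\mu}_A$. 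Reading $d_H\f=0$ first in order $|\Xi|=N+1$ kills all coefficients of jet order $N$, and descending in $|\La|$ yields $\f=0$; exactly as in Lemma \ref{0465}, finiteness of the jet order is what makes the descent terminate. Hence $d_H$ is injective on $\cS^{1,0}_\infty[F;Y]$.

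For exactness at $\cS^{1,m}_\infty[F;Y]$ with $0<m<n$, given a $d_H$-closed $\f$ I would restrict to the chart, split off its jet-independent component by the rescaling homotopy of (\ref{0471}) (treated exactly as in Lemma \ref{0465}, via the Poincar\'e lemma on $\Bbb R^{n+k}$), and for the remainder $\wt\f$ apply the operators $D^{+\nu}$ and $P_k$ of (\ref{0473}), now acting on the horizontal-form coefficients of $\wt\f$ and commuting past the contact factors $\theta^A_\La$ up to the Koszul signs of the odd generators $c^a$; the relation $[D^{+\nu},d_\mu]\wt\f=\dl^\nu_\mu\wt\f$ then delivers a $\xi$ with $d_H\xi=\f$. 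Equivalently one may invoke the one-contact homotopy formulas of \cite{barn} verbatim.

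For exactness at $\cS^{1,n}_\infty[F;Y]$ and at $\vr(\cS^{1,n}_\infty[F;Y])$: surjectivity of $\vr$ onto its image is tautological, so only $\Ker\,\vr=\im\,d_H$ at $\cS^{1,n}_\infty[F;Y]$ needs an argument. Since $\vr\circ d_H=0$ one has $\im\,d_H\subseteq\Ker\,\vr$; for the reverse inclusion I would integrate the jet derivatives in $\f=\sum_\La\theta^A_\La\w\f^\La_A$ by parts one index at a time, using identities of the type $d_H(\theta^A_\La\w f\om_\nu)=-\theta^A_{\nu+\La}\w f\om-\theta^A_\La\w(d_\nu f)\om$, which produces precisely the decomposition $\f=\vr(\f)+d_H\psi(\f)$ with an explicit $d_H$-primitive $\psi(\f)$ — this is the content of the construction of $\vr$ preceding (\ref{7}), and it holds for any $Y$. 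Then $\vr\f=0$ forces $\f=d_H\psi(\f)$.

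The main obstacle is the middle step: transplanting the homotopy operators $D^{+\nu},P_k$, which were designed for $0$-contact graded densities, to one-contact forms — that is, checking that they commute correctly with the contact basis $\theta^A_\La$ and that the signs carried by the odd generators $c^a$ remain consistent through the iteration. Structurally nothing new occurs, since $d_H$ preserves the contact degree and acts on $\theta^A_\La$ only through the shift $\theta^A_\La\mapsto dx^\mu\w\theta^A_{\mu+\La}$; this is why the even-variable arguments of \cite{olv,ander} carry over, and one may simply cite \cite{barn,book09} for the explicit one-contact operators.
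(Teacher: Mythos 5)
Your first and last steps are essentially sound: injectivity of $d_H$ on $\cS^{1,0}_\infty[F;Y]$ can be argued by coefficient descent, and at $\cS^{1,n}_\infty[F;Y]$ the integration-by-parts decomposition $\f=\vr(\f)+d_H\psi(\f)$ is exactly what the paper's explicit primitive $\xi=-\sum_{\Si+\Xi=\La}(-1)^{|\Si|}\theta^A_\Xi\w d_\Si\si^{\m+\La}_A\om_\m$ encodes. The genuine gap is the middle step, precisely where you flag the ``main obstacle'' and then defer it to a citation. Your justification that ``structurally nothing new occurs'' because $d_H$ acts on $\theta^A_\La$ only through the shift $\theta^A_\La\mapsto dx^\m\w\theta^A_{\m+\La}$ is the difficulty, not its resolution: the homotopy identity $[D^{+\nu},d_\m]\wt\f=\dl^\nu_\m\wt\f$ behind (\ref{0473}) was established on horizontal forms, where $d_H$ acts only on coefficients. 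On a one-contact form $\f=\sum\f^\La_A\w\theta^A_\La$ the differential produces, besides $d_H\f^\La_A\w\theta^A_\La$, the shifted terms $\pm\f^\La_A\w dx^\m\w\theta^A_{\m+\La}$, which mix contact multi-indices; if $D^{+\nu}$ and $P_k$ are applied to the coefficients with the $\theta$'s treated as inert, the commutation relation fails, and no Koszul-sign bookkeeping repairs this. So the transplantation you describe does not go through as stated.

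The paper closes exactly this hole by a different device, which you may want to adopt: replace $\theta^A_\La$ by auxiliary variables $\ol s^A_\La$ of the same Grassmann parity, so that a one-contact $(1,m)$-form corresponds bijectively to a horizontal $m$-form $\ol\f=\sum\f^\La_A\ol s^A_\La$ linear in the new variables, and modify the total differential to $\ol d_H=d_H+dx^\la\w\sum_\La\ol s^A_{\la+\La}\ol\dr^\La_A$, whose extra term reproduces precisely the index shift, so that $\ol{d_H\f}=\ol d_H\ol\f$. Then Lemma \ref{0465} applies verbatim in the enlarged algebra, and the explicit homotopy (\ref{0473}) visibly preserves linearity in $\ol s^A_\La$, so the primitive translates back into $\xi=\sum\xi^\La_A\w\theta^A_\La$ with $\f=d_H\xi$. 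Citing \cite{barn,book09} for ready-made one-contact homotopy operators is not unreasonable, but as a proof it outsources the only nontrivial point; either re-derive the homotopy identity with the shift term included, or use the auxiliary-variable reduction, which costs no extra computation.
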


\begin{proof}
The fact that a $d_H$-closed graded form $\f\in
\cS^{1,m<n}_\infty[F;Y]$ is $d_H$-exact is derived from Lemma
\ref{0465} as follows. We write
\mar{0445}\beq
\f=\sum\f_A^\La\w \theta^A_\La, \label{0445}
\eeq
where $\f_A^\La\in \cS^{0,m}_\infty[F;Y]$ are horizontal graded
$m$-forms. Let us introduce additional variables $\ol s^A_\La$ of
the same Grassmann parity as $s^A_\La$. Then one can associate to
each graded $(1,m)$-form $\f$ (\ref{0445}) a unique horizontal
graded $m$-form
\mar{0446}\beq
\ol\f=\sum\f_A^\La\ol s^A_\La, \label{0446}
\eeq
whose coefficients are linear in variables $\ol s^A_\La$, and {\it
vice versa}.  Let us put a modified total differential
\be
\ol d_H=d_H + dx^\la\w \op\sum_{0<|\La|}\ol
s^A_{\la+\La}\ol\dr_A^\La,
\ee
acting on graded forms (\ref{0446}), where $\ol\dr^\La_A$ is the
dual of $d\ol s^A_\La$. Comparing the equalities
\be
\ol d_H\ol s^A_\La=dx^\la s^A_{\la+\La},\qquad
d_H\theta^A_\la=dx^\la\w\theta^A_{\la+\La},
\ee
one can  easily justify that $\ol{d_H\f}=\ol d_H\ol\f$. Let the
graded $(1,m)$-form $\f$ (\ref{0445}) be $d_H$-closed. Then the
associated horizontal graded $m$-form $\ol \f$ (\ref{0446}) is
$\ol d_H$-closed and, by virtue of Lemma \ref{0465}, it is $\ol
d_H$-exact, i.e., $\ol \f= \ol d_H \ol\xi$, where $\ol\xi$ is a
horizontal graded $(m-1)$-form given by the expression
(\ref{0473}) depending  on additional variables $\ol s^A_\La$. A
glance at this expression shows that, since $\ol\f$ is linear in
variables $\ol s^A_\La$, so is $\ol\xi=\sum\xi_A^\La\ol s^A_\La$.
It follows that $\f=d_H\xi$ where $\xi=\sum\xi_A^\La\w
\theta^A_\La$. It remains to prove the exactness of the complex
(\ref{g112}) at the last term $\vr(\cS^{1,n}_\infty[F;Y])$. If
\be
\vr(\si)=\op\sum_{0\leq|\La|}(-1)^{|\La|}\theta^A\w
[d_\La(\dr_A^\La\rfloor\si)]= \op\sum_{0\leq|\La|}
(-1)^{|\La|}\theta^A\w [d_\La\si_A^\La]\om=0, \qquad  \si\in
\cS^{1,n}_\infty,
\ee
a direct computation gives
\be
\si=d_H\xi,\qquad \xi=-\op\sum_{0\leq|\La|}\op\sum_{\Si+\Xi=\La}
(-1)^{|\Si|}\theta^A_{\Xi}\w d_\Si\si^{\m+\La}_A \om_\m.
\ee
\end{proof}

(II) Let us now prove Theorems \ref{v11} and \ref{v11'} for a DBGA
$\cQ^*_\infty[F;Y]$. Similarly to $\cS^*_\infty[F;Y]$, the sheaf
$\gQ^*_\infty[F;Y]$ and a DBGA $\cQ^*_\infty[F;Y]$ are decomposed
into Grassmann-graded variational bicomplexes. We consider their
subcomplexes
\mar{v35-8}\ben
&& 0\to \Bbb R\to \gQ^0_\infty[F;Y]\ar^{d_H}\gQ^{0,1}_\infty[F;Y]
\cdots \ar^{d_H} \gQ^{0,n}_\infty[F;Y]\ar^\dl
\vr(\gQ^{1,n}_\infty[F;Y]), \label{v35}\\
&& 0\to \gQ^{1,0}_\infty[F;Y]\ar^{d_H} \gQ^{1,1}_\infty[F;Y]\cdots
\ar^{d_H}\gQ^{1,n}_\infty[F;Y]\ar^\vr \vr(\gQ^{1,n}_\infty[F;Y])\to 0, \label{v36}\\
&& 0\to \Bbb R\to \cQ^0_\infty[F;Y]\ar^{d_H}\cQ^{0,1}_\infty[F;Y]
\cdots \ar^{d_H} \cQ^{0,n}_\infty[F;Y]\ar^\dl
\G(\vr(\gQ^{1,n}_\infty[F;Y])),\label{v37} \\
&&  0\to \cQ^{1,0}_\infty[F;Y]\ar^{d_H}
\cQ^{1,1}_\infty[F;Y]\cdots \ar^{d_H}\cQ^{1,n}_\infty[F;Y]\ar^\vr
\G(\vr(\gQ^{1,n}_\infty[F;Y]))\to 0.\label{v38}
\een
By virtue of Lemmas \ref{0465} and \ref{g220}, the complexes
(\ref{v35}) -- (\ref{v36}) are acyclic. The terms
$\gQ^{*,*}_\infty[F;Y]$ of the complexes (\ref{v35}) --
(\ref{v36}) are sheaves of $\cQ^0_\infty[F;Y]$-modules. Since
$J^\infty Y$ admits the partition of unity just by elements of
$\cQ^0_\infty[F;Y]$, these sheaves are fine and, consequently,
acyclic. By virtue of abstract de Rham Theorem \ref{+132},
cohomology of the complex (\ref{v37}) equals the cohomology of
$J^\infty Y$ with coefficients in the constant sheaf $\Bbb R$ and,
consequently, the de Rham cohomology of $Y$ in accordance with the
isomorphisms (\ref{j19'}). Similarly, the complex (\ref{v38}) is
proved to be exact.

Due to monomorphisms $\cO^*_\infty\to \cS^*_\infty[F;Y]\to
\cQ^*_\infty[F;Y]$ this proof gives something more.

\begin{theo} \label{cmp26'} \mar{cmp26'}
Every $d_H$-closed graded form $\f\in\cQ^{0,m<n}_\infty[F;Y]$
falls into the sum
\mar{g214'}\beq
\f=h_0\si + d_H\xi, \qquad \xi\in \cQ^{0,m-1}_\infty[F;Y],
\label{g214'}
\eeq
where $\si$ is a closed $m$-form on $Y$. Any $\dl$-closed $\f\in
\cQ^{0,n}_\infty[F;Y]$ is the sum
\mar{g215'}\beq
\f=h_0\si + d_H\xi, \qquad \xi\in \cQ^{0,n-1}_\infty[F;Y],
\label{g215'}
\eeq
where $\si$ is a closed $n$-form on $Y$.
\end{theo}

(III) It remains to prove that cohomology of the complexes
(\ref{g111}) -- (\ref{g112}) equals that of the complexes
(\ref{v37}) -- (\ref{v38}).

Let the common symbol $D$ stand for $d_H$ and $\dl$. Bearing in
mind the decompositions (\ref{g214'}) -- (\ref{g215'}), it
suffices to show that, if an element $\f\in \cS^*_\infty[F;Y]$ is
$D$-exact in an algebra $\cQ^*_\infty[F;Y]$, then it is so in an
algebra $\cS^*_\infty[F;Y]$.

Lemma \ref{0465} states that, if $Y$ is a contractible bundle and
a $D$-exact graded form $\f$ on $J^\infty Y$ is of finite jet
order $[\f]$ (i.e., $\f\in \cS^*_\infty[F;Y]$), there exists a
graded form $\varphi\in \cS^*_\infty[F;Y]$ on $J^\infty Y$ such
that $\f=D\varphi$. Moreover, a glance at the expressions
(\ref{0473}) and (\ref{0474}) shows that a jet order $[\varphi]$
of $\varphi$ is bounded by an integer $N([\f])$, depending only on
a jet order of $\f$. Let us call this fact the finite exactness of
an operator $D$.  Lemma \ref{0465} shows that the finite exactness
takes place on $J^\infty Y|_U$ over any domain $U\subset Y$. Let
us prove the following.

\begin{lem} \label{24l10} \mar{24l10}
Given a family $\{U_\al\}$ of disjoint open subsets of $Y$, let us
suppose that the finite exactness takes place on $J^\infty
Y|_{U_\al}$ over every subset $U_\al$ from this family. Then, it
is true on $J^\infty Y$ over the union $\op\cup_\al U_\al$ of
these subsets.
\end{lem}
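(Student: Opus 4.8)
The plan is to exploit the one feature of the hypothesis that has not yet been used, namely that the sets $U_\al$ are \emph{disjoint}. This means that $J^\infty Y|_{\op\cup_\al U_\al}$ is the disjoint union of the open submanifolds $J^\infty Y|_{U_\al}$, so that a graded exterior form on the union is exactly an arbitrary collection of graded forms on the pieces, with no compatibility condition on overlaps.

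First I would take a $D$-exact graded form $\f\in\cS^*_\infty[F;Y]$ on $J^\infty Y|_{\op\cup_\al U_\al}$, where $D$ stands for $d_H$ or $\dl$, and record its jet order $[\f]$. Its restriction $\f|_{U_\al}$ again lies in $\cS^*_\infty[F;Y]$ over $J^\infty Y|_{U_\al}$, is $D$-exact there, and has the same jet order $[\f]$. By the assumed finite exactness over each $U_\al$, there is a graded form $\varphi_\al$ on $J^\infty Y|_{U_\al}$ with $\f|_{U_\al}=D\varphi_\al$ and $[\varphi_\al]\leq N([\f])$, the bound $N([\f])$ being the \emph{same} for every $\al$ because by definition it depends only on the jet order of $\f$, not on $\al$ nor on the particular subset $U_\al$.

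Next I would glue. Since the $U_\al$ are pairwise disjoint and each $J^\infty Y|_{U_\al}$ is open in $J^\infty Y|_{\op\cup_\al U_\al}$, the prescription $\varphi|_{U_\al}=\varphi_\al$ unambiguously defines a graded exterior form $\varphi$ on $J^\infty Y|_{\op\cup_\al U_\al}$. As every $\varphi_\al$ has jet order at most $N([\f])$, so does $\varphi$; hence $\varphi\in\cS^*_\infty[F;Y]$ over the union. Finally, $D$ being a local operator, $(D\varphi)|_{U_\al}=D\varphi_\al=\f|_{U_\al}$ for each $\al$, so $D\varphi=\f$ on $J^\infty Y|_{\op\cup_\al U_\al}$, which is the asserted finite exactness over the union, with the same bounding function $N$.

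The single point deserving care — and, to my mind, the actual content of the lemma rather than a formality — is the uniformity of the jet-order bound: the gluing would be useless if the orders $[\varphi_\al]$ were unbounded, since then $\varphi$ would fail to be an element of $\cS^*_\infty[F;Y]$ at all. This is precisely why the notion of finite exactness was formulated with a bound $N([\f])$ depending only on $[\f]$. Given that, the disjointness of the $U_\al$ reduces the gluing itself to a triviality, with no cocycle obstruction and no partition of unity needed.
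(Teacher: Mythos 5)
Your proof is correct and is essentially the paper's own argument, only written out in more detail: the paper likewise glues the local primitives $\varphi_\al$ over the disjoint pieces and relies on the uniform bound $[\varphi_\al]<N([\f])$ to ensure the glued form has finite jet order. You have correctly identified that this uniformity of the bound is the real content of the lemma.
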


\begin{proof}
 Let $\f\in\cS^*_\infty[F;Y]$ be a $D$-exact graded form on
$J^\infty Y$. The finite exactness on $(\pi^\infty_0)^{-1}(\cup
U_\al)$ holds since $\f=D\varphi_\al$ on every
$(\pi^\infty_0)^{-1}(U_\al)$ and $[\varphi_\al]< N([\f])$.
\end{proof}

\begin{lem} \label{24l11} \mar{24l11}
Suppose that the finite exactness of an operator $D$ takes place
on $J^\infty Y$ over open subsets $U$, $V$ of $Y$ and their
non-empty overlap $U\cap V$. Then, it also is true on $J^\infty
Y|_{U\cup V}$.
\end{lem}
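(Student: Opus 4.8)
The plan is to patch the finite-order $D$-antecedents of $\f$ that exist over $U$ and over $V$ by means of a smooth partition of unity on $U\cup V$, the only genuinely new ingredient being a careful bookkeeping of supports and jet orders so that the resulting correction term is again covered by finite exactness, this time over $U\cap V$. I will carry out the argument for $D=d_H$; the case $D=\dl$ reduces to it by passing to one-contact graded forms and introducing the auxiliary variables $\ol s^A_\La$ exactly as in the proof of Lemma \ref{g220}.

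First I would fix a $D$-exact graded form $\f\in\cS^*_\infty[F;Y]$ over $J^\infty Y|_{U\cup V}$ of jet order $[\f]$, restrict it to $J^\infty Y|_U$ and to $J^\infty Y|_V$, and invoke the assumed finite exactness over $U$ and over $V$ to get $\alpha_U\in\cS^*_\infty[F;Y]|_U$ and $\alpha_V\in\cS^*_\infty[F;Y]|_V$ with $d_H\alpha_U=\f$ on $U$, $d_H\alpha_V=\f$ on $V$, and $[\alpha_U],[\alpha_V]\leq N([\f])$. Next I would choose a smooth partition of unity $\{f_U,f_V\}$ on $U\cup V$ subordinate to $\{U,V\}$, i.e. $f_U+f_V=1$, ${\rm supp}\,f_U\subseteq U$, ${\rm supp}\,f_V\subseteq V$, and regard $f_U,f_V$ as zero-order elements of $\cS^0_\infty[F;Y]$ via $\pi^{\infty*}_0$. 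Then $\varphi:=f_U\alpha_U+f_V\alpha_V$, each summand extended by zero across the complement of the support of its cut-off, is a well-defined element of $\cS^*_\infty[F;Y]$ over $U\cup V$ of jet order $\leq N([\f])$, and a direct computation with the graded Leibniz rule for $d_H$, using $d_Hf_U=-d_Hf_V$ and $d_H\alpha_U=d_H\alpha_V=\f$ on the overlap, gives $\f=d_H\varphi-\psi$ with $\psi:=d_Hf_U\w(\alpha_U-\alpha_V)$. Since $d_Hf_U=-d_Hf_V$ is supported in the closed subset ${\rm supp}\,f_U\cap{\rm supp}\,f_V$ of $U\cup V$, which lies in the open set $U\cap V$ on which $\alpha_U-\alpha_V$ is defined, the form $\psi$ is a genuine graded form over $U\cup V$, supported in a closed subset of $U\cap V$, of jet order $\leq N([\f])+1$; moreover $\psi=d_H\varphi-\f$ is $d_H$-exact over $U\cup V$.

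It then remains to produce a finite-order antecedent $\chi$ of $\psi$ over $U\cup V$, for then $\f=d_H(\varphi-\chi)$ is $d_H$-exact in $\cS^*_\infty[F;Y]$ with jet order bounded in terms of $[\f]$. Restricting $\psi$ to $J^\infty Y|_{U\cap V}$, finite exactness over $U\cap V$ gives $\chi_0\in\cS^*_\infty[F;Y]|_{U\cap V}$ with $d_H\chi_0=\psi$ there and $[\chi_0]\leq N(N([\f])+1)$; since $\psi$ is supported in a closed subset $K$ of $U\cup V$ contained in $U\cap V$, I would take a smooth $\theta\colon U\cup V\to[0,1]$ with $\theta\equiv1$ near $K$ and ${\rm supp}\,\theta\subseteq U\cap V$, and set $\chi:=\theta\chi_0$ extended by zero, so that $d_H\chi=\psi+d_H\theta\w\chi_0$ with the discrepancy $d_H\theta\w\chi_0$ again $d_H$-exact over $U\cup V$, supported in a closed subset of $U\cap V$ disjoint from $K$, and of controlled jet order. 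I expect the main obstacle to be precisely the absorption of these cut-off discrepancies in finitely many steps: one must organise the choices of $\theta$ — and if necessary of $f_U$ itself, e.g. so that $d_Hf_U$ is supported only where $f_U$ is neither locally $0$ nor locally $1$, which forces the products appearing in the discrepancies to vanish — so that the procedure terminates with every intermediate jet-order bound depending only on $[\f]$. Granting this, the antecedent of $\f$ constructed over $U\cup V$ has jet order bounded by a number depending only on $[\f]$, which is exactly the finite exactness of $D$ over $J^\infty Y|_{U\cup V}$.
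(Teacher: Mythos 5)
Your overall strategy (patch the two finite-order antecedents over $U$ and $V$ by a partition of unity, with bookkeeping of jet orders) is in the right spirit, but the execution has a genuine gap at exactly the point you flag yourself. Writing $\varphi=f_U\alpha_U+f_V\alpha_V$ produces the error term $\psi=d_Hf_U\w(\alpha_U-\alpha_V)$, and your scheme for removing it — take a finite-order primitive $\chi_0$ of $\psi$ over $U\cap V$, multiply by a cutoff $\theta$ supported in $U\cap V$ — necessarily creates a new error $d_H\theta\w\chi_0$ of the same nature: any $\theta$ that is $1$ near the support of the current discrepancy and compactly supported (relative to $U\cup V$) in $U\cap V$ has $d_H\theta\neq 0$ on a collar inside $U\cap V$, where $\chi_0$ has no reason to vanish. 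Iterating just pushes the bad set around inside $U\cap V$; nothing forces the process to stop after finitely many steps, and the suggested remedy of choosing $f_U$ (or $\theta$) so that the offending products vanish cannot be arranged in general, since $\alpha_U-\alpha_V$ and the successive primitives need not vanish anywhere on the region where the cutoffs must vary. So the proof is incomplete precisely at the step you conceded with ``granting this.''

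The way to close the gap — and it is what the paper does — is to apply the partition of unity not to the antecedents but to a primitive of their difference, so that no error term ever appears. On $(\pi^\infty_0)^{-1}(U\cap V)$ the difference $\varphi_U-\varphi_V$ (your $\alpha_U-\alpha_V$) is $D$-closed of bounded jet order, and by the assumed finite exactness over $U\cap V$ one writes $\varphi_U-\varphi_V=D\si$ with $[\si]$ bounded in terms of $[\f]$. Lemma \ref{am20} then splits $\si=\si_U+\si_V$ into pieces of bounded jet order defined over $(\pi^\infty_0)^{-1}(U)$ and $(\pi^\infty_0)^{-1}(V)$ respectively (this is where the cutoff functions enter, harmlessly, because only $\si$ itself is cut, never differentiated against a form one cannot control). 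Replacing $\varphi_U$ by $\varphi_U-D\si_U$ and $\varphi_V$ by $\varphi_V+D\si_V$ changes neither $D\varphi_U$ nor $D\varphi_V$, but makes the two corrected antecedents coincide on the overlap, so they glue to a single $\varphi'$ over $U\cup V$ with $\f=D\varphi'$ and $[\varphi']<N(N([\f]))$. This Mayer--Vietoris-type correction by exact forms is the missing idea; with it, there are no discrepancies to absorb and the jet-order bound is immediate.
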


\begin{proof} Let $\f=D\varphi\in\cS^*_\infty[F;Y]$ be a
$D$-exact form on $J^\infty Y$. By assumption, it can be brought
into the form $D\varphi_U$ on $(\pi^\infty_0)^{-1}(U)$ and
$D\varphi_V$ on $(\pi^\infty_0)^{-1}(V)$, where $\varphi_U$ and
$\varphi_V$ are graded forms of bounded jet order. Let us consider
their difference $\varphi_U-\varphi_V$ on
$(\pi^\infty_0)^{-1}(U\cap V)$. It is a $D$-exact graded form of
bounded jet order $[\varphi_U-\varphi_V]< N([\f])$ which, by
assumption, can be written as $\varphi_U-\varphi_V=D\si$ where
$\si$ also is of bounded jet order $[\si]<N(N([\f]))$. Lemma
\ref{am20} below shows that $\si=\si_U +\si_V$ where $\si_U$ and
$\si_V$ are graded forms of bounded jet order on
$(\pi^\infty_0)^{-1}(U)$ and $(\pi^\infty_0)^{-1}(V)$,
respectively. Then, putting
\be
\varphi'|_U=\varphi_U-D\si_U, \qquad \varphi'|_V=\varphi_V+
D\si_V,
\ee
we have a graded form $\f$, equal to $D\varphi'_U$ on
$(\pi^\infty_0)^{-1}(U)$ and $D\varphi'_V$ on
$(\pi^\infty_0)^{-1}(V)$, respectively. Since the difference
$\varphi'_U -\varphi'_V$ on $(\pi^\infty_0)^{-1}(U\cap V)$
vanishes, we obtain $\f=D\varphi'$ on $(\pi^\infty_0)^{-1}(U\cup
V)$ where
\be
\varphi'=\left\{
\begin{array}{ll}
\varphi'|_U=\varphi'_U, &\\
\varphi'|_V=\varphi'_V &
\end{array}\right.
\ee
is of bounded jet order $[\varphi']<N(N([\f]))$.
\end{proof}

\begin{lem} \label{am20} \mar{am20}
Let $U$ and $V$ be open subsets of a bundle $Y$ and $\si\in
\gO^*_\infty$ a graded form of bounded jet order on
$(\pi^\infty_0)^{-1}(U\cap V)\subset J^\infty Y$. Then, $\si$ is
decomposed into  a sum $\si_U+ \si_V$ of graded forms $\si_U$ and
$\si_V$ of bounded jet order on $(\pi^\infty_0)^{-1}(U)$ and
$(\pi^\infty_0)^{-1}(V)$, respectively.
\end{lem}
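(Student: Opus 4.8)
The plan is to transplant a smooth partition of unity on $U\cup V$ to $J^\infty Y$ via the jet-zero projection $\pi^\infty_0$. Since $U\cup V$ is an open submanifold of $Y$, it is paracompact and admits a smooth partition of unity $\{f_U,f_V\}$ subordinate to its open cover $\{U,V\}$: thus $f_U+f_V=1$ on $U\cup V$, while ${\rm supp}\,f_U\subseteq U$ and ${\rm supp}\,f_V\subseteq V$, the supports being taken in $U\cup V$. Pulled back, $\pi^{\infty*}_0 f_U$ and $\pi^{\infty*}_0 f_V$ are graded functions of jet order $0$ on $(\pi^\infty_0)^{-1}(U\cup V)$.

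First I would set $\si_U:=(\pi^{\infty*}_0 f_V)\,\si$ and $\si_V:=(\pi^{\infty*}_0 f_U)\,\si$ on $(\pi^\infty_0)^{-1}(U\cap V)$, and then show that each extends to the larger domain by zero. For $\si_U$: the set $W:=(U\cup V)\setminus{\rm supp}\,f_V$ is open in $Y$ and, since ${\rm supp}\,f_V\subseteq V$, it contains $U\setminus V$; hence $\{U\cap V,\ W\cap U\}$ is an open cover of $U$. On the overlap $(\pi^\infty_0)^{-1}(U\cap V\cap W)$ the form $(\pi^{\infty*}_0 f_V)\,\si$ vanishes because $f_V|_W=0$, so declaring $\si_U$ to be $(\pi^{\infty*}_0 f_V)\,\si$ over $U\cap V$ and $0$ over $W\cap U$ defines a smooth graded form on all of $(\pi^\infty_0)^{-1}(U)$. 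The same construction with the roles of $U$ and $V$ interchanged produces $\si_V$ on $(\pi^\infty_0)^{-1}(V)$.

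It then remains to verify the two assertions of the lemma. On $(\pi^\infty_0)^{-1}(U\cap V)$ we have $\si_U+\si_V=(\pi^{\infty*}_0(f_U+f_V))\,\si=\si$, which is the required decomposition. Moreover $\pi^{\infty*}_0 f_U$ and $\pi^{\infty*}_0 f_V$ are of jet order $0$, so multiplying $\si$ by them does not increase the jet order; thus $[\si_U]\le[\si]$ and $[\si_V]\le[\si]$, so $\si_U$ and $\si_V$ are of bounded jet order. The one step that genuinely needs care — everything else being routine bookkeeping — is the well-definedness of the extension by zero, i.e. checking that $(\pi^{\infty*}_0 f_V)\,\si$, a priori defined only over $U\cap V$, actually vanishes on an open neighbourhood in $J^\infty Y$ of the complementary region $(\pi^\infty_0)^{-1}(U\setminus V)$ that has to be filled in; this is exactly what the support condition ${\rm supp}\,f_V\subseteq V$ guarantees.
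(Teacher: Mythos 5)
Your proof is correct and follows essentially the same route as the paper: a smooth partition of unity $\{f_U,f_V\}$ on $U\cup V$ subordinate to $\{U,V\}$, pulled back by $\pi^\infty_0$, with $\si_U=(\pi^{\infty*}_0f_V)\si$ extended by zero across $U\setminus V$ (the paper phrases this with a single function $f$ supported in $V$ and its complement $1-f$, which is the same thing). Your explicit verification of the well-definedness of the extension by zero via the support condition is the one point the paper treats somewhat tersely, and you handle it correctly.
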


\begin{proof}
By taking a smooth partition of unity on $U\cup V$ subordinate to
a cover $\{U,V\}$ and passing to a function with support in $V$,
one gets a smooth real function $f$ on $U\cup V$ which equals 0 on
a neighborhood of $U\setminus V$ and 1 on a neighborhood of
$V\setminus U$ in $U\cup V$. Let $(\pi^\infty_0)^*f$ be the
pull-back of $f$ onto $(\pi^\infty_0)^{-1}(U\cup V)$. A graded
form $((\pi^\infty_0)^*f)\si$ equals 0 on a neighborhood of
$(\pi^\infty_0)^{-1}(U)$ and, therefore, can be extended by 0 to
$(\pi^\infty_0)^{-1}(U)$. Let us denote it $\si_U$. Accordingly, a
graded form $(1-(\pi^\infty_0)^*f)\si$ has an extension $\si_V$ by
0 to $(\pi^\infty_0)^{-1}(V)$. Then, $\si=\si_U +\si_V$ is a
desired decomposition because $\si_U$ and $\si_V$ are of the jet
order which does not exceed that of $\si$.
\end{proof}

To prove the finite exactness of $D$ on $J^\infty Y$, it remains
to choose an appropriate cover of $Y$. A smooth manifold $Y$
admits a countable cover $\{U_\xi\}$ by domains $U_\xi$, $\xi\in
{\bf N}$, and its refinement $\{U_{ij}\}$, where $j\in {\bf N}$
and $i$ runs through a finite set, such that $U_{ij}\cap
U_{ik}=\emptyset$, $j\neq k$ \cite{gre}. Then $Y$ has a finite
cover $\{U_i=\cup_j U_{ij}\}$. Since the finite exactness of an
operator $D$ takes place over any domain $U_\xi$, it also holds
over any member $U_{ij}$ of the refinement $\{U_{ij}\}$ of
$\{U_\xi\}$ and, in accordance with Lemma \ref{24l10}, over any
member of a finite cover $\{U_i\}$ of $Y$. Then by virtue of Lemma
\ref{24l11}, the finite exactness of $D$ takes place on $J^\infty
Y$ over $Y$.

Similarly, one can show that, restricted to
$\cS^{k,n}_\infty[F;Y]$, the operator $\vr$ remains exact.


\begin{thebibliography}{ederf}



\bibitem{and} Anderson I., Duchamp T., On the existence of
global variational principles, \emph{Amer. J. Math.} \textbf{102}
(1980) 781.

\bibitem{ander} Anderson I., Introduction to the variational
bicomplex, \emph{Contemp. Math.} \textbf{132} (1992) 51-73.

\bibitem{and93} Anderson I.,  Kamran N., Olver P., Internal,
external and generalized symmetries, \emph{Adv. Math.}
\textbf{100} (1993) 53-100.

\bibitem{barn} Barnich G., Brandt F., Henneaux M.,  Local
BRST cohomology in gauge theories, \emph{Phys. Rep.} \textbf{338}
(2000) 439-569.

\bibitem{bart} Bartocci C., Bruzzo U., Hern\'andez
Ruip\'erez D., \emph{The Geometry of Supermanifolds}, Kluwer,
Dordrecht, 1991.

\bibitem{jmp05} Bashkirov D., Giachetta G., Mangiarotti L.,
Sardanashvily G., Noether's second theorem for BRST symmetries,
\emph{J. Math. Phys.}  \textbf{46} (2005) 053517, arXiv:
math-ph/0412034.

\bibitem{jmp05a} Bashkirov D., Giachetta G., Mangiarotti L.,
Sardanashvily G., The antifield Koszul--Tate complex of reducible
Noether identities, \emph{J. Math. Phys.} \textbf{46} (2005)
103513; arXiv: math-ph/0506034.

\bibitem{lmp08} Bashkirov D., Giachetta G., Mangiarotti L.,
Sardanashvily G., The KT-BRST complex of degenerate Lagrangian
systems, \emph{Lett. Math. Phys.} \textbf{83} (2008) 237-252;
arXiv: math-ph/0702097.

\bibitem{bau} Bauderon M., Differential geometry and Lagrangian formalism in
the calculus of variations, in Differential Geometry, Calculus of
Variations, and their Applications, \emph{Lecture Notes in Pure
and Applied Mathematics} Vol. 100, Dekker, New York, 1985, 67.

\bibitem{bran01} Brandt F., Jet coordinates for local BRST
cohomology, \emph{Lett. Math. Phys.} \textbf{55} (2001) 149-159.

\bibitem{bred} Bredon G., \emph{Sheaf Theory},
McGraw-Hill Book Company, New York, 1967.

\bibitem{bry} Bryant R., Chern S., Gardner R., Goldschmidt H., Griffiths P.,
\emph{Exterior Differential Systems}, Springer-Verlag, Berlin,
1991.

\bibitem{fisch} Fisch J., Henneaux M.,  Homological
perturbation theory and algebraic structure of the
antifield-antibracket formalism for gauge theories, \emph{Commun.
Math. Phys.} \textbf{128} (1990) 627.

\bibitem{fuks} Fuks D., \emph{Cohomology of Infinite-Dimensional Lie
Algebras}, Consultants Bureau, New Ypork, 1986.

\bibitem{lmp00} Giachetta G., Mangiarotti L., Sardanashvily G.,
 Iterated BRST cohomology, \emph{Lett. Math. Phys.}
\textbf{53} (2000) 143-156.

\bibitem{jmp} Giachetta G., Mangiarotti L.,
Sardanashvily G.,  Cohomology of the infinite-order jet space and
the inverse problem. \emph{J. Math. Phys.} \textbf{42} (2001)
4272-4282.

\bibitem{cmp04} Giachetta G., Mangiarotti L., Sardanashvily G.,
Lagrangian supersymmetries depending on derivatives. Global
analysis and cohomology, \emph{Commun. Math. Phys.} \textbf{259}
(2005) 103-128, arXiv: hep-th/0407185.

\bibitem{jmp09} Giachetta G., Mangiarotti L., Sardanashvily G.,
On the notion of gauge symmetries of generic Lagrangian field
theory, \emph{J. Math. Phys.} \textbf{50} (2009) 012903, arXiv:
0807.3003.

\bibitem{book09} Giachetta G., Mangiarotti L., Sardanashvily G.,
\emph{Advanced Classical Field Theory}, World Scientific,
Singapore, 2009.

\bibitem{book10} Giachetta G., Mangiarotti L., Sardanashvily G.,
\emph{Geometric Formulation of Classical and Quantum Mechanics},
World Scientific, Singapore, 2010.

\bibitem{gom} Gomis J., Par\'\i s J., Samuel S.,
Antibracket, antifields and gauge theory quantization, \emph{Phys.
Rep.} \textbf{295} (1995) 1.

\bibitem{gre} Greub W., Halperin S., Vanstone R., \emph{Connections,
Curvature, and Cohomology}, Vol. 1, Academic Press, New York,
1972.

\bibitem{hern} Hern\'andez Ruip\'erez D., Mu\~noz
Masqu\'e J, Global variational calculus on graded manifolds,
\emph{J. Math. Pures Appl.} \textbf{63} (1984) 283-309.

\bibitem{hir} Hirzebruch F., \emph{Topological Methods in
Algebraic Geometry}, Springer-Verlag, Berlin, 1966.

\bibitem{jad} Jadczyk A., Pilch K., Superspaces and supersymmetries,
\emph{Commun. Math. Phys.} \textbf{\bf 78} (1981) 391.

\bibitem{kras} Krasil'shchik I., Lychagin V., Vinogradov A., \emph{Geometry of
Jet Spaces and Nonlinear Partial Differential Equations}, Gordon
and Breach, Glasgow, 1985.

\bibitem{kru90} Krupka D., Variational sequences on finite order jet
spaces, in \emph{Differential Geometry and its Applications},
World Scientific, Singapore, 1990, 236.

\bibitem{massey} Massey W., \emph{Homology and Cohomology Theory}, Marcel Dekker,
Inc., New York, 1978.

\bibitem{mont06} Monterde J., Mu\~noz Masqu\'e J., Vallejo J., The
Poincar\'e--Cartan form in superfield theory, \emph{Int. J. Geom.
Methods Mod. Phys.} \textbf{3} (2006) 775-822.


\bibitem{olv} Olver P., \emph{Applications of Lie Groups to
Differential Equations}, Springer-Verlag, Berlin, 1986.

\bibitem{ren} Rennie A., Smoothness and locality for nonunital
spectral triples, \emph{K-Theory} \textbf{28} (2003) 127-165.

\bibitem{ijmms} Sardanashvily G., Cohomology of the variational
complex in the class of exterior forms of finite jet order,
\emph{Int. J. Math. and Math. Sci.} \textbf{30} (2002) 39-48.

\bibitem{oper} Sardanashvily G., Noether identities of a differential operator.
The Koszul--Tate complex, \emph{Int. J. Geom. Methods Mod. Phys.}
\textbf{2} (2005) 873-886, arXiv: math.DG/0506103.

\bibitem{ijgmmp07} Sardanashvily G., Graded infinite order jet manifolds,
\emph{Int. J. Geom. Methods Mod. Phys.} \textbf{4} (2007)
1335-1362, arXiv: 0708.2434.

\bibitem{sard08} Sardanashvily G., Classical field theory.
Advanced mathematical formulation, \emph{Int. J. Geom. Methods
Mod. Phys.} \textbf{5} (2008) 1163-1189; arXiv: 0811.0331.

\bibitem{tul} Tulczyiew W., The Euler--Lagrange resolution, in
Differential Geometric Methods in Mathematical Physics (Proc.
Conf., Aix-en-Provence/Salamanca, 1979), \emph{Lecture Notes
Math}. Vol. 836, Springer-Verlag, Berlin, 1980, 22-48.

\bibitem{tak2} Takens F., A global version of the inverse
problem of the calculus of variations, \emph{J. Diff. Geom.}
\textbf{14} (1979) 543-562.

\bibitem{vitolo07} Vitolo R., Variational sequences, in
\emph{Handbook of Global Analysis}, Elsevier, Amsterdam, 2007,
1115.


\end{thebibliography}
\end{document}